\newcommand{\Tr}{\mathop{\bf Tr}}
\newcommand{\ca}[1]{\mathcal{#1}}
\newcommand{\bb}[1]{\mathbb{#1}}
\newcommand{\rank}{\text{\bf rank}}
\newcommand{\vect}{\mathop{\bf {vec}}}
\newcommand{\dom}{\mathop{\bf dom}} 
\newcommand{\leqnomode}{\tagsleft@true}
\newcommand{\reqnomode}{\tagsleft@false}
\newtheorem*{remark}{Remark}
\newtheorem{theorem}{Theorem}[section]
\newtheorem{corollary}{Corollary}[theorem]
\newtheorem{lemma}[theorem]{Lemma}
\newtheorem{proposition}[theorem]{Proposition}
\tikzset{%
    add/.style args={#1 and #2}{
        to path={%
 ($(\tikztostart)!-#1!(\tikztotarget)$)--($(\tikztotarget)!-#2!(\tikztostart)$)%
  \tikztonodes},add/.default={.2 and .2}}
}  
\begin{document}
\title{\bf Policy Gradient-based Algorithms for Continuous-time Linear Quadratic Control}
\date{June 12, 2020}

\author{Jingjing Bu, Afshin Mesbahi, and Mehran Mesbahi\thanks{The authors are with the University of Washington; Emails: {\em bu+amesbahi+mesbahi@uw.edu}}}

\maketitle

\begin{abstract}
  We consider the continuous-time Linear-Quadratic-Regulator (LQR) problem in terms of optimizing a real-valued matrix function over the set of feedback gains. The results developed are in parallel to those in Bu {\em et al.}~\cite{bu2019lqr} for discrete-time LTI systems. In this direction, we characterize several analytical properties (smoothness, coerciveness, quadratic growth) that are crucial in the analysis of gradient-based algorithms. We also point out similarities and distinctive features of the
  continuous time setup in comparison with its discrete time analog.
 First, we examine three types of well-posed flows direct policy update for LQR: gradient flow, natural gradient flow and the quasi-Newton flow. The coercive property of the corresponding cost function suggests that these flows admit unique solutions while the gradient dominated property indicates that the underling Lyapunov functionals decay at an exponential rate; quadratic growth on the other hand guarantees that the trajectories of these flows are exponentially stable in the sense of Lyapunov. We then discuss the forward Euler discretization of these flows, realized as gradient descent, natural gradient descent and quasi-Newton iteration. We present stepsize criteria for gradient descent and natural gradient descent, guaranteeing that both algorithms converge linearly to the global optima. An optimal stepsize for the quasi-Newton iteration is also proposed, guaranteeing a $Q$-quadratic convergence rate--and in the meantime--recovering the Kleinman-Newton iteration.
  Lastly, we examine LQR state feedback synthesis with a sparsity pattern. In this case, we develop the necessary formalism and insights for projected gradient descent, allowing us to guarantee a sublinear rate of convergence to a first-order stationary point.
\end{abstract}

\section{Introduction}
\label{sec:intro}
Linear-Quadratic-Regulator (LQR) has historically been one of the pillars of control theory.
It is formulated around an optimization problem to determine control inputs to a linear dynamical system in order to minimize a given (integral) quadratic cost function over an infinite horizon.
From practical point of view, a fundamental property of LQR controller design is that the resulting optimal input is in the form of state feedback; as such the optimal control input can be represented as a constant feedback gain that acts on the state of the system~\cite{Kalman_BSMM_1960,Anderson_book_1990}.
The state feedback gain that ``solves'' the the infinite-horizon LQR problem, in turn, can be obtained from solving the algebraic Riccati equation (ARE). That is, in the traditional approach to LQR synthesis, the state feedback gain is only obtained after obtaining the ``certificate'' or the cost-to-go for the underlying optimal control problem.
There is a large number of works on the solution of the ARE including those based on iterative algorithms~\cite{Hewer1971TAC}, algebraic solution methods~\cite{Lancaster1995algebraic}, and semidefinite programming~\cite{Balakrishnan2003TAC}.

Although the cost function plays an important role in the LQR problem, it is generally difficult  to {\em directly} compute the optimal policy without going through the Riccati equation. This approach in the meantime, is in sharp contrast to how one would typically go about minimizing a cost function over the variable of interest, say, through a gradient descent. This is essentially due to the dynamic nature of the control problem where an immediately optimal action may not be optimal over the (infinite) time horizon. Recently, there has been a surge of interest in constructing optimal control strategies directly.
These studies have been partially inspired by the application of
learning algorithms, such as Reinforcement learning (RL), where 
using the principles of Dynamic Programming (DP), one can devise 
real-time model-free methods for optimal control problems for both continuous-time and discrete-time systems~\cite{Jiang2012Auto,Young2012Auto,Lee2019TAC,Bradtke1994ACC,Lewis2009CSM,lewis2012reinforcement,Chun2016IJC}.
For example, it has been shown that the solution of Q-learning policy iteration algorithm for discrete-time systems converges to the optimal solution of the LQR problem~\cite{Bradtke1994ACC}. \par
Meanwhile with the emergence of complex distributed systems, it is becoming increasingly important to design a feedback controllers that abide by certain sparsity patterns mirroring the topology of the system. In this setting, each node in the ``network'' forms its control action by only employing local information from its neighbors.
This structured synthesis problem has gained a lot of attention in the  systems and control community. However, there are a number of mathematical complications in formulating such problems. For example, even the existence of a structured stabilizing gain matrix is nontrivial to assert. This general structured synthesis has a long history in the control literature; we shall only list some of the relevant literature to the present work.\footnote{We are concerned with directly updating the structured control polices.} For example, in~\cite{wenk1980parameter}, a combined primal-dual with penalty function method has been employed to obtain a feedback controller with arbitrary constrained zero pattern. The work~\cite{jilg2013optimized} proposes a relaxed mixed-integer-semdefinite-programming via which the graph topology and sparsity pattern are enforced. In \cite{maartensson2009gradient}, the authors use a projected gradient descent method to solve the structured synthesis problem by projecting the solution onto the graph structure. \par
In this paper, similar to \cite{maartensson2009gradient}, we consider the problem of devising first order algorithms for obtaining the optimal LQR state feedback gain for continuous LTI systems with guaranteed convergence properties.
In this direction, we first adjust the LQR problem formulation in order
to make it independent of the system initial conditions.
In order to eliminate this dependence, we adopt a cost function that sums the traditional LQR cost over a set of linearly independent initial conditions. The new cost function can be viewed as a well-defined matrix function over stabilizing feedback gains. We argue that this formulation (see \S~\ref{sec:cost_function} for details) is necessary for the adoption of direct learning algorithms for LQR-type problems. More importantly, in this setting, we prove that the cost function is smooth, coercive and gradient dominant.
We then proceed to show that the LQR function over the set of stabilizing state feedback gains does attain a minimum as its sublevel sets are compact. 
Then, using the topological and metrical properties of the set of
static stabilizing feedback gains, one can conclude that the optimization problem does attain its global minimum. 

The problem of solving the \emph{discrete-time} LQR using direct policy gradient has recently been addressed in \cite{fazel2018global}, where it is shown that direct policy gradient in fact converges to the optimal feedback gain.\footnote{This setup was also considered in~\cite{mart2012phd}, without convergence analysis.}
In~\cite{fazel2018global}, the gradient dominance property, introduced in~\cite{polyak1963gradient}, is used to guarantee the global convergence of gradient descent method.
The present work focuses on LQR \emph{continuous-time LTI systems}. We first show the cost function for this continuous-time LQR is also \emph{gradient dominant}. We then discuss the autonomous gradient flow over the set of Hurwitz stabilizing controllers. For this gradient flow, we show that there exists a unique trajectory for all time $t$ and every initial condition. We then show that the trajectory is exponentially stable in the Lyapunov sense. We next move on to the discretization of this gradient flow. 
In this direction, the required stepsize for the discretized gradient descent algorithm can be adapted using the coerciveness of the cost function; the stepsize selection
 requires particular attention as one needs to ensure that
 the updated feedback gain remains stabilizing.
As such, both the function value and feedback gains converge linearly to the corresponding global minimums.
For this purpose, we first introduce an expression for the Hessian of the LQR cost function.
Then, an upper bound on this Hessian over the sublevel set determined by the initial condition is computed, which is sufficient to determine an upper bound for the stepsize. By using a stepsize equal to the inverse of the computed bound, we can then conclude the linear rate for the global convergence of gradient descent.
Analogous results natural gradient flow and quasi-Newton flow, as well as their discretized versions
are also provided for continuous LQR.

We then extend the proposed approach to the problem of designing feedback gains with an arbitrary zero pattern. Our setup is inspired by the scheme adopted in~\cite{mart2012phd} to solve \emph{discrete-time} LQR. We propose a formalism to set up the problem appropriately in the context of first
order direct policy updates- as such, projected gradient descent has a simple realization. 
The new optimization problem over the set of structured stabilizing controllers does not necessary 
possess the ``gradient dominance'' property. However, we are able to employ the machinery developed for unstructured LQR for this extension. In particular, 
we first describe the initial-condition independent formulation as necessary for the learning setting; we then show the cost function can be equivalently defined as the unstructured LQR cost function restricted to the linear space defined by the interaction graph $\ca G$; as such, the cost function is smooth in the subspace topology and has a coercive property; the gradient and Hessian can be thus clarified and a natural choice of stepsize can be acquired by bounding the Hessian over the initial sublevel set.\par
We remark that the proofs here are parallel to those presented in~\cite{bu2019lqr}. However, since we are dealing with continuous-time systems, the details are inevitably distinct. For completeness of our presentation, we attempt to provide complete proofs as much as possible- occasionally,
we point out that the proofs closely parallel the discrete-time analogue and
make the appropriate reference.
 \par
The remainder of this paper is structured as follows. The continuous LQR problem statement and the related definitions are provided in~\S~\ref{sec:definition}. The cost function over a set of linearly independent initial conditions is also defined in~\S~\ref{sec:definition}. In \S~\ref{sec:function_properties}, we show the initial condition independent formulation of the LQR cost; the corresponding function is smooth, coercive and has compact sublevel sets. In particular, we prove that the function is gradient dominant. \S~\ref{sec:gf} introduces the autonomous gradient system over the set of stabilizing controllers. We show the global existence and uniqueness of a solution trajectory. Moreover, we prove that the trajectory is exponentially stable in the sense of Lyapunov. In~\S~\ref{sec:discretization_gradient_flow}, we discuss means of discretizing the gradient flow. We show that the Forward Euler method with a suitable choice of stepsize would guarantee the linear convergence of both the Lyapunov functional and the underlying iterates. We also discuss an application to solve structured design by \emph{Projected Gradient Descent}. In \S~\ref{sec:numerical_results}, we present several simulation results to illustrate the proposed results, and finally we conclude the paper in~\S~\ref{sec:discussion}.
\section{Notation and Preliminaries}\label{sec:notations}
We denote by ${\bb M}_{n \times m}(\bb R)$ the set of $n \times m$ real matrices and ${\bb {GL}}_n(\bb R)$ as the set of invertible square matrices; $\bb R^n$ denotes the $n$-dimensional real Euclidean space with $n=1$ identified with real number.
$\bb N$ denotes the set of natural numbers.
$\bb S_n$ denotes the set of $n \times n$ real symmetric matrices.
$A^\top$, $\rho(A)$, $\rank(A)$, $\Tr (A)$, $\vect(A)$, $A \otimes B$, and $\text{rbd } \ca K$ represent the transpose of $A$, the spectral radius of $A$, the rank of $A$, and the trace of $A$, the vectorization of $A$, Kronecker product $A$ and $B$, the relative boundary of the set $\ca K$, respectively.
The real inner product between a pair of vectors $x$ and $y$ is denoted by $\langle x,y \rangle$. $\|A\|_2$ denotes the spectral (operator) norm of a square matrix and $\|A\|_F$ denotes the Frobenius norm.
Lastly, the notation $A \succeq B$ for two symmetric matrices refers to the positive semi-definiteness of the matrix difference $A-B$; analogously for positive definiteness and the notation ``$\succ$''.
\par
We use $C^{\omega}(U)$ to denote the set of real analytic functions over an open set $U \subseteq \bb R^n$. A function $f: U \to \bb R$ is $C^{\infty}$-smooth if it is infinitely differentiable. A function $f$ is $L$-smooth when $f$ is \emph{continuously differentiable} and the gradient is $L$-Lipschitz, i.e., $\|\nabla f(x)-\nabla f(y)\| \le L \|x-y\|$.\par
A pair $(A, B)$ with $A \in {\bb M}_{n \times n}(\bb R)$ and $B \in {\bb M}_{n \times m}( \bb R)$ is controllable if it satisfies Kalman rank condition~\cite{sontag2013mathematical}
\begin{align*}
  \rank([B, AB, A^2B, \dots, A^{n-1}B]) = n.
\end{align*}
We will frequently use several linear algebra facts on matrix equations; some of these facts are collected in the following proposition.
\begin{proposition}
  \label{prop:linalg_facts}
  The following relations hold:
  \begin{enumerate}
    \item $\vect(ABC) = (C^{\top} \otimes A) \vect(B)$.
    \item When $X \succ 0$,
\begin{align}
  \label{eq:psd_ineq1}
 M^{\top} X N + N^{\top} X M \succeq - ( a M^{\top} X M + \frac{1}{a}N^{\top} X N),\\
  \label{eq:psd_ineq2}
 M^{\top} X N + N^{\top} X M \preceq a M^{\top} X M +  \frac{1}{a} N^{\top} X N,
\end{align}
where $M, N \in {\bb M}_{n \times m}(\bb R)$ with $m \le n$ and $a \in \bb R_+$.
\item Suppose $A \in {\bb M}_{n \times n}(\bb R)$ is \emph{Hurwitz stable}, i.e., $\max_i \text{Re}(\lambda_i(A)) < 0$. Then
\begin{align*}
  A^{\top} X + XA + Q =0
\end{align*}
has a unique solution. If $Q \succ 0$, then $X \succ 0$. Moreover, if $Y$ satisfies
\begin{align*}
  A^{\top} Y + Y A + O = 0,
\end{align*}
with $O \preceq Q$, then $Y \preceq X$.
  \end{enumerate}
\end{proposition}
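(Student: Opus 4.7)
All three items are standard linear-algebra facts whose proofs fit neatly into a unified framework built on the Kronecker product and the integral representation of Lyapunov solutions. My plan is to address (1) by direct verification on rank-one matrices, (2) by a single ``complete the square'' trick, and (3) by constructing an explicit integral solution and exploiting its manifest positivity.

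\textbf{Item (1).} Both sides are bilinear in $A,C$ for fixed $B$ and linear in $B$, so it suffices to check the identity on the basis $B = e_i e_j^\top$ of $\mathbb{M}_{n\times m}(\mathbb{R})$. Then $ABC = (Ae_i)(e_j^\top C)$, whose vectorization is the column stacking $(C^\top e_j)\otimes(Ae_i) = (C^\top\otimes A)(e_j\otimes e_i) = (C^\top\otimes A)\vect(e_ie_j^\top)$. This matches the right-hand side, and linearity extends the identity to all $B$.

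\textbf{Item (2).} For $a>0$ and $X\succ 0$, consider the manifestly positive semidefinite expressions
\begin{align*}
  \Bigl(\sqrt{a}\,M \mp \tfrac{1}{\sqrt{a}}\,N\Bigr)^{\!\top} X \Bigl(\sqrt{a}\,M \mp \tfrac{1}{\sqrt{a}}\,N\Bigr)\;\succeq\;0.
\end{align*}
Expanding gives $aM^\top X M + \tfrac{1}{a}N^\top X N \mp (M^\top X N + N^\top X M) \succeq 0$. Choosing the minus sign yields the upper bound~\eqref{eq:psd_ineq2}, and the plus sign yields the lower bound~\eqref{eq:psd_ineq1}. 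This is the only non-routine manipulation, and it is clean.

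\textbf{Item (3).} Because $A$ is Hurwitz, $\|e^{At}\|$ decays exponentially, so the integral
\begin{align*}
  X \;=\; \int_0^\infty e^{A^\top t} Q\, e^{At}\,\mathrm{d}t
\end{align*}
converges absolutely. Differentiating the integrand $e^{A^\top t} Q e^{At}$ with respect to $t$ equals $A^\top e^{A^\top t}Q e^{At} + e^{A^\top t}Q e^{At} A$; integrating from $0$ to $\infty$ and using $e^{At}\to 0$ gives $A^\top X + XA = -Q$, so $X$ solves the equation. Uniqueness follows either from the Kronecker-product form $(I\otimes A^\top + A^\top\otimes I)\vect(X) = -\vect(Q)$ together with the spectral-mapping fact that this operator's eigenvalues are $\lambda_i(A)+\lambda_j(A)$, all of which have negative real part, or by applying the same argument to the homogeneous equation $A^\top X + XA = 0$ whose only solution via the integral formula is $X=0$. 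Positivity is transparent from the integral formula: for any $v\neq 0$,
\begin{align*}
  v^\top X v \;=\; \int_0^\infty (e^{At}v)^\top Q\,(e^{At}v)\,\mathrm{d}t \;>\;0
\end{align*}
since $Q\succ 0$ and $e^{At}v\neq 0$ for all $t\ge 0$. Finally, for the monotonicity claim, the difference $Z := X-Y$ satisfies $A^\top Z + ZA = -(Q-O)$ with $Q-O\succeq 0$, and the same integral representation (now with a positive semidefinite integrand) shows $Z\succeq 0$.

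\textbf{Main obstacle.} There is no genuine obstacle; the one place that deserves care is confirming that the integral formula genuinely delivers the \emph{unique} solution, which is why I would either invoke the Kronecker spectrum argument or alternatively argue that any two solutions differ by a solution of the homogeneous equation, which must vanish by the integral identity applied with $Q=0$.
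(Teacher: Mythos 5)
Your proof is correct. The paper itself does not prove this proposition---it simply cites Horn and Johnson---so there is no in-paper argument to compare against; your three arguments (basis check on $B=e_ie_j^{\top}$ for the vectorization identity, the complete-the-square expansion of $(\sqrt{a}M\mp\frac{1}{\sqrt{a}}N)^{\top}X(\sqrt{a}M\mp\frac{1}{\sqrt{a}}N)\succeq 0$, and the integral representation $X=\int_0^{\infty}e^{A^{\top}t}Qe^{At}\,dt$ with the Kronecker spectrum $\{\lambda_i(A)+\lambda_j(A)\}$ for uniqueness) are exactly the standard ones that reference supplies. The only slightly loose phrase is the alternative uniqueness route ``the homogeneous equation \dots must vanish by the integral identity applied with $Q=0$''; to make that self-contained you would note that $A^{\top}X+XA=0$ implies $e^{A^{\top}t}Xe^{At}$ is constant in $t$ yet tends to $0$, but your Kronecker argument already settles uniqueness rigorously, so nothing is missing.
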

The proofs of these observations can be found in~\cite{horn2012matrix}.
\section{Problem Setup and its Analytical Properties}
\label{sec:definition}
In this section, we provide an overview of continuous-time LQR and in particular, its modified initial-condition independent version, as well as analytic observations that we believe are of independent interest. Although the reader might know of the extensive LQR literature, we note that some of these observations have only become necessary when the LQR problem is viewed {\em directly} in terms of optimizing an integral cost function over the set of stabilizing feedback gains. Throughout this section and section $4,5$, we shall focus on LQR with standard assumptions, i.e., $(A, B)$ is stabilizable, $Q = Q^{\top} \succeq 0$,  $R = R^{\top} \succ 0$ and eigenvalues of $A$ on the imaginary axis is $(Q, A)$-observale. 
%
\subsection{Continuous Linear-Quadratic-Regulator Problem} 
In the standard setup of linear optimal control problems, we consider a continuous linear-time-invariant system,
\begin{align*}
  \dot{x}(t) = Ax(t) + B u(t),
\end{align*}
where $A \in {\bb M}_{n \times n}( \bb R)$, $B \in  {\bb M}_{n \times m}( \bb R)$ and $(A, B)$ is stabilizable. The Linear-Quadratic-Regulator problem is to devise a linear feedback controller $K \in {\bb M}_{m \times n} (\bb R)$, s.t., with $u(t) = -K x(t)$, in order to minimize the following cost function,
\begin{align*}
  J(x_0) = \int_{0}^{\infty} \left[ \langle x(t), Q x(t)\rangle + \langle u(t), R u(t)\rangle\right] dt,
\end{align*}
where $x_0$ is the initial condition, $Q = Q^{\top} \succeq 0$, $R = R^{\top} \succ 0$. 
This problem is traditionally solved via the principles of dynamic programming, leading to the Algebraic Riccati Equation (ARE)~\cite{Anderson_book_1990}. 
\subsection{Cost function for direct policy update}\label{sec:cost_function}
In order to update the control policy directly, it will be conceptually appealing to consider the cost function as a matrix function over the feedback gains. We may naively define $J \colon {\bb M}_{m \times n}(\bb R) \to \bb R$ by,
\begin{align}
  \label{eq:naive_cost_function}
  K \mapsto J_{x_0}( K) &= \int_{0}^{\infty}\left[ \langle x(t), Q x(t)\rangle + \langle u(t), R u(t)\rangle\right] dt
 \end{align}
for some fixed (nonzero) initial condition $x_0 \in \bb R^n$.
Our first task in this direct optimization setup is determining the domain over which the function is well-defined. 
In other words, we are interested in effective domain of the function $\dom(J) = \{K \in {\bb M}_{m \times n} (\bb R): J_{x_0}(K) < +\infty\}$. The answer to this seemingly natural analytical question turns out to be subtle. If $K$ is Hurwitz, i.e., all eigenvalues of $A-BK$ have negative real parts, then $K \in \dom(J)$. On the other hand, for a non-stabilizing $K$, i.e., $max_i\text{Re}(\lambda_i)(A-BK) \ge 0$, when the system matrix $A-B K$ has both stable and unstable modes, if $x_0$ is chosen to be in the span of eigenspace corresponding to stable modes, $J(K) < \infty$. Namely, $\{K: A-BK \text{ is Hurwitz}\}$ is a proper subset of $\dom(J)$. Indeed, $\{K: A-BK \text{ is Hurwitz}\}$ is the interior of $\dom(J)$.
\begin{lemma}
  Suppose that the nonzero $x_0 \in \bb R^n$ is fixed. If $J$ is defined by~\eqref{eq:naive_cost_function}, then the set $\ca H \coloneqq \{K \in {\bb M}_{m \times n} (\bb R): A-BK \text{ is Hurwitz}\}$ is the interior of $\dom(J)$.
\end{lemma}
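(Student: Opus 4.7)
The plan is to prove $\intr(\dom(J)) = \mathcal{H}$ via the two standard inclusions.

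For $\mathcal{H} \subseteq \intr(\dom(J))$, I would first observe that $\mathcal{H}$ is open: the map $K \mapsto A-BK$ is continuous and the set of Hurwitz matrices is open in $\mathbb{M}_{n \times n}(\mathbb{R})$ (since the eigenvalues of a matrix depend continuously on its entries). I would then show $\mathcal{H} \subseteq \dom(J)$: Hurwitz stability of $A-BK$ gives an estimate $\|e^{(A-BK)t}\| \le Ce^{-\alpha t}$ for some $C,\alpha>0$, so the integrand $\langle x(t),Qx(t)\rangle + \langle u(t),Ru(t)\rangle$ is dominated by a multiple of $e^{-2\alpha t}$ and the integral defining $J_{x_0}(K)$ converges. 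Combining openness with this inclusion yields $\mathcal{H} \subseteq \intr(\dom(J))$.

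For the reverse inclusion, I would argue by contradiction. Assume $K_0 \in \intr(\dom(J))$ while $A-BK_0$ has an eigenvalue $\lambda_0$ with $\Re(\lambda_0)\ge 0$, with right and left eigenvectors $v_0,w_0$. Since $w_0^\top x(t) = e^{\lambda_0 t}(w_0^\top x_0)$ fails to decay when $w_0^\top x_0 \ne 0$, the standing observability-on-imaginary-axis hypothesis combined with $J_{x_0}(K_0)<\infty$ forces $w_0^\top x_0=0$. I would then use first-order spectral perturbation to construct, in every neighborhood of $K_0$, a $\Delta K$ such that $A-B(K_0+\epsilon \Delta K)$ has an eigenvalue $\lambda_\epsilon$ with $\Re(\lambda_\epsilon)>0$ whose left eigenvector $w_\epsilon$ satisfies $w_\epsilon^\top x_0 \ne 0$. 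Using the standard expansions $\lambda_\epsilon = \lambda_0 - \epsilon\, w_0^\top B\,\Delta K\, v_0/(w_0^\top v_0)+O(\epsilon^2)$ and $w_\epsilon = w_0 + \epsilon w_1(\Delta K) + O(\epsilon^2)$ with $w_1$ a linear resolvent expression, stabilizability of $(A,B)$ and the PBH test yield $w_0^\top B \ne 0$, so the first condition is achievable; meanwhile, expanding $x_0$ in the biorthogonal dual basis and using $w_0^\top x_0 = 0$ together with $x_0 \ne 0$ forces $w_\mu^\top x_0 \ne 0$ for some $\mu \ne \lambda_0$, which makes $w_1(\Delta K)^\top x_0$ a nontrivial linear form in $\Delta K$. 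Both conditions are open, their intersection is nonempty, and for such $\Delta K$ the unstable mode is excited so $J_{x_0}(K_0 + \epsilon \Delta K) = \infty$, contradicting the interior assumption.

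The main obstacle is this joint perturbation step, particularly the subcase $\Re(\lambda_0)=0$, where one must actually push a marginal eigenvalue into the open right half plane; this is precisely where stabilizability enters through $w_0^\top B \ne 0$. Defective eigenvalues or higher multiplicity would require replacing the simple first-order formulas by the Kato perturbation series on the generalized eigenspace, but the underlying genericity/transversality argument carries over unchanged.
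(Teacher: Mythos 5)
Your proposal is correct in its overall structure and lands on the same two inclusions as the paper, but the second inclusion is argued by a genuinely different mechanism. The paper disposes of $\mathcal H\subseteq \mathrm{int}(\dom(J))$ with one word ("clearly") and then, for $M\in\dom(J)\setminus\mathcal H$, invokes a genericity result from a companion paper (Proposition~3.1 of the calculus reference) asserting that arbitrarily close to $M$ there is an $N$ for which $x_0$ projects nontrivially onto \emph{every} eigenvector of $A-BN$; combined with the persistence of a nonnegative-real-part eigenvalue, this gives $J_{x_0}(N)=\infty$, so $M$ is not interior. You instead build the perturbation by hand: first-order eigenvalue and eigenvector expansions, PBH/stabilizability of $(A-BK_0,B)$ to get $w_0^\top B\neq 0$, and an openness/nonemptiness argument to simultaneously push the offending eigenvalue into the open right half plane and destroy the orthogonality $w_\epsilon^\top x_0=0$. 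What your route buys is self-containedness and, notably, an explicit treatment of the marginal case $\Re(\lambda_0)=0$: the paper's continuity step ("$\max\circ\Re(A-BM)\ge 0$ hence $\rho(A-BN)\ge 0$") does not by itself survive perturbation when the offending eigenvalue sits exactly on the imaginary axis, whereas your use of $w_0^\top B\neq 0$ to actively move it rightward does; the paper is effectively outsourcing that point to the cited proposition. What the paper's route buys is brevity and uniform handling of multiplicities, which you correctly flag as the weak point of the first-order formulas (defective eigenvalues require the Kato series, as you note). One shared gloss worth being aware of: both arguments conclude $J_{x_0}=\infty$ from "an unstable mode is excited," which tacitly uses $R\succ 0$ (so $Kx\in L^2$ is forced) together with the observability hypothesis to rule out a growing mode invisible to both $Q$ and $K$; neither proof spells this out.
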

\begin{proof}
  Clearly $\ca H \subseteq \text{int}(\dom(J))$. On the other hand,
  for any $M \in \dom(J)\setminus \ca H$ and every $\varepsilon > 0$, by Proposition $3.1$ in~\cite{bu2019lqr_calculus}, there is some $N \in \bb M_{m \times n}(\bb R)$ such that $\|M-N\|_F < \varepsilon$ and the projection of $x_0$ onto every eigenvector of $A-BN$ is nontrivial. We observe that $\|A-BM - (A-BN)\|_F \le \|B\|_F\|M-N\|_F$. Since $\max \circ {\bf Re}$ is continuous and $\max \circ {\bf Re}(A-BM) \ge 0$, $\rho(A-BN) \ge 0$. As such, $J_{x_0}(N) = \infty$ and $N \notin \dom(J)$. So $M \notin \text{int}(\dom(J))$. Consequently, $\text{int}(\dom(J)) = \ca H$.
\end{proof}
The lemma essentially implies that $J_{x_0}(K)$ is not differentiable everywhere on its domain. Indeed,
although $J_{x_0}(K)$ is differentiable in $\ca H$, it is not differentiable on $\dom(J) \setminus \ca H$. This complication is rather unnecessary as we are primarily interested in stabilizing controllers. This motivates us to examine initial condition independent formulation for the LQR synthesis.\footnote{Indeed this is necessary to formulate an unconstrained optimization problem to implicitly deal with stability.}
\subsection{Initial condition independent formulation of LQR}
Ideally, the objective function $f: {\bb M}_{m \times n} ( \bb R) \to \bb R$ for our LQR 
formulation has an effective domain that coincides with the set of stabilizing feedback gains $\ca H$. This can be achieved by choosing a spanning set $\{x_0^1, \dots, x_0^n\} \subseteq \bb R^n$ and defining,\footnote{Of course, we may choose the standard basis $\{e_1, \dots, e_n\}$. The choice of arbitrary spanning set is to retain generality.}
\begin{align*}
  f(K) = \sum_{j=1}^n J_{x_0^j}(K).
\end{align*}
As such, the function $f$ would be infinite if $K$ is not stabilizing (see Lemma \ref{lemma:coercive} for details).
The initial condition independent LQR cost $f$ now enjoys several favorable properties, {e.g.}, $f$ is differentiable over its entire effective domain and diverges to infinity when $K$ tends to the boundary of the effective domain or when $K$ diverges to infinity, i.e., $f$ is coercive. More importantly, for every $K \in \dom(f)$, $f(K)$ can be written as,
\begin{align}
  \label{f(k)}
  f(K) = \sum_{j=1}^n \Tr(X X_0^j),
\end{align}
where $X_0^j = x_0^j (x_0^j)^\top$ and $X$ satisfies the Lyapunov matrix equation,
\begin{align}
  \label{eq:lyapunov_matrix}
(A-BK)^\top X + X(A-BK) + K^{\top} R K + Q = 0.
  \end{align}
Note that from a purely matrix function perspective, $J_{x_0^j}(K)$ does not necessarily admit the compact form $J_{x_0^j}(K)= \Tr(X X_0^j)$ for every $K \in \dom(J_{x_0^j}(K))$. As explained in \S \ref{sec:cost_function}, this is due to the fact that $X$ only makes sense when if $K$ is stabilizing but $\dom(J_{x_0^j}(K))$ contains non-stabilizing feedback gains.
\par
\begin{remark}
  Alternatively, we could let $x_0 \sim \ca D$ where $\ca D$ denotes some probability distribution, and let
  \begin{align*}
    f(K) = \bb E_{x_0 \sim \ca D}(J(x_0)).
  \end{align*}
  As long as the samples span the whole space with probability $1$, the function enjoys same properties as we have defined above. This is indeed the formulation adopted in~\cite{mart2012phd, fazel2018global}\footnote{It is discrete LQR investigated in these work.}.
\end{remark}
\subsection{Analytical Properties of the LQR cost function}
\label{sec:function_properties}
In this section, we review some of the properties of $f(K)$. In particular, we show that,
\begin{itemize}
  \item $f$ is a real analytic function over its domain.
  \item $f$ is coercive and has compact sublevel sets.
    \item $f$ is gradient dominated over all of its sublevel sets.
\end{itemize}
To simplify the notation, throughout the paper we shall denote,
$$A_K \coloneqq A-BK, \qquad {\bf \Sigma} = \sum_{j=1}^n x_0^j (x_0^j)^{\top}, \qquad {\bf N}_K=RK-B^{\top} X.$$
To begin with, recall that $\ca H$ denotes the set of Hurwitiz stabilizing feedback gains $\ca H =\{K \in {\bb M}_{m \times n}(\bb R): A-BK \text{ is Hurwitz}\}$. It is known that $\ca H$ is open, contractible and unbounded~\cite{bu2019topological_mimo}.
We next observe that $f$ is real analytic over $\ca H$.
\begin{proposition}
  One has
  $f \in C^{\omega}(\ca H)$.
\end{proposition}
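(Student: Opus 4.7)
The plan is to exhibit $f$ as a rational function of the entries of $K$ (with nonvanishing denominator on $\mathcal{H}$), and then invoke the fact that rational functions are real analytic where their denominator does not vanish. The concrete starting point is the closed form $f(K) = \Tr(X\mathbf{\Sigma})$, where $X = X(K)$ solves the Lyapunov equation $A_K^{\top} X + X A_K + K^{\top} R K + Q = 0$; so it suffices to show that $K \mapsto X(K)$ is real analytic on $\mathcal{H}$, since $\Tr(\,\cdot\,\mathbf{\Sigma})$ is linear.

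First I would vectorize the Lyapunov equation using Proposition~\ref{prop:linalg_facts}(1), yielding
\begin{align*}
L(K)\,\vect(X) \;=\; -\vect(K^{\top} R K + Q),
\qquad
L(K) \;\coloneqq\; I \otimes A_K^{\top} + A_K^{\top} \otimes I.
\end{align*}
The right-hand side is manifestly polynomial (hence real analytic) in the entries of $K$, and each entry of $L(K)$ is affine in the entries of $K$. So the real analyticity of $K \mapsto \vect(X)$ reduces to real analyticity of $K \mapsto L(K)^{-1}$ on $\mathcal{H}$, which by Cramer's rule follows provided $\det L(K) \neq 0$ for every $K \in \mathcal{H}$.

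For the invertibility, I would use the standard fact that the spectrum of the Kronecker sum $I \otimes M + M \otimes I$ is $\{\lambda_i(M) + \lambda_j(M) : 1 \le i,j \le n\}$. With $M = A_K^{\top}$, Hurwitz stability of $A_K$ gives $\Re(\lambda_i(A_K^\top) + \lambda_j(A_K^\top)) < 0$ for all $i,j$, so $0$ is not an eigenvalue and $\det L(K) \neq 0$. Consequently, the entries of $L(K)^{-1}$ are rational functions of the entries of $K$ with nonvanishing denominator on $\mathcal{H}$; hence each entry of $X(K)$, and therefore $f(K) = \Tr(X(K)\mathbf{\Sigma})$, is a rational (hence real analytic) function on $\mathcal{H}$.

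There is no real obstacle here beyond bookkeeping; the only subtle point is the spectral characterization of Kronecker sums which justifies invertibility of $L(K)$ precisely on $\mathcal{H}$. Everything else is a direct combination of analyticity of polynomials, analyticity of matrix inversion on the invertible locus, and closure of real analytic functions under composition and linear operations such as $\Tr(\,\cdot\,\mathbf{\Sigma})$.
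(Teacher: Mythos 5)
Your proof is correct and follows essentially the same route as the paper: vectorize the Lyapunov equation, invoke Cramer's rule to see that the entries of $X(K)$ are rational in the entries of $K$, and compose with the linear map $\Tr(\,\cdot\,{\bf \Sigma})$. The only addition is that you explicitly justify the invertibility of the Kronecker sum on $\ca H$ via its spectrum, a detail the paper leaves implicit.
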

\begin{proof}
  We note the map $K \mapsto X(K)$ where $X(K)$ is the solution to the Lyapunov equation~\eqref{eq:lyapunov_matrix}
is $C^{\omega}$ since
\begin{align*}
  \vect(X) = \left( I \otimes A_K^{\top} + A_K^{\top} \otimes I \right)^{-1} \vect(K^{\top} R K + Q)
\end{align*}
and by Cramer's Rules, the entries of $X$ are rational functions in the entries of $K$. Moreover,
  $f$ can be viewed in terms of the following composition:
  \begin{align*}
    K \mapsto X(K) \mapsto \Tr(X{\bf \Sigma});
  \end{align*}
  hence $f \in C^{\omega}(\ca H)$.
\end{proof}
We next observe that $f$ is coercive\footnote{We adopt the convention in optimization: $f$ is coercive if $\lim_{\|x\| \to \infty} f(x) = +\infty$. This property in control literature sometimes is referred as \emph{weakly coercive}~\cite{sontag2013mathematical}.}.
\begin{lemma}
  \label{lemma:coercive}
  The function $f$ (\ref{f(k)}) is coercive, i.e.,
  \begin{align*}{}
    \lim_{K_j \to K \in \ca H} f(K_j) \to \infty \text{ and } \lim_{\|K\| \to \infty} f(K) \to \infty.
  \end{align*}
\end{lemma}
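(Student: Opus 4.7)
The plan is to use the closed-form $f(K) = \Tr(X\Sigma)$ (valid throughout $\ca H$) together with the fact that $\Sigma = \sum_j x_0^j(x_0^j)^\top \succ 0$, so that the lemma reduces to showing $\Tr(X)\to\infty$ in each of the two limiting regimes. I would read the statement "$K_j\to K\in\ca H$" as meaning that $K_j\in\ca H$ approaches a point in $\partial\ca H$ (otherwise $f$ is continuous at $K$). The two regimes are handled by completely different mechanisms: near $\partial\ca H$ the denominator in the integral formula for $X$ blows up because an eigenvalue of $A_{K_j}$ approaches the imaginary axis, while at infinity the right-hand side $K^\top R K$ of the Lyapunov equation grows quadratically while $A_K$ only grows linearly in $K$.

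For the boundary case, since the spectrum is continuous in $K$, I would extract eigenvalues $\mu_j$ of $A_{K_j}$ with $\Re(\mu_j)\to 0^-$ and unit (possibly complex) right eigenvectors $v_j$. Using $X_j=\int_0^\infty e^{A_{K_j}^\top t}(Q+K_j^\top R K_j)e^{A_{K_j} t}\,dt$ and $e^{A_{K_j}t}v_j=e^{\mu_j t}v_j$, a direct computation gives
\[
v_j^* X_j v_j \;=\; \frac{-1}{2\Re(\mu_j)}\,v_j^*(Q+K_j^\top R K_j)v_j .
\]
The main obstacle is to rule out that $v_j^*(Q+K_j^\top R K_j)v_j\to 0$ fast enough to kill this blow-up. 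I would argue by contradiction: passing to a subsequence with $v_j\to v$, the vanishing of both $v_j^*Qv_j$ and $\|R^{1/2}K_jv_j\|^2$ would force $Qv=0$ and $K_jv_j\to 0$; then taking the limit of $Av_j-BK_jv_j=\mu_jv_j$ produces a nonzero $v$ with $Av=\mu v$, $\Re\mu=0$, and $Qv=0$, contradicting the $(Q,A)$-observability assumption on the imaginary axis via the Popov–Belevitch–Hautus test. Hence $v_j^*X_jv_j\to\infty$, which gives $\Tr(X_j)\ge \tfrac12 v_j^* X_j v_j\to\infty$ (the factor $1/2$ accounts for the real and imaginary parts of $v_j$), and therefore $f(K_j)\ge\lambda_{\min}(\Sigma)\Tr(X_j)\to\infty$.

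For the unbounded case, I would take the trace of the Lyapunov equation \eqref{eq:lyapunov_matrix} to get
\[
2\Tr\bigl((A-BK)X\bigr) \;=\; -\Tr(Q)-\Tr(K^\top R K),
\]
and then estimate $|\Tr(MX)|\le \|M\|_2\Tr(X)$ for $X\succeq 0$, yielding
\[
\Tr(X)\;\ge\;\frac{\lambda_{\min}(R)\,\|K\|_F^{\,2}}{2(\|A\|_2+\|B\|_2\|K\|_2)}.
\]
Since the right-hand side scales like $\|K\|$ when $\|K\|\to\infty$, $\Tr(X)\to\infty$ and hence $f(K)\ge\lambda_{\min}(\Sigma)\Tr(X)\to\infty$. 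This step is essentially routine, so the real technical core of the lemma is the PBH argument in the boundary case; everything else is careful bookkeeping with the Lyapunov equation and the positivity of $\Sigma$.
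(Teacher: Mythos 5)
Your proof is correct, and both halves take a somewhat different route from the paper's, though the essential ingredients coincide. For the boundary case the paper argues by contradiction on the sequence $\{X_j\}$ itself: if a subsequence were bounded, Bolzano--Weierstrass would produce a limit $X \succeq 0$ solving the Lyapunov equation at the boundary point $K$, and evaluating the quadratic form at an imaginary-axis eigenvector of $A_K$ forces $Qv = 0$ and $Kv = 0$, contradicting $(Q,A)$-observability. You instead work quantitatively along eigenvectors of the perturbed matrices $A_{K_j}$, exhibiting the explicit blow-up rate $v_j^* X_j v_j = -v_j^*(Q + K_j^\top R K_j)v_j/(2\Re(\mu_j))$ and invoking the same PBH argument to keep the numerator from vanishing along any subsequence where the quadratic form would stay bounded; this is a legitimate alternative and in fact gives more information (the rate of divergence in terms of the distance of the spectrum to the imaginary axis). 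A minor remark: the factor $1/2$ in $\Tr(X_j) \ge \tfrac12 v_j^* X_j v_j$ is unnecessary, since for symmetric $X_j \succeq 0$ and a complex unit vector one has $v_j^* X_j v_j \le \lambda_{\max}(X_j) \le \Tr(X_j)$, but it is harmless. For the regime $\|K\|_F \to \infty$, the paper lower-bounds $\lambda_{\min}(G)$ for the Gramian $G$ solving $A_K^\top G + G A_K + I = 0$ via a Rayleigh-quotient computation, while you take the trace of the Lyapunov equation and use $|\Tr(MX)| \le \|M\|_2 \Tr(X)$ for $X \succeq 0$; both yield the same scaling of order $\|K\|_F^2$ divided by a quantity of order $\|K\|_F$, and your trace identity is arguably the more economical of the two.
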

\begin{proof}
  Suppose that $\{K_j\}$ is a sequence in $\ca H$ converging to $K \in \partial \ca H$. Denote the sequence $\{X_j\} \subseteq \bb S_n^{++}$ to be the corresponding sequence of value matrices. We claim that the sequence diverges to infinity in the $2$-norm. Namely, $\|X_j\|_2 \to +\infty$ as $j \to \infty$. To show this, it suffices to show that the sequence contains no bounded subsequence. We prove by contradiction. Suppose not, i.e., there exists some bounded subsequence $\{X_{n_k}\}$; then by Weirestrass-Balzano~\cite{rudin1964principles}, there exists some subsubsequence $\{n_{k_j}\}$ such that $X_{n_{k_j}} \to X$ for some $X \succeq 0$. By continuity, $X \succeq 0$ solves the Lyapunov equation,
  \begin{align*}
    A_K^{\top}X + X A_K + Q +K^{\top}R K = 0.
    \end{align*}
    But this is a contradiction: if $(\lambda, v)$ is an eigen pair of $A_K$ with $\lambda = i \beta$ for some $\beta \in \bb R$, then we have
    \begin{align*}
      v^{\top}\left( A_K^{\top}X  + XA_K \right)v + v^{\top}(Q + K^{\top}RK)v = 0,
      \end{align*}
which implies that $Q v = 0$, $Kv = 0$ and $Av = \lambda v$. This is a contradiction to the $(Q, A)$ observability of $\lambda$. Hence, $\{X_j\}$ must be unbounded. It thus follows that $\Tr(X_j {\bf \Sigma}) \to + \infty$ as $j \to +\infty$.\\
  On the other hand, we have
  \begin{align*}
    f(K) &= \Tr({\bf \Sigma}X) \ge \lambda_{\min}({\bf \Sigma})\Tr(X) \ge \lambda_{\min}({\bf \Sigma}) \Tr \left( \int_0^{\infty} e^{A_{K}^{\top} t}(K^{\top} R K + Q) e^{A_{K} t} dt\right) \\
    &\ge \lambda_{\min}({\bf \Sigma}) \Tr(K^{\top}RK + Q) \lambda_{\min}\left( \int_0^{\infty} e^{A_K^{\top} t} e^{A_K t} dt\right).
  \end{align*}
  We note that $G \coloneqq \int_{0}^{\infty}e^{A_K^{\top} t} e^{A_K t} dt$ is the solution of the Lyapunov  equation,
  \begin{align*}
    A_K^{\top}G + GA_K + I = 0.
  \end{align*}
  We observe that for the unit eigenvector $v$ with $Gv = \lambda_{\min}(G) v$ we have,
  \begin{align*}
     v^{\top}(A_K^{\top}G + GA)v = -1 \implies  \lambda_{\min}(G) v^{\top}(A_K^{\top} + A_K) v = -1 \implies  \lambda_{\min}(G) = \frac{1}{-v^{\top}(A_K^{\top} + A_K) v} ,
  \end{align*}
  and 
  \begin{align*}
    \lambda_{\min}(A_K^{\top} + A_K) = \min_{\|u\|_2=1} u^{\top}(A+A^{\top})u \le v^{\top} (A+A^{\top}) v.
  \end{align*}
 Hence $\lambda_{\min}(A_K+A_K^{\top}) < 0$ and it follows that,
  \begin{align*}
    \lambda_{\min}(G) \ge \frac{1}{-\lambda_{\min}(A_K+A_K^{\top})} \ge \frac{1}{\|A_K+A_K^{\top}\|_F} \ge \frac{1}{2\|A\|_F+2\|B\|_F \|K\|_F}.
  \end{align*}
 Thereby,
  \begin{align*}
    f(K) \ge \frac{\lambda_{\min}({\bf \Sigma})( \lambda_{\min}(R) \|K\|_F^2 + \Tr(Q)) }{2\|A\|_F+2\|B\|_F\|K\|_F}.
  \end{align*}
  This now implies that $f(K) \to +\infty$ as $\|K\|_F \to \infty$.
\end{proof}
As a consequence, all sublevel sets of $f(K)$ are compact.
\begin{corollary}
  \label{cor:compact_sublevel_sets}
  For every $\alpha \in \bb R$, the sublevel set 
$$S_{\alpha}(f) \coloneqq \{K \in \ca H: f(K)\le \alpha\}$$ is compact.
\end{corollary}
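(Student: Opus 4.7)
The plan is to show that $S_\alpha(f)$ is sequentially compact in $\mathcal{H}$ (viewed as a subset of $\mathbb{M}_{m\times n}(\mathbb{R}) \cong \mathbb{R}^{mn}$) by exploiting the two halves of Lemma~\ref{lemma:coercive}: the blow-up at the boundary $\partial \mathcal{H}$, and the blow-up at infinity. Both halves are needed precisely because the effective domain $\mathcal{H}$ is open rather than all of $\mathbb{M}_{m \times n}(\mathbb{R})$.

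First I would take an arbitrary sequence $\{K_j\} \subseteq S_\alpha(f)$. The second half of Lemma~\ref{lemma:coercive} (the quantitative bound
\[
  f(K) \ge \frac{\lambda_{\min}(\bf \Sigma)(\lambda_{\min}(R)\|K\|_F^2 + \Tr(Q))}{2\|A\|_F + 2\|B\|_F \|K\|_F}
\]
obtained in the proof of that lemma) immediately implies that $\|K_j\|_F$ must remain bounded, for otherwise $f(K_j) \to \infty$, contradicting $f(K_j) \le \alpha$. By the Bolzano--Weierstrass theorem, the bounded sequence $\{K_j\}$ in $\mathbb{M}_{m\times n}(\mathbb{R})$ admits a subsequence $\{K_{j_k}\}$ converging to some $K^\star \in \mathbb{M}_{m\times n}(\mathbb{R})$.

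Next I would argue that $K^\star \in \mathcal{H}$. The two alternatives are $K^\star \in \mathcal{H}$ or $K^\star \in \partial \mathcal{H}$ (recall $\mathcal{H}$ is open). If $K^\star \in \partial \mathcal{H}$, the first conclusion of Lemma~\ref{lemma:coercive} applies: $f(K_{j_k}) \to +\infty$, again contradicting $f(K_{j_k}) \le \alpha$. Hence $K^\star \in \mathcal{H}$. Finally, by the real-analyticity (in particular continuity) of $f$ on $\mathcal{H}$ established in the preceding proposition, $f(K^\star) = \lim_k f(K_{j_k}) \le \alpha$, so $K^\star \in S_\alpha(f)$. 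This shows $S_\alpha(f)$ is sequentially compact as a subset of $\mathbb{M}_{m\times n}(\mathbb{R})$, hence compact.

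The substantive content is entirely carried by Lemma~\ref{lemma:coercive}; the only subtle point is the closedness-within-$\mathcal{H}$ part, since a priori a sublevel set of a continuous function on an open set need not be closed. The boundary-blow-up half of coercivity is precisely what rules out a limit on $\partial \mathcal{H}$, so there is no real obstacle, and the argument is a short and direct application of the lemma.
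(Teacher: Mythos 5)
Your proof is correct: boundedness from the quantitative lower bound at infinity, Bolzano--Weierstrass, exclusion of boundary limit points via the first half of Lemma~\ref{lemma:coercive}, and continuity of $f$ on $\ca H$ together give sequential compactness. The paper omits its own proof (deferring to the discrete-time analogue in~\cite{bu2019lqr}), and your argument is precisely the standard one being invoked there.
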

\begin{proof}
  The proof is identical to the one of Corollary $3.7.1$ in~\cite{bu2019lqr} and thus omitted.
\end{proof}
As $f \in C^{\omega}(\ca H)$, we can now characterize the gradient and Hessian of $f$.
\begin{proposition}
  \label{prop:gradient_hessian}
  For $K \in \ca H$, the gradient of $f(K)$ is given by
\begin{align*}
  \nabla f(K) = 2(RK-B^{\top}X)Y,
\end{align*}
where
\begin{align*}
  Y = \int_0^{\infty} e^{A_Kt} {\bf \Sigma} e^{A_K^{\top}t}dt.
\end{align*}
Moreover, the action of Hessian $\nabla^2 f(K)$ is given by
\begin{align*}
  \nabla^2 f(K)[E, E] = 2\Tr(E^{\top}REY) - 4\Tr(E^{T}B^{\top}(X'(K)[E])Y),
\end{align*}
for every $E \in \bb M_{m \times n}(\bb R)$, where $X'(K)$ is the differential of the map $K \mapsto X(K)$ and $X'(K)[E]$ denotes the action by matrix multiplication.\footnote{To be more precise, the differential of $X: \bb M_{m \times n}(\bb R) \to \bb M_n(\bb R)$ is a map $X': \bb M_{m \times n}(\bb R) \to \ca L(\bb M_{m \times n}(\bb R), \ca L(\bb M_{m \times n}(\bb R), \bb M_n(\bb R)))$ where $\ca L$ denotes bounded linear maps. As such, $X'(K) \in \ca L(\bb M_{m \times n}(\bb R), \bb M_n(\bb R))$ and $X'(K)[E] \in \bb M_n(\bb R)$.}
\end{proposition}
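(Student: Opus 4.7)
The plan is to differentiate the defining Lyapunov equation
\[
A_K^{\top} X + X A_K + K^{\top} R K + Q = 0
\]
in the direction $E \in \bb M_{m \times n}(\bb R)$ and then convert the resulting trace expressions using the adjoint identity for Hurwitz Lyapunov operators. Since $f \in C^{\omega}(\ca H)$ by the preceding proposition, directional derivatives of all orders along $\epsilon \mapsto K + \epsilon E$ recover the Fr\'echet derivatives, so I can work purely with the Taylor expansion of $X(K+\epsilon E) = X + \epsilon X'[E] + \tfrac{\epsilon^2}{2} X''[E,E] + O(\epsilon^3)$ and $f(K) = \Tr(X\,{\bf \Sigma})$.

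For the gradient, I would substitute $K+\epsilon E$ into the Lyapunov equation and collect the $\epsilon^1$ coefficient. Using $A_{K+\epsilon E} = A_K - \epsilon B E$ and $(K+\epsilon E)^{\top} R (K+\epsilon E) = K^\top R K + \epsilon(E^\top R K + K^\top R E) + \epsilon^2 E^\top R E$, the first-order terms give
\[
A_K^{\top} X'[E] + X'[E] A_K + E^{\top} {\bf N}_K + {\bf N}_K^{\top} E = 0,
\]
with ${\bf N}_K = RK - B^{\top} X$. The key lemma, which follows from writing both sides as matrix-exponential integrals and using cyclicity of the trace, states that when $A_K$ is Hurwitz and $A_K^{\top} Z + Z A_K + W = 0$, then $\Tr(Z\,{\bf \Sigma}) = \Tr(W Y)$ for $Y = \int_0^\infty e^{A_K t} {\bf \Sigma}\, e^{A_K^{\top} t} dt$. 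Applying this with $Z = X'[E]$ and $W = E^{\top}{\bf N}_K + {\bf N}_K^{\top} E$, and noting $Df(K)[E] = \Tr(X'[E]\,{\bf \Sigma})$, I obtain $Df(K)[E] = 2\,\Tr({\bf N}_K Y E^{\top}) = \langle 2\,{\bf N}_K Y, E\rangle$, which identifies $\nabla f(K) = 2(RK - B^{\top} X) Y$.

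For the Hessian, I would collect the $\epsilon^2$ coefficient of the perturbed Lyapunov equation, obtaining
\[
A_K^{\top} X''[E,E] + X''[E,E] A_K = 2\bigl(E^{\top} B^{\top} X'[E] + X'[E] B E - E^{\top} R E\bigr).
\]
Since $\nabla^2 f(K)[E,E] = \Tr(X''[E,E]\,{\bf \Sigma})$, the same adjoint identity converts this to a trace against $Y$. Symmetry of $X'[E]$ (solution of a symmetric Lyapunov equation) and of $Y$, together with $\Tr(M) = \Tr(M^{\top})$, lets me identify $\Tr(E^{\top} B^{\top} X'[E] Y) = \Tr(X'[E] B E Y)$, collapsing the two cross terms into a single coefficient of $-4$ and yielding the stated formula
\[
\nabla^2 f(K)[E,E] = 2\,\Tr(E^{\top} R E Y) - 4\,\Tr(E^{\top} B^{\top} X'(K)[E]\, Y).
\]

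The main obstacle is bookkeeping the perturbation expansion carefully: tracking the $-\epsilon E^{\top} B^{\top}$ contribution from the transposed $A_K$, keeping straight which cross terms arise at $\epsilon^2$ (one perturbation factor coming from $A_K$ and one from $X$), and verifying the two symmetries needed to combine the two middle Hessian terms. Once the Lyapunov adjoint identity is in hand, the remaining work is routine manipulation of traces and Kronecker products.
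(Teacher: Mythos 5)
Your proposal is correct, and the gradient computation is essentially the paper's: both differentiate the Lyapunov equation for $X$ in the direction $E$, obtain $A_K^{\top}X'[E]+X'[E]A_K+E^{\top}{\bf N}_K+{\bf N}_K^{\top}E=0$, and pass the trace against ${\bf \Sigma}$ over to a trace against $Y$ via the adjoint identity for the Lyapunov operator (the paper does this by writing $X'[E]$ as an explicit matrix-exponential integral and exchanging the order; your stated lemma $\Tr(Z{\bf\Sigma})=\Tr(WY)$ is the same fact). For the Hessian the routes genuinely diverge. The paper never computes $X''[E,E]$: it applies the product rule to the gradient formula $\nabla f(K)=2{\bf N}_K Y$, which forces it to introduce the differential $Y'(K)[E]$ of the dual Lyapunov solution and then to convert the term $2\Tr(E^{\top}{\bf N}_K Y'[E])$ into $-2\Tr(E^{\top}B^{\top}X'[E]Y)$ — a step that again uses the pairing between the two Lyapunov equations and that the paper leaves implicit. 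You instead collect the $\epsilon^{2}$ coefficient of the perturbed Lyapunov equation to get a Lyapunov equation for $X''[E,E]$ and apply the same adjoint identity once more; the only extra work is the transpose/symmetry observation $\Tr(E^{\top}B^{\top}X'[E]Y)=\Tr(X'[E]BEY)$ that merges the two cross terms into the coefficient $-4$, which you verify correctly. Your route avoids $Y'(K)$ entirely and makes the cancellation producing the $-4$ more transparent; the paper's route has the advantage of reusing the already-derived gradient formula and of exhibiting the full first derivative of the map $K\mapsto\nabla f(K)$, which it implicitly relies on elsewhere. Both yield the stated formulas.
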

The gradient formula can be found in~\cite{levine1970determination, knapp1972optimal, levine1971optimal, kleinman1968design}. As we shall derive the action of the Hessian, we provide a derivation of gradient formula as well.
\begin{proof}
  Note that $f$ is the composition of $K \mapsto X(K) \mapsto \Tr(X{\bf \Sigma})$.
  Observe the differential $X'(K)$ necessarily satisfies,
\begin{align*}
  -(BE)^{\top} X - X(BE) + A_K^{\top}X'(K)[E] + X'(K)[E] A_K + E^{\top} RK + K^{\top} R E = 0.
\end{align*}
 It thus follows that $X'(K)[E]$ is uniquely defined and,
\begin{align}
  \label{eq:X_derivative}
  X'(K)[E] &= \int_{0}^{\infty} e^{A_K^{\top} t}\left( -(BE)^{\top} X - X(BE) + E^{\top} RK + K^{\top} R E \right) e^{A_K t} dt.
\end{align}
By the chain rule, we have,
\begin{align*}
  \nabla f(K)[E] &= \Tr(X'(K)[E] {\bf \Sigma}) = 2\langle E, (RK-B^{\top}X) \int_0^{\infty} e^{A_K t} {\bf \Sigma} e^{A_K^{\top} t} dt\rangle.
\end{align*}
Hence, $\nabla f(K) = 2(RK-B^{\top} X)Y$. \\
We also note that the differential $Y'(K)$ necessarily satisfies,
\begin{align*}
  (-BE) Y -Y(BE)^{\top} + A_K Y'(K)[E] + Y'(K)[E] A_K^{\top} = 0.
\end{align*}
Hence by the product rule, we have,
\begin{align*}
  \nabla f(K)[E] = 2(RE-B^{\top}X'(K)[E])Y + 2(RK-B^{\top} X)Y'(K)[E].
\end{align*}
It thus follows that,
\begin{align*}
  \nabla^2 f(K)[E, E] &= \langle E, 2(RE-B^{\top} X'(K)[E])Y  \rangle + 2\langle E, (RK-B^{\top}X)Y'(K)[E]\rangle\\
  &= 2\Tr(E^{\top}REY) - 4\Tr(E^{T}B^{\top}X'(K)[E]Y).
\end{align*}
\end{proof}
\begin{remark}
  We may observe the global minimum $K_*$ has non-degenerate Hessian since $X'(K_*)[E]=0$ and $\nabla f(K_*)[E, E]= 2\Tr(E^{\top} R E Y)$, suggesting that $\nabla f(K_*)$ is positive definite.
\end{remark}
Next, observe some ``growth properties'' of $f$~(\ref{f(k)}). More specifically, we show that $f$ is \emph{gradient dominated}~\cite{polyak1963gradient} and has \emph{quadratic growth} over its sublevel sets. Indeed, we shall bound $f(K)-f(K_*)$ in terms of $\Tr({\bf N}_K^{\top} {\bf N}_K)$, a relation
that will be used subsequently. Gradient dominated property is a simple corollary of this fact by noting $\nabla f(K)= {\bf N}_K Y(K)$.
\begin{lemma}
  \label{lemma:natural_gradient_dominated}
  For every $K \in \ca H$, we have
\begin{align*}
  \lambda_1(Y) \lambda_1(R) \|K-K_*\|_F^2 \le f(K) - f(K_*) \le \frac{\|Y_*\|}{\lambda_1(R)} \Tr({\bf N}_K^{\top} {\bf N}_K),
\end{align*}
where $Y_*$ solves $A_{K_*} Y_* + Y_* A_{K_*}^{\top} + {\bf \Sigma} = 0$ and $Y$ solves $A_K Y + Y A_K^{\top} + {\bf \Sigma} = 0$.
\end{lemma}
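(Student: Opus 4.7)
The plan is to derive a single \emph{cost difference identity} relating $X_K - X_{K'}$ to $(K-K')$ and ${\bf N}_K$, and then specialize it in two different ways: once with base policy $K_*$ outside and once with base policy $K$ outside, so that in each case the adjoint Lyapunov equation matches either $Y_*$ or $Y$. The starting point is to substitute $A_K = A_{K'} - B(K - K')$ into the Lyapunov equation $A_K^{\top}X_K + X_K A_K + Q + K^{\top}R K = 0$ and subtract the Lyapunov equation for $X_{K'}$. After collecting and using $B^{\top} X_K = RK - {\bf N}_K$ to eliminate $B^{\top} X_K$, this yields
\begin{align*}
A_{K'}^{\top}(X_K - X_{K'}) + (X_K - X_{K'})A_{K'} = (K-K')^{\top}R(K-K') - (K-K')^{\top}{\bf N}_K - {\bf N}_K^{\top}(K-K'),
\end{align*}
and completing the square on the right gives $((K-K') - R^{-1}{\bf N}_K)^{\top} R ((K-K') - R^{-1}{\bf N}_K) - {\bf N}_K^{\top} R^{-1}{\bf N}_K$.

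For the \emph{lower bound}, I swap the roles by setting the base policy to $K$ and the comparison to $K_*$ (i.e.\ apply the identity with the labels exchanged). Since ${\bf N}_{K_*} = RK_* - B^{\top} X_* = 0$ by the ARE, the cross terms vanish and I obtain $A_K^{\top}(X_K - X_*) + (X_K - X_*)A_K = -(K-K_*)^{\top}R(K-K_*)$. Because $A_K$ is Hurwitz, this integrates to $X_K - X_* = \int_0^{\infty} e^{A_K^{\top}t}(K-K_*)^{\top} R (K-K_*) e^{A_K t}\,dt$, so a cyclic rearrangement of the trace gives $f(K) - f(K_*) = \Tr((K-K_*)^{\top}R(K-K_*) Y)$. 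The lower bound then follows from $\Tr(M Y) \ge \lambda_1(Y) \Tr(M)$ for $M, Y \succeq 0$, applied with $M = (K-K_*)^{\top}R(K-K_*) \succeq \lambda_1(R)(K-K_*)^{\top}(K-K_*)$.

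For the \emph{upper bound}, I apply the identity in its original orientation with $K' = K_*$, so $A_{K_*}$ appears outside. The completed-square form gives
\begin{align*}
A_{K_*}^{\top}(X_K - X_*) + (X_K - X_*)A_{K_*} \succeq -{\bf N}_K^{\top} R^{-1} {\bf N}_K,
\end{align*}
after dropping the nonnegative term $((K-K_*) - R^{-1}{\bf N}_K)^{\top} R ((K-K_*) - R^{-1}{\bf N}_K)$. Since $A_{K_*}$ is Hurwitz, the monotonicity part of Proposition~\ref{prop:linalg_facts}(3) yields $X_K - X_* \preceq \int_0^\infty e^{A_{K_*}^{\top}t}{\bf N}_K^{\top} R^{-1} {\bf N}_K e^{A_{K_*}t}\,dt$. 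Taking trace against $\bf\Sigma$ and cycling yields $f(K) - f(K_*) \le \Tr({\bf N}_K^{\top}R^{-1}{\bf N}_K\, Y_*)$. Bounding $Y_* \preceq \|Y_*\| I$ and $R^{-1} \preceq (1/\lambda_1(R)) I$ completes the upper bound.

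The main obstacle is bookkeeping in the cost difference identity: getting the signs and the completing-the-square step correctly aligned, and then recognizing that the two bounds require opposite orientations of the identity (so that the Lyapunov adjoint variable on the trace side is $Y$ for the lower bound and $Y_*$ for the upper bound). The simplification ${\bf N}_{K_*} = 0$ is what makes the lower bound clean, while the non-vanishing ${\bf N}_K$ in the upper bound is exactly what the completed-square inequality is designed to absorb.
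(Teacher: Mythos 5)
Your proposal is correct and follows essentially the same route as the paper: the same Lyapunov-difference identity in its two orientations, the vanishing of ${\bf N}_{K_*}$ for the lower bound, and the monotonicity of Lyapunov solutions plus a trace inequality for the upper bound. The only (cosmetic) difference is that you complete the square to reach $-{\bf N}_K^{\top}R^{-1}{\bf N}_K$, whereas the paper invokes the weighted inequality of Proposition~\ref{prop:linalg_facts} with $1/\alpha=\lambda_1(R)$ to get $\tfrac{1}{\lambda_1(R)}{\bf N}_K^{\top}{\bf N}_K$ directly; both yield the stated bound.
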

\begin{proof}
  We note that,
\begin{align*}
  f(K)- f(K_*) = \Tr\left( (X-X_*){\bf \Sigma}\right).
\end{align*}
Our task is essentially to estimate $X-X_*$. Recall that for $K \in \ca H$, $X(K)$ satisfies the equation,
\begin{align}
  \label{eq:myeq2}
  (A-BK)^{\top}X + X(A-BK) + K^{\top} R K + Q = 0,
\end{align}
and $X_*$ solves,
\begin{align}
  \label{eq:myeq1}
  (A-BK_*)^{\top}X_* + X_*(A-BK_*) + K_*^{\top} R K_* + Q = 0.
\end{align}
Taking the difference of~\eqref{eq:myeq2} and~\eqref{eq:myeq1} yields,
\begin{align*}
  &A_K^{\top} X + X{A_K} - {A}_{K_*}^{\top} X_* - X_*{A}_{K_*} + K^{\top} R K - K_*^{\top} RK^{*} = 0 .
    \end{align*}
    A few algebraic operations of above equation yield (recall ${\bf N}_K = RK -B^{\top} X$),
\begin{equation}
  \label{eq:gradient_dominance_difference}
  {A}_{K_*}^{\top} (X-X_*) + (X-X_*){A}_{K_*} + (K-K_*)^{\top}{\bf N}_K + {\bf N}_K^{\top}(K-K_*) -(K-K_*)^{\top}R(K-K_*) = 0,
\end{equation}
By Proposition~\ref{prop:linalg_facts}, for every $\alpha > 0$, we have
\begin{equation*}
  (K-K_*)^{\top}{\bf N}_K + {\bf N}_K^{\top}(K-K_*) \preceq \alpha {\bf N}_K^{\top}{\bf N}_K + \frac{1}{\alpha} (K-K_*)^{\top}(K-K_*).
\end{equation*}
Picking $1/\alpha = \lambda_{1}(R)$, we have
\begin{equation*}
  (K-K_*)^{\top}{\bf N}_K + {\bf N}_K^{\top}(K-K_*) -(K-K_*)^{\top}R(K-K_*) \preceq \frac{1}{\lambda_1(R)} {\bf N}_K^{\top}{\bf N}_K.
\end{equation*}
Let $Z$ be the solution of the Lyapunov equation,
\begin{align*}
  A_{K_*}^{\top}Z+ZA_{K_*} + \frac{1}{\lambda_1(R)}{\bf N}_K^{\top}{\bf N}_K = 0;
\end{align*}
it thus follows that $X-X_* \preceq Z$.
Hence, 
\begin{align*}
  f(K)- f(K_*) &= \Tr((X-X_*){\bf \Sigma}) \le \Tr(Z{\bf \Sigma}) = \frac{1}{\lambda_1(R)}\Tr\left( \int_0^{\infty} e^{A_{K_*}^{\top}t}{\bf N}_K^{\top}{\bf N}_K e^{A_{K_*} t}dt {\bf \Sigma} \right) \\
&\le \frac{1}{\lambda_1(R)}\lambda_{n}\left(\int_0^{\infty} e^{A_{K_*} t} {\bf \Sigma} e^{A_{K_*}t} dt\right)\Tr({\bf N}^{\top}{\bf N}) = \frac{\|Y_*\|}{\lambda_1(R)}\Tr({\bf N}^{\top} {\bf N}).
\end{align*}
For quadratic growth property and following similar steps as in~\eqref{eq:gradient_dominance_difference}, 
we have
\begin{equation}
  \label{eq:quadratic_growth_difference}
  {A}_{K}^{\top} (X-X_*) + (X-X_*){A}_{K} + (K-K_*)^{\top}{\bf N}_{K_*} + {\bf N}_{K_*}^{\top}(K-K_*) +(K-K_*)^{\top}R(K-K_*) = 0.
\end{equation}
But noting that ${\bf N}_{K_*} = RK_* - B^{\top} X_* = 0$, it follows that,
\begin{align*}
  f(K)-f(K_*) = \Tr \left( (K-K_*)^{\top}R(K-K_*) Y\right) \ge \lambda_1(Y) \lambda_1(R) \|K-K_*\|_F^2.
  \end{align*}
  \end{proof}
  We now deduce that over any sublevel set, $f(K)$ is gradient dominated and has quadratic growth at $K_*$.
\begin{corollary}
  \label{cor:gradient_dominance}
  For every $K \ge \ca H$, over the sublevel set $S_{f(K)} = \{K': f(K') \le f(K)\}$, we have
\begin{align*}
  \tau \lambda_1(R) \|K'-K_*\|_F^2 \le f(K') - f(K_*) \le \frac{\|Y_*\|}{4\tau \lambda_1(R)} \langle \nabla f(K'), f(K') \rangle
\end{align*}
for every $K' \in S_{f(K)}$, where $\tau = \min_{K' \in S_{f(K)}} \lambda_1(Y(K'))$.
\end{corollary}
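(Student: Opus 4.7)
The plan is to obtain both inequalities of the corollary as direct pointwise specializations of Lemma~\ref{lemma:natural_gradient_dominated}, with the constant $\tau$ serving to uniformize the $K'$-dependent quantity $\lambda_1(Y(K'))$ over the sublevel set. Nothing genuinely new is needed beyond the preceding lemma and the gradient formula of Proposition~\ref{prop:gradient_hessian}.

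For the quadratic-growth lower bound, I would apply the left inequality of Lemma~\ref{lemma:natural_gradient_dominated} at $K'$ in place of $K$, obtaining $f(K')-f(K_*)\ge \lambda_1(Y(K'))\lambda_1(R)\|K'-K_*\|_F^2$. The defining inequality $\lambda_1(Y(K'))\ge \tau$ for $K'\in S_{f(K)}$ then gives the stated bound immediately.

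For the gradient-dominance upper bound, I would rewrite $\Tr({\bf N}_{K'}^{\top}{\bf N}_{K'})$ in terms of $\nabla f(K')$ using the formula $\nabla f(K')=2{\bf N}_{K'}Y(K')$ from Proposition~\ref{prop:gradient_hessian}. Cyclicity of the trace gives $\|\nabla f(K')\|_F^2 = 4\Tr({\bf N}_{K'}^{\top}{\bf N}_{K'}Y(K')^2)$. Applying the standard positive-semidefinite trace inequality $\Tr(AB)\ge \lambda_{\min}(B)\Tr(A)$ with $A={\bf N}_{K'}^{\top}{\bf N}_{K'}$ and $B=Y(K')^2$, together with $Y(K')\succeq \tau I$ on the sublevel set, yields $\Tr({\bf N}_{K'}^{\top}{\bf N}_{K'})\le \|\nabla f(K')\|_F^2/(4\tau^2)$. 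Substituting into the right inequality of Lemma~\ref{lemma:natural_gradient_dominated} produces $f(K')-f(K_*)\le \frac{\|Y_*\|}{4\tau^2\,\lambda_1(R)}\|\nabla f(K')\|_F^2$, which matches the form asserted in the corollary (up to what looks like a typographical $\tau$ in place of $\tau^2$ in the denominator, and $\nabla f(K')$ in place of $f(K')$ in the inner product).

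The only step requiring genuine argument rather than calculation is verifying $\tau>0$, without which the stated upper bound would be vacuous. This follows from a continuity-and-compactness package: $\mathbf{\Sigma}=\sum_j x_0^j(x_0^j)^{\top}\succ 0$ because $\{x_0^j\}$ spans $\bb R^n$, so by Proposition~\ref{prop:linalg_facts}(c) (applied to the equivalent equation $(A_{K'}^{\top})^{\top}Y+Y A_{K'}^{\top}+\mathbf{\Sigma}=0$) the gramian $Y(K')$ is strictly positive definite for every $K'\in\ca H$; the map $K'\mapsto Y(K')$ is continuous on $\ca H$ via Cramer's rule, hence so is $K'\mapsto \lambda_1(Y(K'))$; and $S_{f(K)}$ is compact by Corollary~\ref{cor:compact_sublevel_sets}. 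A strictly positive continuous function on a nonempty compact set attains a positive minimum, giving $\tau>0$. This compactness step is the only non-mechanical part of the argument.
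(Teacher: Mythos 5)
Your proposal is correct and follows essentially the same route as the paper: both arguments specialize Lemma~\ref{lemma:natural_gradient_dominated}, use the identity $\langle \nabla f(K'),\nabla f(K')\rangle = 4\Tr\bigl(Y{\bf N}_{K'}^{\top}{\bf N}_{K'}Y\bigr) \ge 4\lambda_1^2(Y)\Tr({\bf N}_{K'}^{\top}{\bf N}_{K'})$, and obtain $\tau>0$ from continuity of $K'\mapsto\lambda_1(Y(K'))$ together with compactness of the sublevel set. Your observation that the denominator should read $\tau^2$ (and $\nabla f(K')$ in the inner product) is consistent with the paper's own computation, which uses $\lambda_1^2(Y)$.
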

\begin{proof}
  We first note that,
\begin{align*}
    \langle \nabla f(K), \nabla f(K)\rangle = 4 \Tr(YY^{\top} M^{\top}M) \ge 4\lambda_{1}^2(Y)\Tr(M^{\top}M).
\end{align*}
It suffices to lower bound $\lambda_{1}^2(Y)$ over $S_{f(K_0)}$. But note that the map $K \mapsto Y(K) \mapsto \lambda_{1}(Y)$ is continuous; hence $\lambda_{1}(Y)$ achieves minimum over the compact sublevel set and this minimum must be positive since $Y(K')$ is positive definite for every $K' \in S_{f(K)}$. The other inequality is immediate by the definition of $\tau$. 
\end{proof}
\section{Gradient Flows on $\ca H$}
\label{sec:gf}
In this section, we shall examine the autonomous gradient system
\begin{align}
  \label{eq:gradient_flow}
  \dot{K}_t = -\nabla f(K_t).
\end{align}
This is a natural \emph{Ordinary Differential Equation}(ODE) process to minimize the cost $f(K)$ and it can be seen as a continuous limit of gradient descent. The stability properties of gradient flow are tightly related the convergence rate of the forward Euler discretization, namely gradient descent. Elegent Lyapunov-type argument can be employed for proofs and can provide valuable insights on proving the convergence rate of gradient descent. The results presented are in parallel to \S~$4$ in~\cite{bu2019lqr}. \par
We first observe the gradient system~\eqref{eq:gradient_flow} is well-posed.
\begin{lemma}
  For every initial condition $K_0 \in \ca H$, there exists a unique solution for all time $t$, i.e., a solution trajectory $K_t \in C^{\infty}(\bb R_+, \ca H)$ for the initial value problem
\begin{align}
  \label{eq:ivp}
  \begin{cases}
  \dot{K}_t = -\nabla f(K), \\
  K(0) = K_0.
  \end{cases}
\end{align}
Moreover, the trajectory $K_t$ depends smoothly on the initial condition $K_0$.
\end{lemma}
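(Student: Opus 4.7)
The plan is to invoke the standard Picard--Lindelöf machinery on the open set $\ca H$, then use the Lyapunov-type decay of $f$ along the flow together with compactness of sublevel sets (Corollary~\ref{cor:compact_sublevel_sets}) to rule out finite-time escape, and finally appeal to the smooth dependence theorem for ODEs with smooth right-hand side. Since $f \in C^{\omega}(\ca H)$, the vector field $-\nabla f$ is in particular $C^{\infty}$ on the open set $\ca H$, so it is locally Lipschitz. The Picard--Lindelöf theorem then yields, for every $K_0 \in \ca H$, a maximal solution $K_t$ defined on some open interval $[0, T_{\max})$ with $K_t \in \ca H$ for all $t < T_{\max}$, and uniqueness of this solution among all $C^1$ curves solving~\eqref{eq:ivp}.

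Next I would show $T_{\max} = +\infty$. Along any solution trajectory, the chain rule gives
\begin{align*}
\frac{d}{dt} f(K_t) \;=\; \langle \nabla f(K_t), \dot{K}_t\rangle \;=\; -\|\nabla f(K_t)\|_F^2 \;\le\; 0,
\end{align*}
so $f(K_t) \le f(K_0)$ for every $t \in [0, T_{\max})$. Thus the entire forward trajectory is contained in the sublevel set $S_{f(K_0)}(f)$, which by Corollary~\ref{cor:compact_sublevel_sets} is compact in $\ca H$. This is the key observation, and it is where the coerciveness from Lemma~\ref{lemma:coercive} pays off: it ensures both that the trajectory does not escape to $\|K\|_F = \infty$ in finite time and that it does not approach the boundary $\partial \ca H$ (where $f$ blows up). By the standard continuation principle for ODEs on open sets, a maximal solution that remains in a fixed compact subset of the domain must be defined for all $t \ge 0$; hence $T_{\max} = +\infty$.

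For smoothness in time, because the vector field $-\nabla f$ is $C^{\infty}$ and $K_t$ is continuous, a bootstrap via $\dot{K}_t = -\nabla f(K_t)$ gives $K_t \in C^{\infty}(\bb R_+, \ca H)$. Smooth dependence on the initial condition $K_0$ then follows from the standard theorem on smooth dependence of solutions of ODEs on initial data (see, e.g., the flow-box theorem), which applies verbatim since $-\nabla f$ is smooth on the open set $\ca H$ and the forward orbit of every $K_0$ remains in a compact subset of $\ca H$.

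I do not anticipate a serious obstacle here; the only nontrivial ingredient is the global-in-time extension, and that is exactly what the compactness of $S_{f(K_0)}(f)$ inside the open set $\ca H$ is designed to give. The subtlety one has to check is that the escape-time argument really precludes approaching $\partial \ca H$, and this uses the first half of Lemma~\ref{lemma:coercive} (namely $f(K_j) \to \infty$ whenever $K_j \to K \in \partial \ca H$), rather than only the unboundedness part.
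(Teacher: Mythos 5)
Your proposal is correct and takes essentially the same route as the paper: the paper's proof simply notes that the vector field $-\nabla f$ is $C^{\infty}$ and then cites the compactness of sublevel sets (Corollary~\ref{cor:compact_sublevel_sets}) together with a standard result on gradient flows (Proposition 3.7 in Helmke--Moore), which encapsulates exactly the Picard--Lindel\"of plus no-finite-time-escape argument you spell out. Your unpacked version, including the observation that the decay of $f$ along trajectories traps the orbit in a compact subset of $\ca H$, is precisely the content of that cited proposition.
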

\begin{proof}
  Note $K \mapsto 2(RK-B^{\top}X)Y$ is $C^{\infty}$ smooth.
  The statement then follows from Corollary~\ref{cor:compact_sublevel_sets} and Proposition $3.7$ in~\cite{helmke2012optimization}.
\end{proof}
We now observe the trajectory of the gradient flow are exponentially stable.
\begin{theorem}
  \label{thrm:gf_exponential_trajectory}
For $K_0 \in \ca H$, denote $K_t$ the solution to the IVP~\eqref{eq:ivp}. Then the trajectory $K_t$ is exponentially stable in Lyapunov sense, i.e., 
\begin{align*}
  \|K_t - K_*\|_F^2 \le c e^{\alpha t} \|K_0 - K_*\|_F^2,
\end{align*}
where $c$ and $\alpha$ are constants determined by system parameters $A, B, Q, R$ and initial condition $K_0$.
\end{theorem}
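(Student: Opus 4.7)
The plan is to treat $V(t) \coloneqq f(K_t) - f(K_*)$ as a Lyapunov functional and combine the gradient dominance and quadratic growth properties of Corollary~\ref{cor:gradient_dominance}. Differentiating along the flow,
$$\dot V(t) = \langle \nabla f(K_t), \dot K_t\rangle = -\|\nabla f(K_t)\|_F^2 \le 0,$$
so $V$ is non-increasing and $K_t$ never leaves the compact sublevel set $S_{f(K_0)}$. On this set, $\tau \coloneqq \min_{K' \in S_{f(K_0)}} \lambda_1(Y(K'))$ is strictly positive, since $K \mapsto \lambda_1(Y(K))$ is continuous on a compact set and $Y(K')$ is positive definite for each $K' \in \ca H$. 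The gradient dominance side of Corollary~\ref{cor:gradient_dominance} therefore furnishes a constant $\mu > 0$ (depending only on $\tau$, $\lambda_1(R)$ and $\|Y_*\|$) such that $\|\nabla f(K)\|_F^2 \ge \mu\,(f(K)-f(K_*))$ throughout $S_{f(K_0)}$; hence $\dot V \le -\mu V$ and Gr\"onwall's inequality yields $V(t) \le V(0) e^{-\mu t}$.

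The quadratic growth side of the same corollary then gives $\tau \lambda_1(R) \|K_t - K_*\|_F^2 \le V(t)$, whence
$$\|K_t - K_*\|_F^2 \;\le\; \frac{V(0)}{\tau \lambda_1(R)}\, e^{-\mu t}.$$
To recast the initial factor in terms of $\|K_0 - K_*\|_F^2$ it remains to dominate $V(0) = f(K_0) - f(K_*)$ by a multiple of $\|K_0 - K_*\|_F^2$. Since $f \in C^{\omega}(\ca H)$, the Hessian $\nabla^2 f$ is continuous and hence bounded by some $L$ on a sufficiently small ball $B(K_*,\delta) \subseteq \ca H$; together with $\nabla f(K_*) = 0$ and Taylor's theorem this gives $f(K) - f(K_*) \le (L/2)\|K - K_*\|_F^2$ for $K \in B(K_*, \delta)$. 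On the compact complement $S_{f(K_0)} \setminus B(K_*,\delta)$ the continuous ratio $(f(K) - f(K_*))/\|K - K_*\|_F^2$ attains a finite maximum $L'$, and setting $M = \max(L/2, L')$ yields $V(0) \le M \|K_0 - K_*\|_F^2$. Combining these pieces gives the claimed estimate with $c = M/(\tau \lambda_1(R))$ and $\alpha = -\mu < 0$.

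The main obstacle is this last step: bounding $V(0) \le M\|K_0 - K_*\|_F^2$ globally on the sublevel set. The obvious shortcut of integrating $\nabla^2 f$ along the segment from $K_*$ to $K_0$ is not available, because $\ca H$ is only known to be contractible and not convex, so the segment may leave $\ca H$ where $f$ is not even defined. The compactness argument sketched above circumvents this by splicing a local Taylor bound at $K_*$ with a continuity/compactness bound on the annular region $S_{f(K_0)} \setminus B(K_*,\delta)$, and requires nothing beyond what is already established in Section~\ref{sec:function_properties}.
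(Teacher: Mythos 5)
Your proof is correct, but the second half takes a genuinely different route from the paper. Both arguments begin identically: $\dot V = -\|\nabla f(K_t)\|_F^2 \le 0$ keeps the trajectory in the compact sublevel set $S_{f(K_0)}$, and gradient dominance (Corollary~\ref{cor:gradient_dominance}) plus Gr\"onwall gives $V(t) \le V(0)e^{-\mu t}$ (this is Lemma~\ref{lemma:convergence_energy}). Where you diverge is in converting the decay of $V$ into a bound on $\|K_t - K_*\|_F^2$: you simply invoke the static quadratic-growth inequality $\tau\lambda_1(R)\|K_t-K_*\|_F^2 \le V(t)$ from the same corollary, whereas the paper runs a dynamic argument --- it first establishes $K_t \to K_*$ from radial unboundedness of $V$, derives the pointwise inequality $\langle K_t-K_*,\dot K_t\rangle \ge -\tfrac{c'}{2}\langle\nabla f(K_t),\dot K_t\rangle$ via the arithmetic-geometric matrix inequality of Proposition~\ref{prop:linalg_facts} combined with the bound $\|K_t-K_*\|_F^2 \le \tfrac{\|Y_*\|}{\tau^2\lambda_1^2(R)}\|\nabla f(K_t)\|_F^2$, and then integrates from $t$ to $\infty$ to obtain $\|K_t-K_*\|_F^2 \le c'\,V(K_t)$. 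Your route is shorter and exposes more clearly that the trajectory bound is just quadratic growth composed with the Lyapunov decay; the paper's integration argument re-derives an equivalent inequality at the cost of needing asymptotic convergence $K_t \to K_*$ up front. In the final normalization step you also do more than the paper: you establish a uniform bound $V(0) \le M\|K_0-K_*\|_F^2$ over the sublevel set by splicing a local Taylor estimate at $K_*$ with a compactness bound on $S_{f(K_0)}\setminus B(K_*,\delta)$, whereas the paper simply absorbs the ratio $(f(K_0)-f(K_*))/\|K_0-K_*\|_F^2$ into the constant $c$ (permissible since $c$ may depend on $K_0$, though degenerate when $K_0 = K_*$). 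Your version is the more careful of the two here, and nothing in your argument requires more than what is already proved in \S~\ref{sec:function_properties}.
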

To prove this theorem, we first observe that the energy functional $V(K_t) \coloneqq f(K_t) - f(K_*)$ decays exponentially to $0$.
\begin{lemma}
  \label{lemma:convergence_energy}
  For $K_0 \in \ca H$, denote $K_t$ the solution to the IVP~\eqref{eq:ivp}. Then
\begin{align*}
  f(K_t) - f(K_*) \le e^{\alpha t} (f(K_0)-f(K_*)),
\end{align*}
where $\beta$ is constant determined by system parameters $A,B,Q,R$ and $K_0$.
\end{lemma}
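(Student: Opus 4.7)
The plan is to use a standard Lyapunov argument: differentiate the energy $V(K_t) = f(K_t) - f(K_\ast)$ along the flow, bound the derivative in terms of $V$ itself using gradient dominance, and then invoke Gr\"onwall's inequality. The chain rule gives
\begin{equation*}
  \frac{d}{dt} V(K_t) = \langle \nabla f(K_t), \dot{K}_t \rangle = -\|\nabla f(K_t)\|_F^2 \le 0,
\end{equation*}
so $V$ is non-increasing along the trajectory. In particular, $f(K_t) \le f(K_0)$ for all $t \ge 0$, which means $K_t$ never leaves the sublevel set $S_{f(K_0)}$. Combined with the well-posedness lemma (global existence in $\ca H$), this confinement to a compact subset of $\ca H$ is essential because the constants in the gradient dominance estimate depend on the sublevel set.

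Next I would invoke Corollary~\ref{cor:gradient_dominance} on $S_{f(K_0)}$. Setting $\tau = \min_{K' \in S_{f(K_0)}} \lambda_1(Y(K'))$ (a strictly positive quantity by continuity and compactness, since $Y(K')$ is positive definite for every stabilizing $K'$), the corollary yields
\begin{equation*}
  \|\nabla f(K_t)\|_F^2 \ge \frac{4\tau \lambda_1(R)}{\|Y_\ast\|}\bigl(f(K_t) - f(K_\ast)\bigr) = \frac{4\tau \lambda_1(R)}{\|Y_\ast\|} V(K_t).
\end{equation*}
Substituting into the derivative computation gives the differential inequality
\begin{equation*}
  \dot V(K_t) \le -\beta\, V(K_t), \qquad \beta \coloneqq \frac{4\tau \lambda_1(R)}{\|Y_\ast\|} > 0.
\end{equation*}

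Finally, Gr\"onwall's inequality applied to this scalar inequality yields $V(K_t) \le e^{-\beta t} V(K_0)$, which is precisely the claimed bound with $\alpha = -\beta$ (negative, as required for decay). The main delicacy is verifying that $K_t$ stays inside $\ca H$ for all $t$ so that the gradient dominance estimate remains valid throughout the flow; this follows because $V$ is monotone non-increasing along the trajectory, keeping $K_t$ in the compact sublevel set $S_{f(K_0)} \subset \ca H$, and coercivity of $f$ (Lemma~\ref{lemma:coercive}) prevents approach to $\partial \ca H$. Everything else is a routine chain-rule computation plus Gr\"onwall.
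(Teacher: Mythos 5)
Your proposal is correct and follows essentially the same route as the paper: compute $\dot V = -\|\nabla f(K_t)\|_F^2 \le 0$, use monotonicity to confine the trajectory to the compact sublevel set $S_{f(K_0)}$, apply Corollary~\ref{cor:gradient_dominance} to get $\dot V \le -\beta V$, and integrate. Your version is slightly more explicit about the constant $\beta = 4\tau\lambda_1(R)/\|Y_*\|$ and about the sign convention in the exponent, but the argument is the same.
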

\begin{proof}
  First observe
  \begin{align*}
    \frac{d V(K_t)}{dt} = \langle \nabla f(K_t), \dot{K_t}\rangle = - \|\nabla f(K_t)\|_F^2 \le 0.
  \end{align*}
  So $f(K_t)-f(K_*)$ is monotonically decreasing. As such, the trajectory $\{K_t: t \ge 0\}$ will be completely contained in the sublevel set $S_{f(K_0)}$. Putting $1/\alpha$ to be the constant in Corollary~\ref{cor:gradient_dominance}, i.e., $f(K)-f(K_*) \le (1/\alpha) \langle \nabla f(K), \nabla f(K)\rangle$, we have
\begin{align*}
  \dot{V}(K_t) = -\langle \nabla f(K), \nabla f(K)\rangle \le - \alpha V(K).
\end{align*}
It follows
\begin{align*}
  f(K_t) - f(K_*) \le e^{-\alpha t} (f(K_0)-f(K_*)).
\end{align*}
\end{proof}
\emph{Proof to Theorem~\ref{thrm:exponential_trajectory} (exponentially stable in Lyapunov sense)} 
\begin{proof}
  We first observe the Lyapunov functional is smooth, positive definite and radially unbounded\footnote{In control literature, it sometimes refers as weakly coercive. But this is equivalent to coercive we have proved.}, thus $K_t$ is globally asymptotic stable, i.e.,
  \begin{align*}
    \lim_{t \to \infty} K_t = K_*.
  \end{align*}
  Now note
  \begin{align}
   \label{eq:trajectory_eq1}
    \begin{split}
      \langle K_t - K_*, \dot{K}_t\rangle &= \langle K_t-K_*, -\nabla f(K_t)\rangle = -\Tr((K_t-K_*)^{\top} \nabla f(K_t)) \\
                                       &= -\frac{1}{2} \Tr\left( (K_t-K_*)^{\top}\nabla f(K_t) + (\nabla f(K_t))^{\top}(K_t-K_*)\right).
                                       \end{split}
  \end{align}
  Observe
  \begin{equation*}
    \resizebox{0.98\textwidth}{!}{$
    {\bf U} \coloneqq (K_t-K_*)^{\top} \nabla f(K_t) +[\nabla f(K_t)]^{\top}(K_t-K_*) \preceq \beta[\nabla f(K_t)]^{\top} \nabla f(K_t) + \frac{1}{\beta} (K_t-K_*)^{\top} (K_t-K_*)$}
  \end{equation*}
  for every $\beta > 0$. Further note if we put $\tau = \min_{K \in S_{f(K_0)}} \lambda_1(Y_K)$, by Theorem~\ref{cor:gradient_dominance}, we have
\begin{align*}
  \|K_t - K_*\|_F^2 \le \frac{\|Y_*\|}{\tau^2 \lambda_1^2(R)} \langle \nabla f(K_t), \nabla f(K_t)\rangle.
  \end{align*}
  Thus putting $\beta = \sqrt{\|Y_*\|_2}{\lambda_1^2(R) \tau^2}$, we have
 \begin{align*}
   \Tr(U) \le 2 \sqrt{\frac{\|Y_*\|_2}{\lambda_1^2(R)\tau^2}} \Tr([\nabla f(K_t)]^{\top} \nabla f(K_t)) \eqqcolon c' \Tr([\nabla f(K_t)]^{\top} \nabla f(K_t)).
 \end{align*}
 It follows
 \begin{align}
   \label{eq:trajectory_eq1}
   \langle K_t-K_*, \dot{K}_t\rangle &= -\langle K_t-K_*, \nabla f(K_t)\rangle \ge - \frac{c'}{2} \langle \nabla f(K_t), \nabla f(K_t)\rangle = -\frac{c'}{2} \langle \nabla f(K_t), \dot{K}_t\rangle. 
 \end{align}
Integrating both sides of~\eqref{eq:trajectory_eq1} from $t$ to $\infty$, we have
 \begin{align*}
   &\int_{t}^\infty \langle K_t - K_*, \dot{K}_t\rangle \ge \int_t^{\infty} -\frac{c'}{2} \langle \dot{K}_t, \nabla f(K_t)\rangle \implies  -\frac{1}{2} \|K_t-K_*\|^2 \ge \frac{c'}{2} (- V(K_t)) \\
   &\implies  \|K_t - K_*\|_F^2 \le c' V(K_t) \le c' e^{-\alpha t} (f(K_0)-f(K_*)).
 \end{align*}
 Putting $c = c' (f(K_0)-f(K_*))/(\|K_0-K_*\|_F^2)$ completes the proof.
\end{proof}
\subsection{Discretization of Gradient Flow}
\label{sec:discretization_gradient_flow}
In this section, we concern how to discretize gradient flow and acquire a convergent algorithm that is suitable for computerized practices. As we have observed in Lemma~\ref{lemma:convergence_energy} and Theorem~\ref{thrm:exponential_trajectory}, both the energy functional and the trajectory converge exponentially to the stationary points, ideally we shall acquire a gradient descent algorithm converging linearly for both the function value and the iterates. In this direction, the forward Euler discretization of the {gradient flow} yields,
\begin{align}
  \label{eq:gradient_descent}
  K_{j+1} = K_j - \eta_j \nabla f(K_j),
\end{align}
where $\eta_j$ is a nonnegative stepsize to be determined. The stepsize (or learning rate) should reflect two principles during the iterative process: (1) stay stabilizing and (2) sufficiently decrease the function value. In following, we shall see that the gradient dominated property leads to a stepsize that results in a sufficient decrease in the function values while the coerciveness guarantees that the acquired feedback gain is stabilizing. To begin, we observe that if $K_{j+1} = K_j - \eta_j \nabla f(K_j)$, provided that $K_j$ and $K_{j+1}$ are both stabilizing\footnote{Mind that this is an important assumption as we shall use the solution to Lyapunov matrix equation; the Lyapunov matrix equation is solvable if $A-BK$ is Hurwitz stable.}, the difference of the value matrix $X_{j+1} - X_j$ can be characterized as follows:
\begin{lemma}
  \label{lemma:value_difference}
  If $K_{j+1} = K_j - \eta_j \nabla f(K_j)$ and $K_j, K_{j+1}$ are both stabilizing, then $Z \coloneqq X_{j+1}-X_j$ solves the Lyapunov matrix equation,
  \begin{align*}
  A_{K_{j+1}}^{\top} Z  + Z A_{K_{j+1}}  - 2\eta_j Y_j^{\top} {\bf N}_{j}^{\top} {\bf N}_j - 2\eta_j {\bf N}_j^{\top} {\bf N}_j Y_j  + 4\eta_j^2 Y_j^{\top} {\bf N}_j^{\top} R  {\bf N}_j Y_j = 0.
    \end{align*}
    where we use simplified notation ${\bf N}_j$ and $Y_j$ to denote ${\bf N}_{K_j}$ and $Y_{K_j}$.
  \end{lemma}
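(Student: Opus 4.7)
The plan is to write the two Lyapunov equations that $X_j$ and $X_{j+1}$ satisfy, subtract them, and then eliminate all quantities other than $Z$ by using (a) the relation $A_{K_{j+1}} = A_{K_j} - B\Delta$ where $\Delta \coloneqq K_{j+1}-K_j = -\eta_j \nabla f(K_j)$, and (b) the gradient formula $\nabla f(K_j) = 2{\bf N}_j Y_j$ from Proposition~\ref{prop:gradient_hessian}.

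Concretely, I will start by substituting $X_{j+1} = X_j + Z$ into the Lyapunov equation
$A_{K_{j+1}}^\top X_{j+1} + X_{j+1} A_{K_{j+1}} + K_{j+1}^\top R K_{j+1} + Q = 0$
and then rewriting $-A_{K_{j+1}}^\top X_j - X_j A_{K_{j+1}}$ as $-A_{K_j}^\top X_j - X_j A_{K_j} + \Delta^\top B^\top X_j + X_j B \Delta$. Invoking the Lyapunov equation for $X_j$ to replace $-A_{K_j}^\top X_j - X_j A_{K_j}$ by $K_j^\top R K_j + Q$, and expanding $K_{j+1}^\top R K_{j+1}$ via $K_{j+1}=K_j+\Delta$, the $Q$ and $K_j^\top R K_j$ terms cancel. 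The right-hand side then reduces to
\begin{align*}
A_{K_{j+1}}^\top Z + Z A_{K_{j+1}} = -{\bf N}_j^\top \Delta - \Delta^\top {\bf N}_j - \Delta^\top R\Delta,
\end{align*}
after grouping $X_j B - K_j^\top R = -{\bf N}_j^\top$ and its transpose.

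Finally, I substitute $\Delta = -2\eta_j {\bf N}_j Y_j$ (using the symmetry $Y_j^\top = Y_j$ only where convenient, and otherwise keeping transposes to match the stated form). The terms $-{\bf N}_j^\top \Delta$ and $-\Delta^\top {\bf N}_j$ produce $2\eta_j {\bf N}_j^\top {\bf N}_j Y_j$ and $2\eta_j Y_j^\top {\bf N}_j^\top {\bf N}_j$ respectively, while $-\Delta^\top R \Delta$ produces $-4\eta_j^2 Y_j^\top {\bf N}_j^\top R {\bf N}_j Y_j$. Moving all terms to the left-hand side yields exactly the claimed identity. There is no real obstacle here; the proof is bookkeeping of matrix algebra, and the only subtlety worth stating explicitly is that the assumption that both $K_j$ and $K_{j+1}$ lie in $\ca H$ is what allows us to legitimately speak of $X_j$ and $X_{j+1}$ as the unique solutions of their respective Lyapunov equations, so that subtracting them makes sense.
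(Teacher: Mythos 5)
Your proposal is correct and follows essentially the same route as the paper: both take the difference of the two Lyapunov equations to obtain the identity $A_{K_{j+1}}^{\top} Z + Z A_{K_{j+1}} + (K_{j+1}-K_j)^{\top}{\bf N}_j + {\bf N}_j^{\top}(K_{j+1}-K_j) + (K_{j+1}-K_j)^{\top}R(K_{j+1}-K_j) = 0$ and then substitute $K_{j+1}-K_j = -2\eta_j {\bf N}_j Y_j$. You merely spell out the algebraic elimination (via $A_{K_{j+1}} = A_{K_j} - B\Delta$) that the paper delegates to the earlier proof of Lemma~\ref{lemma:natural_gradient_dominated}.
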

  \begin{proof}
    Following the same strategy used in the proof of  Lemma~\ref{lemma:natural_gradient_dominated}, namely, taking the difference of the corresponding Lyapunov matrix equations, we observe that (recall ${\bf N}_K = RK -B^{\top} X$),
      \begin{equation} \label{eq:value_difference}
      \begin{split}
        0 &= {A}_{K_{j+1}}^{\top} (X_{j+1}-X_j) + (X_{j+1} - X_j){A}_{K_{j+1}} + (K_{j+1}-K_{j})^{\top}{\bf N}_{j} + {\bf N}_{j}^{\top}(K_{j+1}-K_{j}) \\
        &\quad + (K_{j+1}-K_{j})^{\top} R (K_{j+1}-K_j).
      \end{split}
   \end{equation}
   Substituting $K_{j+1} - K_j = - 2\eta_j {\bf N}_jY_j$, we then have,
\begin{align*}
  A_{K_{j+1}}^{\top} Z  + Z A_{K_{j+1}}  - 2\eta_j Y_j^{\top} {\bf N}_{j}^{\top} {\bf N}_j - 2\eta_j {\bf N}_j^{\top} {\bf N}_j Y_j  + 4\eta_j^2 Y_j^{\top} {\bf N}_j^{\top} R  {\bf N}_j Y_j = 0.
\end{align*}
   \end{proof}
    We now observe that with appropriately chosen $\eta_j$, we can guarantee a sufficient decrease in the function value while ensuring stabilization.
\begin{lemma}
  \label{lemma:gd_function_decrease}
Consider the sequence $\{K_j\}$ generated by~\eqref{eq:gradient_descent} with stepsize $\eta_j$. 
Denote by $\{X_j\}$ the corresponding Lyapunov matrix solutions with respect to $\{K_j\}$. When $$\eta_j < \sqrt{\frac{1}{c} + \frac{b^2}{4c^2}} - \frac{b}{2c}$$ with
\begin{align*}
  b_j = \frac{f(K_j)\lambda_n(R)}{\lambda_1(Q)} + \frac{4\|B {\bf N}_j Y_j\|_2 f(K)}{\lambda_1(Q) \lambda_1({\bf \Sigma})},\quad c_j = \frac{4 \lambda_1(R)\|B {\bf N}_j Y_j\|_2f(K)}{\lambda_1(Q)\lambda_1({\bf \Sigma})},
\end{align*}
then $\{K_j\}$ is stabilizing for every $j \ge 0$. In particular,
\begin{align*}
  f(K_{j+1}) -f(K_j) < 0.
\end{align*}
\end{lemma}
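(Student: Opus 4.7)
My plan is to combine an induction on $j$ with a bootstrapping argument along the gradient ray. Assume inductively that $K_j \in \ca{H}$ and consider the ray $K(\eta) := K_j - \eta \nabla f(K_j) = K_j - 2\eta\,\mathbf{N}_j Y_j$ for $\eta \ge 0$. Since $\ca{H}$ is open, the quantity $\eta^{\star} := \sup\{\eta \ge 0 : K(\eta') \in \ca{H}\text{ for all } \eta' \in [0,\eta]\}$ is strictly positive. The goal is to establish $\eta^{\star} > \eta_j$ and simultaneously $f(K(\eta_j)) < f(K_j)$, which is exactly what the statement asks.

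On $[0,\eta^{\star})$ the matrix $A_{K(\eta)}$ is Hurwitz, so I can apply Lemma~\ref{lemma:value_difference} verbatim (with $\eta$ in place of $\eta_j$) to characterize $Z(\eta) := X(\eta) - X_j$ as the solution of a Lyapunov equation whose inhomogeneity is a quadratic polynomial in $\eta$. Pairing this against $\mathbf{\Sigma}$ via the trace and using cyclicity yields
\[
f(K(\eta)) - f(K_j) = -4\eta\,\Tr\!\big(\mathbf{N}_j^\top \mathbf{N}_j Y_j Y(\eta)\big) + 4\eta^2\,\Tr\!\big(Y_j \mathbf{N}_j^\top R\,\mathbf{N}_j Y_j Y(\eta)\big),
\]
where $Y(\eta)$ solves $A_{K(\eta)} Y + Y A_{K(\eta)}^\top + \mathbf{\Sigma} = 0$. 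Note that the stepsize hypothesis $\eta_j < \sqrt{1/c_j + b_j^2/(4c_j^2)} - b_j/(2c_j)$ is algebraically equivalent to $c_j \eta_j^2 + b_j \eta_j < 1$, a quadratic threshold that should correspond to making the above expression negative.

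The substantive step is converting this identity into an upper bound of the form $f(K(\eta)) - f(K_j) \le (c_j \eta^2 + b_j \eta - 1)\cdot(\textrm{nonneg.\ factor})$. For the linear-in-$\eta$ (negative) term, I plan to lower bound $Y(\eta)$ via a Lyapunov comparison in the spirit of the coerciveness proof (Lemma~\ref{lemma:coercive}), while using $\|X_j\| \le f(K_j)/\lambda_1(\mathbf{\Sigma})$ and $\lambda_1(Q) I \preceq K^\top R K + Q$ to pull out the factor $f(K_j)/\lambda_1(Q)$. For the positive $\eta^2$ term I will treat $Y(\eta)$ as a first-order perturbation of $Y_j$ through the Lyapunov equation satisfied by $Y(\eta) - Y_j$; this perturbation is driven by $\|B\mathbf{N}_j Y_j\|_2$, which is precisely the quantity showing up in both $b_j$ and $c_j$. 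Finally, the $\lambda_n(R)$ factor in $b_j$ enters via $\mathbf{N}_j^\top R \mathbf{N}_j \preceq \lambda_n(R)\,\mathbf{N}_j^\top \mathbf{N}_j$, letting me tie the quadratic term to the same $\Tr(\mathbf{N}_j^\top \mathbf{N}_j Y_j Y(\eta))$ that drives the linear term.

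The induction is closed by coerciveness. For any $\eta \in [0,\eta^{\star})$ with $c_j \eta^2 + b_j \eta \le 1$, the bound above yields $f(K(\eta)) \le f(K_j)$, so the ray stays inside the compact sublevel set $S_{f(K_j)}\subset \ca{H}$ guaranteed by Corollary~\ref{cor:compact_sublevel_sets}. If $\eta^{\star} \le \eta_j$ were possible, then letting $\eta \to (\eta^{\star})^{-}$ keeps $K(\eta)$ in this compact set, so its limit lies in $S_{f(K_j)} \subset \ca{H}$; openness of $\ca{H}$ then contradicts maximality of $\eta^{\star}$. Hence $\eta^{\star} > \eta_j$, $K_{j+1} = K(\eta_j) \in \ca{H}$, and strictness of the hypothesis delivers $f(K_{j+1}) - f(K_j) < 0$. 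I expect the main obstacle to be the third step above: producing Lyapunov perturbation estimates for $Y(\eta)$ whose constants depend only on the $K_j$-level quantities $f(K_j)$ and $\|B\mathbf{N}_j Y_j\|_2$ (not on $K(\eta)$ itself), so that the stepsize criterion is intrinsic to the current iterate.
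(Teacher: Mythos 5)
Your proposal follows essentially the same route as the paper: apply Lemma~\ref{lemma:value_difference} along the ray, reduce the sign of $f(K(\eta))-f(K_j)$ to the quadratic condition $1-b_j\eta-c_j\eta^2>0$ via perturbation bounds on $Y(\eta)$ and $Y'(\eta)$ over the sublevel set (the paper's Appendix~\ref{appendix:bound_Y_prime}), and close the stabilizability claim by a compactness/continuation argument on the ray. Your formulation of that last step via $\eta^{\star}=\sup\{\eta: K(\eta')\in\ca H \ \forall \eta'\le \eta\}$ and openness of $\ca H$ is, if anything, a cleaner rendering of the paper's boundary-intersection argument, but the underlying ideas coincide.
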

Before presenting the proof of this result, we shall first outline its basic idea. The crucial property we shall leverage is the compactness of the sublevel sets, analogous to devising the stepsize. If we start at a stabilizing control gain $K$ where the gradient does not vanish and consider the ray of $\{K - \eta \nabla f(K): \eta \ge 0\}$, by compactness of the sublevel set, there is some $\zeta$ for which $f(K') = f(K)$, where $K' \coloneqq K-\zeta \nabla f(K)$ (See Figure~\ref{fig:level_curve}). What we shall demonstrate is that with the stepsize $\eta_j$ given in the Lemma, if $K_{j+1}$ stays in the {compact} sublevel set, then $K_{j+1}$ must stay in the interior of the sublevel set, namely, $f(K_{j+1}) < f(K_j)$. We then proceed to examine two alternatives: (1) $K_{j+1}$ is not stabilizing, or (2) $K_{j+1}$ is stabilizing but $f(K_{j+1}) > f(K_j)$; either alternative would lead to a contradiction.
\begin{figure}[h!]
  \centering
\begin{tikzpicture}[scale=0.6]
    \path[font={\tiny}]
        (0 , 0)   coordinate (A1) 
        (1  , 1)   coordinate (A2)
        (3   , 1)   coordinate (A3)
        (4   , 3)   coordinate (A4)
        (5, 3) coordinate (A5)
        (4, -1) coordinate (A6)
        (2, -1.5) coordinate (A7)
        (-1, 0) coordinate (B1) 
        (2, 2) coordinate (B2)
        (3, 2) coordinate (B3)
        (4, 4) coordinate (B4)
        (6, 4) coordinate (B5)
        (5, -2) coordinate (B6)
        (3, -2.5) coordinate (B7)
        (2, 0) coordinate (C1)
        (2.5, 0.5) coordinate (C2)
        (3, 0.8) coordinate (C3)
        (4, 1.5) coordinate (C4)
        (3.5, 0) coordinate (C5)
        (3, -1) coordinate (C6)
        (2, -1) coordinate (C7)
        ($(B6)!(B7)!(B1)!0.5!(B7)$) coordinate (AA)
        ($(B7)!16.5!(AA)$) coordinate (BB)
    ;
\draw[black, name path=curve 1] plot [smooth cycle] coordinates {(B1) (B2) (B3) (B4) (B5) (B6) (B7)};
\draw[black] plot [smooth cycle] coordinates {(A1) (A2) (A3) (A4) (A5) (A6) (A7)};
\draw[black] plot [smooth cycle] coordinates {(C1) (C2) (C3) (C4) (C5) (C6) (C7)};
\draw[name path=normal line, red, ->, add=0 and 15] (B7) to (AA);
\fill[red,name intersections={of=curve 1 and normal line,total=\t}]
    \foreach \s in {1,...,\t}{(intersection-\s) circle (2pt)};
\draw node[anchor=south east] at (intersection-2) {$K$};
\draw node[anchor=south east] at (intersection-1) {$K-\zeta \nabla f(K)$};
\draw node at (BB) {$K-\eta \nabla f(K)$};
\end{tikzpicture}
\caption{Gradient descent interacting with the level curves of $f$ (\ref{f(k)}).}  \label{fig:level_curve}
  \end{figure}
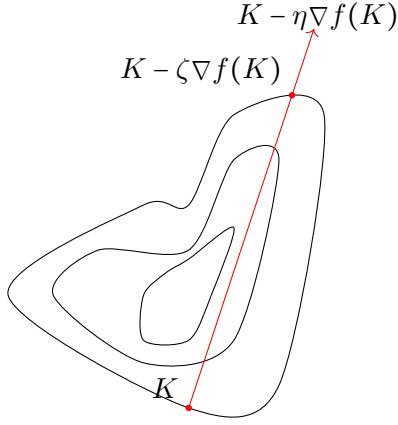
\begin{proof}
 Suppose that the sequence generated by the choice of $\eta_j$ is in fact stabilizing (to be proved subsequently!). This is crucial in our analysis as we use the Lyapunov matrix equation for the closed loop system, admitting a solution when $K_j$ is stabilizing; without this assumption, the matrix $X_j$ is not well-defined. By Lemma~\ref{lemma:value_difference}, we have,
  \begin{align*}
    f(K_{j+1}) - f(K_j) &= \Tr((X_{j+1}-X_j) {\bf \Sigma}) \\
    &= \Tr\left( Y_{j+1} \left( - 2\eta_j Y_j^{\top} {\bf N}_{j}^{\top} {\bf N}_j - 2\eta_j {\bf N}_j^{\top} {\bf N}_j Y_j  + 4 \eta_j^2 Y_j^{\top} {\bf N}_j^{\top} R {\bf N}_j Y_j  \right) \right) \\
                        &\le 4 \eta_j \Tr \left({\bf N}_j^{\top}{\bf N}_j (-Y_j Y_{j+1}  +\eta_j \lambda_n(R) Y_jY_{j+1}Y_j) \right),
    \end{align*}
    where the last equality follows from the cyclic property of matrix trace and Proposition~\ref{prop:linalg_facts}. 
In order to determine a stepsize $\eta_j$ such that $f(K_{j+1}) < f(K_j)$, we consider a univariate function,\footnote{We shall drop the indices in developing stepsizes for simiplicity.}
\begin{align*}
  g(\eta) = \Tr \left( {\bf N}^{\top} {\bf N} ( Y Y(\eta) - \eta \lambda_n(R) Y Y(\eta) Y)\right),
\end{align*} 
where $Y(\eta)$ is the solution of the matrix equation,\footnote{The function is not defined for every $\eta > 0$ but only for an interval for which $K-\eta 2MY$ is stabilizing.}
\begin{align*}
  Y(\eta) = (A-B(K-\eta 2{\bf N}Y )) Y(\eta) + Y(\eta) (A-B(K-\eta 2 {\bf N}Y))^{\top} + {\bf \Sigma}.
\end{align*}
Assuming that the choice of $\eta$ ensures staying in the sublevel set of $f(K)$, i.e., $f(K-\eta 2 {\bf N}Y) \le f(K)$,
we now examine whether $g(\eta) > 0$. By the Mean Value Theorem, we have
\begin{align*}
  g(\eta) = g(0) + \eta g'(\theta),
\end{align*}
for some $\theta \in [0, \eta]$; first note that,
\begin{align*}
  g'(\theta) = \Tr({\bf N}^{\top} {\bf N} ( Y Y'(\theta) - \lambda_n(R) Y Y(\theta) Y - \theta \lambda_n(R) Y Y'(\theta) Y)),
\end{align*}
and hence,
\begin{align*}
  g(\eta) &= \Tr( {\bf N}^{\top} {\bf N} (Y^2 + \eta Y Y'(\theta) - \eta \lambda_n(R) Y Y(\theta) Y - \eta \theta \lambda_n(R) Y Y'(\theta ) Y)) \\
  &= \Tr(Y {\bf N}^{\top} {\bf N} Y( I - \eta a Y(\theta) - \eta^2 \lambda_n(R) Y'(\theta))) + \eta \Tr({\bf N}^{\top}{\bf N}Y Y'(\theta) Y^{-1}Y) \\
  &\ge \Tr( Y {\bf N}^{\top} {\bf N} Y ) (1 - \eta \lambda_n(R) \lambda_n(Y(\theta)) - \eta^2 \lambda_n(R) \|Y'(\theta)\|2 - \eta  \|Y'(\theta)\|_2 \|Y^{-1}\|_2),
\end{align*}
where the last inequality follows from Von Neumann's trace inequality~\cite{horn2012matrix}.\footnote{An explicit form of the inequality we use here can be found in~\cite{mori1988comments}.} 
Noting that $\|Y^{-1}\|_2 = 1/\lambda_1(Y)$, ensuring that $g(\eta) > 0$ reduces to characterizing $\eta$ for which,
\begin{align*}
  1 - \eta a \lambda_n(Y(\theta)) - \eta^2 \lambda_n(R) \|Y'(\theta)\|_2 - \eta \frac{\|Y'(\theta)\|_2}{\lambda_1(Y)} > 0.
\end{align*}
The largest eigenvalue of $Y(\theta)$ and largest singular value of $Y'(\theta)$ over the sublevel set $\{K': f(K') \le f(K)\}$ can be bounded as,
\begin{align*}
  \lambda_n(Y(\theta)) \le \frac{f(K)}{\lambda_1(Q)}, \quad \|Y'(\theta)\|_2 \le \frac{4\|B{\bf N}Y\|_2f(K)}{\lambda_1(Q) \lambda_1({\bf \Sigma})};
\end{align*}
the proofs of the inequalities are deferred to Appendix~\ref{appendix:bound_Y_prime}.
Noting
\begin{align*}
  b \ge a \lambda_n(Y(\theta)) + \frac{\|Y'(\theta)\|}{\lambda_1(Y)},\qquad c  \ge a\|Y'(\theta)\|_2;
\end{align*}
it now suffices to determine $\eta$ such that $1 - b\eta - c \eta^2 > 0 $. As such, we require that,
\begin{align*}
  \eta < \sqrt{\frac{1}{c} + \frac{b^2}{4c^2}} - \frac{b}{2c}.
\end{align*}
It remains to show that if $\eta_j$ is chosen as above, our two opening assumptions are valid: (1) the sequence $\{K_j\}$ is stabilizing, and (2) $K_{j+1}$ remains in the sublevel set of $f(K_j)$. We prove these by contradiction. First, note that we can not have $K_{j+1}$ be stabilizing while $K_{j+1}\notin S_{f(K_j)}$. Suppose that this is the case. The sublevel set $S_{K_j} \coloneqq \{K: f(K) \le f(K_j)\}$ is compact and the ray $\{K_j - \zeta \nabla f(K_j) : \zeta \ge 0\}$ intersects the boundary of $S_{K_j}$ for some $\zeta > 0$; suppose that $K' = K_j - \zeta' \nabla f(K_j) \in \partial S_{K_j}$, where $\zeta'$ is the smallest positive real number for which this intersection occurs, i.e., the first time the ray intersects the boundary. 
It is clear $\zeta'$ must be greater than $\eta_j$ as otherwise we would have $\zeta' < \eta_j$ and $f(K_j - \zeta' \nabla f(K_j)) < f(K_j)$, a contradiction\footnote{Note what we proved above is: if a stepsize is strictly smaller than $\eta_j$, the function value is strictly decreasing if the gradient is not vanishing.}.  Now we prove that $K_j$ is stabilizing. If not, we must have $$[0, \eta_j) \subseteq [0, \zeta'],$$ since otherwise, there exists $s' < \eta_j$ such that $s' = \zeta'$ and $f(K') = f(K_j)$, which would also contradict the strict inequality $f(K') < f(K_j)$.
\end{proof}
With the lemma, it is straightforward to conclude the convergence rate.
\begin{theorem}
  \label{thrm:gd_linear}Putting $d_j = \max(b_j, c_j)$ where $b_j, c_j$ are given in Lemma~\ref{lemma:gd_function_decrease},
if $\eta_j = \sqrt{\frac{1}{3d_j} + \frac{1}{9}} - \frac{1}{3}$, we have,
\begin{align*}
  f(K_{j}) - f(K_*) \le q^j (f(K_0)-f(K_*)), \text{ and } \|K_j-K_*\|_F \le c_1 q^{j/2},
\end{align*}
where $q \in (0, 1)$ and $c_1 > 0$
are constants.
\end{theorem}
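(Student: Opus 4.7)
The plan is to combine the sufficient-decrease estimate established inside the proof of Lemma~\ref{lemma:gd_function_decrease} with the gradient-dominance inequality of Corollary~\ref{cor:gradient_dominance} to show that the gap $V_j := f(K_j) - f(K_*)$ contracts by a uniform factor $q \in (0,1)$ at each iteration. The monotonicity $f(K_{j+1}) < f(K_j)$ already guaranteed by Lemma~\ref{lemma:gd_function_decrease} keeps the entire trajectory $\{K_j\}$ trapped in the compact sublevel set $S_{f(K_0)}$, and it is this compactness (Corollary~\ref{cor:compact_sublevel_sets}) that will let us convert per-iterate estimates into iteration-independent constants.

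First I would revisit the computation inside the proof of Lemma~\ref{lemma:gd_function_decrease} to extract a quantitative lower bound on the function-value decrease of the form $f(K_j) - f(K_{j+1}) \ge \kappa\, \eta_j\, \|\nabla f(K_j)\|_F^2$ for some $\kappa > 0$. The stated choice $\eta_j = \sqrt{1/(3d_j) + 1/9} - 1/3$ is the positive root of a quadratic designed precisely so that, after substituting $d_j = \max(b_j, c_j)$, the factor $(1 - b_j \eta_j - c_j \eta_j^2)$ arising in that decrease estimate stays bounded below by a fixed positive constant; this yields the desired clean bound in terms of $\eta_j$ and the squared gradient norm.

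Next I would apply Corollary~\ref{cor:gradient_dominance} on the sublevel set $S_{f(K_0)}$ to replace $\|\nabla f(K_j)\|_F^2$ by $\alpha V_j$ for a positive constant $\alpha$ uniform over $S_{f(K_0)}$. Combined with the previous step, this gives $V_{j+1} \le (1 - \alpha \kappa \eta_j)\, V_j$. Because $K_j \in S_{f(K_0)}$ for every $j$ and $d_j$ is a continuous function of $K_j$ built from $f(K_j)$, $\lambda_1(\mathbf{\Sigma})$, $\lambda_1(Q)$, $\lambda_1(R)$, and $\|B \mathbf{N}_j Y_j\|_2$, compactness of the sublevel set delivers a uniform upper bound $d_{\max} < \infty$ and hence a uniform lower bound $\eta_{\min} > 0$ on the stepsizes. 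Setting $q := 1 - \alpha \kappa \eta_{\min} \in (0,1)$ and iterating yields $V_j \le q^j V_0$, which is the first claim.

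The iterate bound then drops out of the quadratic-growth half of Lemma~\ref{lemma:natural_gradient_dominated}, which asserts $\lambda_1(R)\, \lambda_1(Y(K_j))\, \|K_j - K_*\|_F^2 \le V_j$. Since the map $K \mapsto \lambda_1(Y(K))$ is continuous and strictly positive on the compact set $S_{f(K_0)}$, it attains a positive minimum $\tau > 0$ there, yielding $\|K_j - K_*\|_F \le c_1 q^{j/2}$ with $c_1 = \sqrt{V_0 / (\lambda_1(R)\, \tau)}$. The main technical hurdle is the first step: tracing the algebra inside Lemma~\ref{lemma:gd_function_decrease} carefully enough to confirm that the specific $\eta_j$ in the statement produces a decrease of the clean form $\kappa\, \eta_j\, \|\nabla f(K_j)\|_F^2$ rather than some weaker implicit bound; once that coefficient is pinned down, the remaining chain---gradient dominance, compactness of sublevel sets, quadratic growth---is a standard synthesis.
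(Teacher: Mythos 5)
Your argument for the function--value rate is essentially the paper's: extract the sufficient--decrease estimate $f(K_j)-f(K_{j+1})\ge \|\nabla f(K_j)\|_F^2\,\eta_j(1-d_j\eta_j-d_j\eta_j^2)$ from Lemma~\ref{lemma:gd_function_decrease}, convert $\|\nabla f(K_j)\|_F^2$ into a multiple of $f(K_j)-f(K_*)$ via Corollary~\ref{cor:gradient_dominance}, and use compactness of $S_{f(K_0)}$ (Proposition~\ref{prop:stepsize_bound}) to make the contraction factor uniform; the one point you should make explicit is that with the stated optimal $\eta_j$ the residual factor equals $d_j\eta_j(1+2\eta_j)$, so you need $d_j$ bounded \emph{away from zero} on the sublevel set (not just $\eta_j$ bounded below) to get your uniform $\kappa$ --- this holds since $b_j\ge f(K_*)\lambda_n(R)/\lambda_1(Q)>0$, and the paper relegates it to a footnote. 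Where you genuinely diverge is the iterate bound: the paper bounds $\|K_{j+1}-K_j\|_F$ by $\eta_j\|\nabla f(K_j)\|_F$, shows the sequence is Cauchy, identifies its limit with the unique stationary point $K_*$, and telescopes a geometric series; you instead invoke the quadratic--growth half of Lemma~\ref{lemma:natural_gradient_dominated} together with $\tau=\min_{K\in S_{f(K_0)}}\lambda_1(Y(K))>0$ to read off $\|K_j-K_*\|_F\le\sqrt{V_0/(\lambda_1(R)\tau)}\,q^{j/2}$ directly from the function--value bound. Your route is shorter and avoids the (slightly underjustified) step in the paper where the Cauchy limit must be argued to coincide with $K_*$; the paper's route has the minor advantage of not needing the quadratic--growth inequality at all, so it would survive in settings where only gradient dominance is available. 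Both yield the stated conclusion.
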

\begin{remark}
$\eta_j$ is acquired by noting that according to Lemma~\ref{lemma:gd_function_decrease},
\begin{align*}
  f(K_{j})-f(K_{j+1}) \ge 4\Tr(Y_j {\bf N}_j^{\top} {\bf N}_j Y_j)(\eta_j - d_j \eta_j^2 - d_j \eta_j^3).
  \end{align*}
  Maximizing $\eta_j-d_j \eta_j^2 - d_j \eta_j^3$ while esnuring $1-d_j \eta_j - d_j \eta_j^2 > 0$ yields the desired quantity. 
  \end{remark}
\begin{proof}
   Note the proposed stepsize rule satisfies $1-2 d_j \eta_j - 3 d_j \eta_j^2 =  0$. Putting $r_j = f(K_j)-f(K_*)$,
  we observe that with the chosen stepsize $\eta_j$,
\begin{align*}
  r_{j} - r_{j+1}  
                  &\ge 4\Tr(Y_j M_j^{\top}M_j Y_j) (d_j \eta_j^2 + 2d_j^2 \eta_j^3 ) \eqqcolon \nu_j r_j.
\end{align*}
It follows that,
\begin{align*}
  r_{j+1} \le (1-\nu_j) r_j \eqqcolon q_j r_j.
\end{align*}By Proposition~\ref{prop:stepsize_bound}, the proposed stepsize in Lemma~\ref{lemma:gd_function_decrease} is bounded below, i.e., not vanishing.
Hence, the sequence $\{q_j\}$ is upper bounded away from $1$\footnote{It is rather clear $d_j$ is lower bounded away from $0$. So $d_j \eta_j^2 + 2 d_j^2 \eta_j^3 > 0$.},
\begin{align*}
  q_j \le q < 1.
  \end{align*}
Thereby, 
\begin{align*}
  f(K_{j}) - f(K_*) \le q^j \left(f(K_0)-f(K_*)\right).
\end{align*}
To show the convergence of the iterates, we first observe that,
\begin{align*}
  \|K_{j+1} - K_j\|_F^2 &= \eta_j^2 \|\nabla f(K_j)\|_F^2 \le \frac{\eta_j^2}{\tau} r_j \\
                      &\le \frac{\eta_j}{\tau} q^j r_0,
\end{align*}
with $\tau$ is as in (\ref{tau}).
It is clear the sequence $\{\eta_j\} \subseteq \bb R_+$ is upper bounded, denoting as $\mu$, namely $\mu \ge \eta_j$ for every $j$.
The sequence of iterates $\{K_j\}$ is thus Cauchy and converges to some stationary point; however, there is only one stationary point $K_*$. This implies that $\lim_{j \to \infty} K_j = K_*$ and hence,
\begin{align*}
  \|K_j - K_*\|_F &= \lim_{n \to \infty} \|K_j - K_n\|_F \le \sum_{j=n}^{\infty}\|K_{j+1}-K_{j}\|_F \\
  &\le  \sqrt{\frac{\mu}{\tau}} r_0 \sum_{j=n}^{\infty} q^{j/2} = \frac{\sqrt{\frac{\mu}{\tau}} r_0 }{1-\sqrt{q}} \, q^{j/2}.
\end{align*}
\end{proof}
    \begin{remark}
In our simulations, the linear rate is much better than what is estimated by the above result. 
         \end{remark}
\section{Natural Gradient Flow on $\ca H$}
\label{sec:ngf}
If we inspect the proof of the Lyapunov stability of the gradient system (Theorem~\ref{thrm:exponential_trajectory}), the positive definite matrix $Y$ does not affect the qualitative nature of these properties. Nevertheless, the matrix $Y$ introduces a constant factor in the corresponding upper bounds. In this section, we consider a family of gradient systems of the form,
\begin{align}
  \label{eq:ngf} \dot{K}_t &= -\nabla f(K_t) Y_t^{-\gamma} = -2{\bf N}_tY_t^{1-\gamma}, 
\end{align}
where $\gamma > 0$ is (real) scalar, ${\bf N}_t$ and $Y_t$ are simplified notation for ${\bf N}_{K_t}$ and $Y_{K_t}$.
As discussed subsequently,  such parameterized gradient system can achieve better convergence rate for different values of $\gamma$.
Viewing such a gradient flow in the context of a flow on a Riemannian manifold is particularly pertinent.\footnote{We will see that in our case, it is better to choose $\gamma$ other than $\gamma=1$.}
In fact, as $\ca H$ is open, it is a \emph{submanifold} in $\bb M_{m \times n}(\bb R)$. The inner product induced by $Y^{\gamma}$, i.e., $\langle M, N \rangle_{Y^{\gamma}} = \Tr(M^{\top} NY^{\gamma})$ is a well-defined Riemannian metric over $\ca H$. 
\begin{proposition}
  Over $\ca H$, the inner product $\langle \cdot, \cdot\rangle_{Y(K)^\gamma}$ induces a Riemannian metric.
\end{proposition}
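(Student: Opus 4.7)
The plan is to verify the two defining conditions of a Riemannian metric on the open submanifold $\ca H \subseteq \bb M_{m \times n}(\bb R)$: (i) for every $K \in \ca H$, the bilinear form $\langle M, N\rangle_{Y(K)^\gamma} = \Tr(M^\top N\, Y(K)^\gamma)$ is a genuine inner product on the tangent space $T_K \ca H \cong \bb M_{m \times n}(\bb R)$, and (ii) the assignment $K \mapsto \langle \cdot, \cdot\rangle_{Y(K)^\gamma}$ varies smoothly in $K$.

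First I would establish that $Y(K) \succ 0$ for every $K \in \ca H$. Since $\{x_0^1, \dots, x_0^n\}$ is a spanning set, ${\bf \Sigma} = \sum_j x_0^j (x_0^j)^\top \succ 0$. Because $K \in \ca H$ means $A_K$ is Hurwitz, part 3 of Proposition~\ref{prop:linalg_facts} applied to the Lyapunov equation $A_K Y + Y A_K^\top + {\bf \Sigma} = 0$ yields $Y(K) \succ 0$. Consequently the real power $Y(K)^\gamma$ is well-defined via the spectral (functional) calculus and is itself symmetric positive definite.

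Next, I would verify the inner product axioms. Bilinearity is immediate from the linearity of trace and matrix multiplication. Symmetry follows from the fact that $Y(K)^\gamma$ is symmetric together with the cyclicity of trace:
\begin{align*}
\Tr(M^\top N\, Y^\gamma) = \Tr\bigl((M^\top N\, Y^\gamma)^\top\bigr) = \Tr(Y^\gamma N^\top M) = \Tr(N^\top M\, Y^\gamma).
\end{align*}
For positive definiteness, write $Y^\gamma = Y^{\gamma/2} Y^{\gamma/2}$ with $Y^{\gamma/2} \succ 0$ invertible; then
\begin{align*}
\langle M, M\rangle_{Y^\gamma} = \Tr(M^\top M\, Y^{\gamma/2} Y^{\gamma/2}) = \Tr\bigl((M Y^{\gamma/2})^\top (M Y^{\gamma/2})\bigr) = \|M Y^{\gamma/2}\|_F^2 \ge 0,
\end{align*}
with equality only when $M Y^{\gamma/2} = 0$, hence $M = 0$.

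Finally, smoothness of $K \mapsto \langle \cdot, \cdot\rangle_{Y(K)^\gamma}$ reduces to smoothness of $K \mapsto Y(K)^\gamma$. The map $K \mapsto Y(K)$ is real analytic on $\ca H$ by the same Cramer's rule argument used in the proof that $f \in C^\omega(\ca H)$, since $Y(K)$ is the unique solution of $(I \otimes A_K + A_K \otimes I)\vect(Y) = -\vect({\bf \Sigma})$. The map $Z \mapsto Z^\gamma$ is $C^\infty$ on the open cone $\bb S_n^{++}$ (holomorphic functional calculus on an open set containing the spectrum of each $Y(K)$). The composition is therefore smooth, completing the proof. The only point requiring a mild amount of care is the non-integer case $\gamma \notin \bb N$, where one must invoke functional calculus on $\bb S_n^{++}$ rather than pure polynomial manipulation; this is the step I expect to be the most notationally delicate, but it is standard.
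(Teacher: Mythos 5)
Your proof is correct and follows essentially the same route as the paper: establish $Y(K)\succ 0$ from the Lyapunov equation with ${\bf \Sigma}\succ 0$, and reduce smoothness of the metric to smoothness of $K\mapsto Y(K)$ via the vectorized (Cramer's rule) formula. You additionally spell out the inner-product axioms and the smoothness of $Z\mapsto Z^{\gamma}$ on $\bb S_n^{++}$, which the paper leaves implicit; these are welcome details but not a different argument.
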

\begin{proof}
  Note that $Y(K)$ is positive definite for every $K \in \ca H$.
  It suffices to show that $Y(K)$ varies smoothly with $K$. But this follows from,
  \begin{align*}
    \vect(Y) = (I \otimes A_K - A_K \otimes I)^{-1} \vect({\bf \Sigma}).
  \end{align*}
\end{proof}
We can thus view $\ca H$ as a Riemannian manifold with metric induced by $\langle \cdot, \cdot \rangle_{Y^{\gamma}}$; the function $f: \ca H \to \bb R$ is then a scalar-valued function defined on this manifold. Let us now consider the gradient of $f$, denoted by $\text{grad} f$, with respect to the Riemannian metric induced by $\langle \cdot, \cdot\rangle_{Y^{\gamma}}$ on $\ca H$\footnote{We will use standard notations in Riemmaninan manifold theory~\cite{tu2017differential}. For example, $df$ will denote $1$-form and $\text{grad} f$ will denote the gradient with respect to a Riemannian metric. As we are working in Euclidean space, we implicitly identiy all tangent vectors by stardard isomorphism, i.e., $T_K {\ca H} \approx \bb M_{m \times n}(\bb R)$.}.
\begin{proposition}
  Over the Riemannian manifold $\left(\ca H, \langle \cdot, \cdot\rangle_{Y^{\gamma}}\right)$, $\text{grad} f = 2(RK-B^{\top} X) Y^{1-\gamma}$.
\end{proposition}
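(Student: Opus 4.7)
The plan is to use the defining property of the Riemannian gradient: for any scalar field $f$ on a Riemannian manifold $(\ca H, \langle \cdot, \cdot \rangle_{Y^\gamma})$, the gradient $\text{grad}\,f$ is the unique tangent vector field satisfying
\[
df_K[E] \;=\; \langle \text{grad}\,f(K),\, E\rangle_{Y^{\gamma}} \;=\; \Tr\bigl((\text{grad}\,f(K))^\top E\, Y^{\gamma}\bigr)
\]
for every tangent vector $E \in T_K \ca H \cong \bb M_{m\times n}(\bb R)$. Since $\ca H$ is an open subset of $\bb M_{m\times n}(\bb R)$, the differential $df_K[E]$ coincides with the Euclidean directional derivative, which by Proposition~\ref{prop:gradient_hessian} equals $\langle \nabla f(K), E\rangle = \Tr\bigl(2Y(RK-B^\top X)^\top E\bigr)$.

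Equating the two expressions and using the cyclic property of trace, I would write
\[
\Tr\bigl(Y^{\gamma}(\text{grad}\,f(K))^\top E\bigr) \;=\; \Tr\bigl(2Y(RK-B^\top X)^\top E\bigr)
\]
for every $E$. Since this must hold for all $E$, matching the matrix factors gives $Y^{\gamma}(\text{grad}\,f(K))^\top = 2Y(RK-B^\top X)^\top$. Because $Y \succ 0$ on $\ca H$, the matrix $Y^\gamma$ is invertible and commutes with $Y$ (both are symmetric positive definite and share the same eigenbasis), so I may left-multiply by $Y^{-\gamma}$ to obtain
\[
(\text{grad}\,f(K))^\top \;=\; 2 Y^{1-\gamma}(RK-B^\top X)^\top,
\]
and transposing yields $\text{grad}\,f(K) = 2(RK-B^\top X)Y^{1-\gamma}$, as desired.

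There is no real obstacle here: the argument is just a linear-algebraic inversion of the metric relative to the Euclidean one, and it works uniformly in $\gamma$ because $Y(K) \succ 0$ everywhere on $\ca H$. The only small care is to note that $Y^\gamma$ (defined via the spectral calculus for symmetric positive definite matrices) commutes with $Y$, which justifies collapsing $Y^{-\gamma}Y$ into $Y^{1-\gamma}$. Uniqueness of $\text{grad}\,f$ follows from the non-degeneracy of the metric $\langle \cdot, \cdot\rangle_{Y^\gamma}$, which was established in the preceding proposition.
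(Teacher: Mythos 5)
Your proposal is correct and follows essentially the same route as the paper: the paper simply verifies the defining identity $df(K)[E] = 2\Tr(E^{\top}{\bf N}_K Y) = \langle E, 2{\bf N}_K Y^{1-\gamma}\rangle_{Y^{\gamma}}$ directly, while you derive the same formula by inverting the metric, which is the identical computation read in the other direction. Both hinge on the Euclidean gradient formula $\nabla f(K)=2{\bf N}_K Y$ from Proposition~\ref{prop:gradient_hessian} and the fact that $Y^{\gamma}$ and $Y$ commute, so there is nothing to add.
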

\begin{proof}
  It suffices to note that,
  \begin{align*}
    df(K)[E] = 2\Tr(E^T {\bf N}_K Y) = \langle E, 2{\bf N}_KY^{1-\gamma}\rangle_{Y^{\gamma}}.
  \end{align*}
\end{proof}
Now the gradient flow of interest on this manifold is,
\begin{align*}
  \dot{K}_t = - \text{grad} f(K_t).
\end{align*}
We observe that with respect to the Riemannian metric, the potential function decays at an exponential rate (compare the difference with the gradient flow in Lemma~\ref{lemma:convergence_energy}).
\begin{lemma}
  \label{lemma:convergence_energy_ngf}
  For $K_0 \in \ca H$, denote $K(t)$ as the solution of~\eqref{eq:ngf}. Then
\begin{align*}
  f(K_t) - f(K_*) \le e^{- r t} (f(K_0)-f(K_*)),
\end{align*}
where $r$ is a constant determined by the system parameters $A,B,Q,R$ and $K_0$.
\end{lemma}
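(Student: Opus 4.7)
The plan is to mirror the Lyapunov argument used in Lemma~\ref{lemma:convergence_energy} for the ordinary gradient flow, but now working with the Riemannian metric $\langle \cdot,\cdot\rangle_{Y^\gamma}$. Put $V(K_t) = f(K_t) - f(K_*)$. First I would differentiate $V$ along the trajectory of~\eqref{eq:ngf} using the chain rule:
\begin{align*}
  \dot V(K_t) = \langle \nabla f(K_t), \dot K_t\rangle = -\Tr\!\bigl(\nabla f(K_t)^\top \nabla f(K_t)\, Y_t^{-\gamma}\bigr) = -4\Tr\!\bigl({\bf N}_t^\top {\bf N}_t\, Y_t^{2-\gamma}\bigr),
\end{align*}
where I used $\nabla f(K_t) = 2{\bf N}_t Y_t$ from Proposition~\ref{prop:gradient_hessian} together with cyclicity of the trace. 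Since $Y_t \succ 0$, this quantity is non-positive, so $V$ is monotonically non-increasing along the flow and consequently $\{K_t:t\ge 0\} \subseteq S_{f(K_0)}$. This is the same containment argument as in Lemma~\ref{lemma:convergence_energy}, and it is what lets me use compactness subsequently.

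Next I would convert the dissipation $-4\Tr({\bf N}_t^\top {\bf N}_t Y_t^{2-\gamma})$ into a quantity comparable to $V(K_t)$ itself. Because $S_{f(K_0)}$ is compact (Corollary~\ref{cor:compact_sublevel_sets}) and $K\mapsto Y(K)$ is continuous and positive definite on $\ca H$, there exist constants $0<\tau_1\le \tau_2$ with
\begin{align*}
  \tau_1 I \preceq Y_t \preceq \tau_2 I \qquad \text{for all } t\ge 0.
\end{align*}
Therefore $Y_t^{2-\gamma} \succeq \kappa_\gamma I$ with $\kappa_\gamma := \min(\tau_1^{2-\gamma},\tau_2^{2-\gamma})>0$, and it follows that
\begin{align*}
  \Tr({\bf N}_t^\top {\bf N}_t Y_t^{2-\gamma}) \ge \kappa_\gamma \Tr({\bf N}_t^\top {\bf N}_t).
\end{align*}
Now I invoke the gradient-dominance estimate of Lemma~\ref{lemma:natural_gradient_dominated}, which gives $\Tr({\bf N}_t^\top {\bf N}_t) \ge \lambda_1(R)\|Y_*\|^{-1}\bigl(f(K_t)-f(K_*)\bigr) = \lambda_1(R)\|Y_*\|^{-1} V(K_t)$. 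Combining,
\begin{align*}
  \dot V(K_t) \le -\frac{4\kappa_\gamma \lambda_1(R)}{\|Y_*\|}\, V(K_t) \eqqcolon -r\, V(K_t).
\end{align*}

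Applying the scalar Gr\"onwall inequality immediately yields $V(K_t) \le e^{-rt} V(K_0)$, which is the claim with the explicit constant
$r = 4\kappa_\gamma \lambda_1(R)/\|Y_*\|$
that depends only on $A,B,Q,R$ and $K_0$ (through the sublevel set $S_{f(K_0)}$). The main subtlety I expect is the case analysis hidden in $\kappa_\gamma$: when $\gamma \le 2$ the bound is driven by $\tau_1^{2-\gamma}$ (small eigenvalues of $Y_t$), while for $\gamma > 2$ it is driven by $\tau_2^{2-\gamma}$ (large eigenvalues of $Y_t$). Either way, compactness of the sublevel set guarantees $\kappa_\gamma$ is a strictly positive constant, so this is a bookkeeping obstacle rather than a conceptual one. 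Aside from this, the argument is almost identical to the proof of Lemma~\ref{lemma:convergence_energy}; the only substantive change is that the natural gradient introduces the factor $Y_t^{2-\gamma}$ inside the dissipation, which is precisely what one must uniformly bound.
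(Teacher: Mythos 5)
Your proof is correct and takes essentially the same route as the paper: compute $\dot V(K_t) = -4\Tr({\bf N}_t^{\top}{\bf N}_t Y_t^{2-\gamma})$ along the flow, lower-bound this dissipation over the compact sublevel set $S_{f(K_0)}$ using the estimate of Lemma~\ref{lemma:natural_gradient_dominated}, and integrate. The only (immaterial) difference is that you make the uniform spectral bounds on $Y_t$ and the case split on $\gamma \lessgtr 2$ explicit, whereas the paper folds this into taking $r=\min_{K\in S_{f(K_0)}}\lambda_1(R)\,\lambda_1(Y^{2-\gamma})/\lambda_n(Y_*)$ directly.
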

\begin{proof}
  The proof proceeds similar to Lemma~\ref{lemma:convergence_energy}. We only need to note that with respect to the Riemannian metric,
  \begin{align*}
    \dot{V}(K_t) &= df(K_t) \dot{K}(t) = \langle \text{grad} f(K_t), \dot{K_t}\rangle_{Y^\gamma} = - 4\Tr({\bf N}_t^{\top} {\bf N}_tY_t^{2-\gamma}).
  \end{align*}
 According to Lemma~\ref{lemma:natural_gradient_dominated}, we now have,
  \begin{align*}
    \dot{V}(K_t) \le -\frac{\lambda_1(R)\lambda_1(Y_t^{2-\gamma})}{\lambda_n(Y_*) } V(K_t) \le -r V(K_t),
  \end{align*}
  where
  \begin{align*}
    r = \min_{K \in S_{f(K_0)}} \frac{\lambda_1(R) \lambda_1(Y_t^{2-\gamma})}{\lambda_n(Y_*)}.
    \end{align*}
    Note the compactness of the sublevel set $S_{f(K_0)}$ justifies the $\min$ operation and guarantees $r>0$.
  \end{proof}
Over the Riemannian manifold, the Lyapunov functional converges exponentially to the origin via the natural gradient flow, which leads to an exponentially stable trajectory.
\begin{theorem}
  Over $\left(\ca H, \langle \cdot, \cdot\rangle_{Y^{\gamma}}\right)$, for the natural gradient flow~\eqref{eq:ngf}, the energy functional $f(K_t)-f(K_*)$ converges exponentially to the origin. Moreover, the trajectory $K_t$ is exponentially stable in the sense of Lyapunov.
\end{theorem}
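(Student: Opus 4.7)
The first assertion is essentially a restatement of Lemma~\ref{lemma:convergence_energy_ngf}, so the plan is to record it and then focus on the Lyapunov stability of the trajectory. For the second assertion, I plan to bypass the integration-of-inner-product argument used in the proof of Theorem~\ref{thrm:gf_exponential_trajectory} and instead combine exponential decay of the energy functional with the quadratic growth inequality already established in Lemma~\ref{lemma:natural_gradient_dominated}. This should be shorter and cleaner in the Riemannian setting, since the natural metric only appears in the bound on the energy and not in the quadratic growth estimate, which is intrinsic to $f$.

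Concretely, I would first observe that $\dot{V}(K_t) = -\langle \text{grad} f(K_t), \text{grad} f(K_t)\rangle_{Y^{\gamma}} \le 0$, so $V$ is non-increasing along the flow and the entire trajectory $\{K_t: t \ge 0\}$ stays inside the compact sublevel set $S_{f(K_0)}$ granted by Corollary~\ref{cor:compact_sublevel_sets}. Next, I would invoke Lemma~\ref{lemma:natural_gradient_dominated} applied at $K_t$, which gives
\begin{equation*}
\lambda_1(Y_t)\lambda_1(R)\|K_t - K_*\|_F^2 \le f(K_t) - f(K_*).
\end{equation*}
The map $K \mapsto \lambda_1(Y(K))$ is continuous and strictly positive on $\ca H$, and $S_{f(K_0)}$ is compact, so $\tau := \min_{K \in S_{f(K_0)}} \lambda_1(Y(K)) > 0$. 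Consequently, for every $t \ge 0$,
\begin{equation*}
\|K_t - K_*\|_F^2 \le \frac{1}{\tau\,\lambda_1(R)}\bigl(f(K_t) - f(K_*)\bigr).
\end{equation*}

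Finally, I would substitute the exponential bound from Lemma~\ref{lemma:convergence_energy_ngf}, $f(K_t) - f(K_*) \le e^{-rt}(f(K_0) - f(K_*))$, to conclude
\begin{equation*}
\|K_t - K_*\|_F^2 \le \frac{f(K_0)-f(K_*)}{\tau\,\lambda_1(R)\,\|K_0 - K_*\|_F^2}\, e^{-rt}\,\|K_0 - K_*\|_F^2,
\end{equation*}
which is the required exponential stability in the Lyapunov sense with constants $c = (f(K_0)-f(K_*))/(\tau\lambda_1(R)\|K_0-K_*\|_F^2)$ and decay rate $r$. The only step that needs care is verifying that $\tau > 0$, which is the only place where compactness of the sublevel set and smoothness of $K \mapsto Y(K)$ are used; this is routine once one recalls from the proof of Corollary~\ref{cor:gradient_dominance} that this minimum is attained and strictly positive. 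I do not anticipate any real obstacle here, because the Riemannian metric $Y^{\gamma}$ only enters through the rate constant $r$ supplied by the preceding lemma, and the trajectory bound itself is driven by the intrinsic quadratic growth of $f$.
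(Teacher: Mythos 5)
Your proposal is correct, but it takes a genuinely different route from the paper for the trajectory bound. The paper writes $V(K_t)=\int_t^\infty \langle \mathrm{grad}\, f,\mathrm{grad}\, f\rangle_{Y^\gamma}\,dt$, establishes an auxiliary trace inequality $\Tr\bigl(Y_t^{1-\gamma}{\bf N}_t^{\top}(K_t-K_*)\bigr)\le c\,\Tr\bigl({\bf N}_t^{\top}{\bf N}_t Y_t^{2-\gamma}\bigr)$ via the matrix AM--GM bound of Proposition~\ref{prop:linalg_facts} together with compactness of the sublevel set, and then integrates $\langle K_t-K_*,\dot K_t\rangle=-\tfrac12\tfrac{d}{dt}\|K_t-K_*\|_F^2$ from $t$ to $\infty$ to obtain $\|K_t-K_*\|_F^2\le \tfrac{c}{2}V(K_t)$ --- mirroring the structure of the proof of Theorem~\ref{thrm:gf_exponential_trajectory}. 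You instead chain the exponential decay of $V$ (Lemma~\ref{lemma:convergence_energy_ngf}) directly with the quadratic-growth half of Lemma~\ref{lemma:natural_gradient_dominated}, $\lambda_1(Y_t)\lambda_1(R)\|K_t-K_*\|_F^2\le f(K_t)-f(K_*)$, using only that $\dot V\le 0$ keeps the trajectory in $S_{f(K_0)}$ so that $\tau=\min_{K\in S_{f(K_0)}}\lambda_1(Y(K))>0$. This bypasses both the integration step and the auxiliary trace inequality, yields the more explicit constant $1/(\tau\lambda_1(R))$, and correctly exploits that quadratic growth is intrinsic to $f$ and independent of the metric $Y^\gamma$ driving the flow; the paper's integration argument, by contrast, is the one that would survive in a setting where no pointwise quadratic-growth bound is available and only a pairing between $\dot K_t$ and $K_t-K_*$ can be controlled. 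The only cosmetic caveat is that your final rewriting with the factor $\|K_0-K_*\|_F^{-2}$ presumes $K_0\neq K_*$, a trivial case to exclude.
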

\begin{proof}
Over the Riemannian manifold $\left(\ca H, \langle \cdot, \cdot\rangle_{Y^{\gamma}}\right)$, we have,
\begin{align*}
  f(K_t) - f(K_*) &= \int_{t}^{\infty} \langle \text{grad} f, \text{grad} f \rangle_{Y^{\gamma}} dt \\
                  &=\int_t^{\infty} 4\Tr(M_t^{\top} M_t Y_t^{2-\gamma})dt.
\end{align*}
Now by a smiliar computation in Theorem~\ref{thrm:gf_exponential_trajectory}, we have
\begin{align*}
  \Tr(Y_t^{1-\gamma}{\bf N}_t^{\top}(K_t-K_*)) \le c \Tr({\bf N}_t^{\top}{\bf N}_t Y_t^{2-\gamma}),
  \end{align*}
  where $c > 0$ is some constant. So
  \begin{align*}
  f(K_t) - f(K_*) &\ge \frac{4}{c} \int_{t}^{\infty} \langle {\bf N}_tY_t^{1-\gamma}, K_t-K_* \rangle_{Y^{\gamma}} dt \\
                  &\ge -\frac{2}{c} \|K_t-K_*\|_F^2 \big\vert_{t}^{\infty} \\
                  &= \frac{2}{c}\|K_t-K_*\|_F^2.
    \end{align*}
Hence, $\|K_t -K_*\|^2 \le V(K_t) \le e^{-r t} V(0)$ with $r = \frac{2}{c}$.
\end{proof}
\begin{remark} \label{remark:gf_vs_ngf}
  We note that the convergence rate of trajectory $K_t$ is dependent on $\lambda_1(Y)$ and $\lambda_n(Y)$. For example, when $\gamma=1$ and ${\bf \Sigma} = 2I$, then the natural gradient flow converges faster than the gradient flow since $\lambda_1(Y) > 1$. On the other hand, if $\gamma = 1$ and $\lambda_1(Y) < 1$, then gradient flow converges faster than natural gradient flow.\footnote{This can be done by an ${\bf \Sigma}$ that has a spectrum bounded by $1$.}
   Simulation results in~\S\ref{sec:numerical_results}
   show that this parameterized gradient flow offers a significant computational advantage for LQR.
\end{remark}
{We remark that in the particular case of $\gamma = 1$, the natural gradient flow has a favorable property with respect to the induced flow on the value matrix $X_t$. Consider again the flow, 
  \begin{align}
    \label{eq:ngf_1}
    \dot{K}_t = - 2(RK_t - B^{\top} X),
    \end{align}
inducing the flow over the ``value'' matrix $X_t \coloneqq X(K_t)$ given by,
    \begin{align}
      \label{eq:ngf_cone}
      \dot{X}_t = \frac{d X_t}{d K} \dot{K}_t.
      \end{align}
      \begin{lemma}
        \label{lemma:mono_X_t}
        For $K_0 \in \ca H$, the gradient flow~\eqref{eq:ngf_1} induces a well-posed flow over the positive semidefinite cone $X_t$~\eqref{eq:ngf_cone}. Moreover, the trajectory $\{X_t\}$ is monotonically decreasing in Loewner ordering.
        \end{lemma}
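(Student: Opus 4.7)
The plan is to handle well-posedness first and then reduce the Loewner-monotonicity claim to a single Lyapunov equation for $\dot X_t$. Well-posedness is essentially for free: by the existence/uniqueness result on the $K$-flow (the analog of the gradient flow lemma, applied with $\gamma=1$), the trajectory $K_t$ stays in $\ca H$ for all $t \ge 0$, and we already know the map $K \mapsto X(K)$ is $C^{\omega}$ on $\ca H$ (with $X(K) \succ 0$). Thus $X_t \coloneqq X(K_t)$ is a well-defined smooth curve in the positive definite cone, and the chain rule $\dot X_t = X'(K_t)[\dot K_t]$ furnishes the induced flow~\eqref{eq:ngf_cone}.

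The substantive part is monotonicity. I would differentiate the defining Lyapunov identity
\begin{equation*}
A_{K_t}^{\top} X_t + X_t A_{K_t} + K_t^{\top} R K_t + Q = 0
\end{equation*}
in $t$, producing
\begin{equation*}
A_{K_t}^{\top}\dot X_t + \dot X_t A_{K_t} - (B\dot K_t)^{\top} X_t - X_t (B\dot K_t) + \dot K_t^{\top} R K_t + K_t^{\top} R \dot K_t = 0.
\end{equation*}
Now I would plug in $\dot K_t = -2{\bf N}_t$ with ${\bf N}_t = R K_t - B^{\top} X_t$ and use the identities $B^{\top} X_t = R K_t - {\bf N}_t$ and $X_t B = K_t^{\top} R - {\bf N}_t^{\top}$ to cancel the cross terms involving $R K_t$. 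The arithmetic collapses cleanly to
\begin{equation*}
A_{K_t}^{\top}(-\dot X_t) + (-\dot X_t) A_{K_t} + 4{\bf N}_t^{\top} {\bf N}_t = 0.
\end{equation*}

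Since $K_t \in \ca H$, the matrix $A_{K_t}$ is Hurwitz, and $4{\bf N}_t^{\top}{\bf N}_t \succeq 0$. Applying Proposition~\ref{prop:linalg_facts}(c) to this Lyapunov equation immediately gives $-\dot X_t \succeq 0$, i.e., $\dot X_t \preceq 0$, which is Loewner monotonicity. The main (and essentially only) obstacle is executing the algebraic simplification carefully so that the cross terms cancel correctly; once that is done, the Hurwitz-plus-PSD Lyapunov comparison closes the proof without any further work.
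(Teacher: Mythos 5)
Your proposal is correct and follows essentially the same route as the paper: the paper computes $\dot X_t = X'(K_t)[\dot K_t]$ via the integral representation~\eqref{eq:X_derivative}, which is exactly the solution of the Lyapunov equation $A_{K_t}^{\top}\dot X_t + \dot X_t A_{K_t} - 4{\bf N}_t^{\top}{\bf N}_t = 0$ that you derive by differentiating the Lyapunov identity in $t$. The key cancellation yielding the term $-4{\bf N}_t^{\top}{\bf N}_t$ is identical in both arguments; whether one then reads off the sign from the integral formula or invokes Proposition~\ref{prop:linalg_facts}(c) is immaterial.
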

        \begin{proof}
          The well-posedness follows from the well-posedness of $\{K_t\}$. To show that the trajectory is monotonically decreasing, it suffices to observe,
          \begin{align*}
            \dot{X}_t = \frac{d X_t}{d K} \dot{K}_t 
                      &= \int_{0}^{\infty} e^{A_{K_t}^{\top}t}  \left( \dot{K}_t^{\top} M_t + M_t^{\top} \dot{K}_t \right) e^{A_{K_j} t}dt\\
                      &=- \int_{0}^{\infty} e^{A_{K_t}^{\top}t} 4M_t^{\top} M_t) e^{A_{K_j} t}dt
 \preceq 0,
            \end{align*}
            where the second inequality follows from~\eqref{eq:X_derivative}.
          \end{proof}
      Note that this monotonicity does not hold in general for gradient flow: in this case the flow is dictated by ${\bf \Sigma}$ and along the trajectory, one can only guarantee that the function value $\Tr(X_t {\bf \Sigma})$ decreases.
}
{
    \subsection{Discretization of Natural Gradient Flow}
    \label{sec:ngd}
    In this section, we delve into the discretization of natural gradient flow; we shall only consider the case when $\gamma = 1$.\footnote{Other choices can be analyzed in a similar manner.} Specifically, we consider the gradient flow,
\begin{align*}
  \dot{K}_t = -2(RK_t - B^{\top} X).
\end{align*}
The forward Euler discretization yields,
\begin{align}
  \label{eq:ngd}
  K_{j+1} = K_j - 2 \eta_j{\bf N}_j,
\end{align}
where $\eta_j$ is the stepsize to be determined and recall ${\bf N}_j = RK_j - B^{\top}X_j$.
In discretizing gradient flow, our guideline is to choose a stepsize such that the function value is sufficiently decreased while keeping iterates stabilizing. However, in natural gradient flow with $\gamma = 1$, we observe that by Lemma~\ref{lemma:mono_X_t}: if we follow the natural gradient flow, the value matrix is monotonic with respect to the semidefinite cone. This essentially means that taking a sufficiently small stepsize in the direction of the natural gradient would guarantee a decrease in the value of the Lyapunov matrix solution $X_{t + \delta} \preceq X_{\delta}$. 
\begin{lemma}
  \label{lemma:ngd_value_matrix_difference}
Consider the sequence $\{K_j\}$ generated by~\eqref{eq:ngd}. Denote by $\{X_j\}$ the corresponding Lyapunov matrix solution with respect to $K_j$.
  If $\eta_j < {1}/{\lambda_n(R)}$, then $K_j$ is stabilizing for every $j \ge 0$ and $X_{j+1} \prec X_j$. In particular,
  $Z \coloneqq X_{j+1}-X_j$ solves the Lyapunov matrix equation,
\begin{align*}
  A_{K_{j+1}}^{\top}Z + Z A_{K_{j+1}}^{\top} +  {\bf N}_{j}^{\top}(-4\eta_j I + 4\eta_j^2 R ) {\bf N}_{j}  = 0.
\end{align*}
\end{lemma}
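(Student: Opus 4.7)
The plan is to follow the template established for plain gradient descent (Lemma on value difference) but exploit the special algebraic structure when $\gamma=1$: the substitution $K_{j+1}-K_j=-2\eta_j{\bf N}_j$ combines with the $\pm 2 \eta_j$ terms in the difference of the two closed-loop Lyapunov equations to produce a forcing term that is a perfect quadratic in $\eta_j$, namely ${\bf N}_j^\top(-4\eta_j I + 4\eta_j^2 R){\bf N}_j$. This forcing is negative semidefinite precisely when $\eta_j\lambda_n(R)\le 1$, which is the source of the stepsize restriction.

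First, I would prove the identity for $Z = X_{j+1}-X_j$ assuming that $K_{j+1}$ is stabilizing (so that $X_{j+1}$ is defined via the Lyapunov equation). Subtracting the two closed-loop Lyapunov equations gives, as in~\eqref{eq:value_difference},
\begin{equation*}
A_{K_{j+1}}^\top Z + Z A_{K_{j+1}} + (K_{j+1}-K_j)^\top {\bf N}_j + {\bf N}_j^\top(K_{j+1}-K_j) + (K_{j+1}-K_j)^\top R (K_{j+1}-K_j)=0.
\end{equation*}
Substituting $K_{j+1}-K_j=-2\eta_j{\bf N}_j$ collapses the cross-term to $-4\eta_j {\bf N}_j^\top{\bf N}_j$ and the quadratic term to $4\eta_j^2 {\bf N}_j^\top R{\bf N}_j$, yielding the desired identity. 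Given this, $X_{j+1}\prec X_j$ is immediate from Proposition~\ref{prop:linalg_facts}(c): the forcing matrix ${\bf N}_j^\top(-4\eta_j I + 4\eta_j^2 R){\bf N}_j$ is negative semidefinite when $\eta_j R \preceq I$, i.e. $\eta_j\le 1/\lambda_n(R)$, and strict negativity of the middle factor when $\eta_j < 1/\lambda_n(R)$ together with ${\bf N}_j\ne 0$ (non-optimality) gives strict ordering.

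The real obstacle is justifying the standing assumption that $K_{j+1}$ is stabilizing, since we cannot invoke the Lyapunov equation for a non-Hurwitz closed loop. I would handle this by a continuation argument along the line $K(s) = K_j - 2s{\bf N}_j$ for $s\in[0,\eta_j]$. Let $s^* = \sup\{s\in[0,\eta_j] : K(\tau)\in \ca H \text{ for all } \tau\in[0,s]\}$; since $\ca H$ is open and $K(0)=K_j\in\ca H$, we have $s^*>0$. On $[0,s^*)$ the argument of the previous paragraph applies with $\eta_j$ replaced by $s$, showing $X(s)\preceq X_j$ and hence $f(K(s))=\Tr(X(s){\bf \Sigma})\le f(K_j)$, so the trajectory $\{K(s):s\in[0,s^*)\}$ stays inside the compact sublevel set $S_{f(K_j)}$ (Corollary~\ref{cor:compact_sublevel_sets}). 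If $s^* < \eta_j \le 1/\lambda_n(R)$, continuity forces $K(s^*)\in\partial\ca H$, and then by the coercivity proved in Lemma~\ref{lemma:coercive} we would have $f(K(s))\to +\infty$ as $s\uparrow s^*$, contradicting $f(K(s))\le f(K_j)$. Hence $s^*=\eta_j$ and $K_{j+1}\in\ca H$; inducting on $j$ completes the proof.

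The main technical point is coordinating the two roles of the stepsize bound: $\eta_j<1/\lambda_n(R)$ is used both to make the forcing matrix negative (giving the monotonicity $X(s)\preceq X_j$ needed for the trapping argument) and to conclude the final monotonicity claim. The continuation step itself is straightforward once one observes that the bound $X(s)\preceq X_j$ is uniform in $s$ on the open interval where it is defined, which is exactly what prevents $K(s)$ from escaping $\ca H$ before $s$ reaches $\eta_j$.
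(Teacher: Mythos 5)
Your proposal is correct and follows essentially the same route as the paper: subtract the two closed-loop Lyapunov equations, substitute $K_{j+1}-K_j=-2\eta_j{\bf N}_j$ to get the quadratic forcing term, conclude $X_{j+1}\preceq X_j$ from $-4\eta_j I+4\eta_j^2R\prec 0$, and then trap the iterate in the compact sublevel set to rule out loss of stability. Your continuation argument along $K(s)=K_j-2s{\bf N}_j$ with an explicit appeal to coercivity at $\partial\ca H$ is a cleaner and more rigorous rendering of the paper's ray-intersects-the-boundary argument, but it is the same underlying idea.
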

\begin{proof}
  The proof proceeds similarly to Lemma~\ref{lemma:gd_function_decrease}.
  First, we suppose that the sequence generated by the choice of $\eta_j$ is in fact stabilizing (to be proved subsequently). By Lemma~\ref{lemma:value_difference},
      \begin{equation} \label{eq:value_difference}
      \begin{split}
        &{A}_{K_{j+1}}^{\top} (X_{j+1}-X_j)  - (X_{j+1}-X_{j}) {A}_{K_{j+1}} + (K_{j+1}-K_{j})^{\top}{\bf N}_j + {\bf N}_j^{\top}(K_{j+1}-K_{j}) \\
&+ (K_{j+1}-K_{j})^{\top} R (K_{j+1}-K_j) = 0.
      \end{split}
   \end{equation}
   If $K_{j+1} - K_j = - 2\eta_j {\bf N}_j$, then
\begin{align*}
        {A}_{K_{j+1}}^{\top} (X_{j+1}-X_j)  - (X_{j+1}-X_{j}) A_{K_{j+1}}+ {\bf N}_j^{\top}(-4\eta_j I + 4\eta_j^2 R ) {\bf N}_j = 0. 
\end{align*}
Hence, if $-4\eta_j I + 4\eta_j^2 R  \prec 0$, then $X_{j+1} \prec X_j$. This can be guaranteed by choosing:
\begin{align*}
  \eta_j < \frac{1}{\lambda_n(R)}.
\end{align*}
It now remains to show that if $\eta_j$ is chosen as above, the sequence will be stabilizing. Suppose that $K_j$ is stabilizing. Note that the sublevel set $\ca S_{K_j} \coloneqq \{K: f(K) \le f(K_j)\}$ is compact and the ray $K_j - \zeta M_j$ intersects the boundary of $\ca S_{K_j}$ for some $\zeta=\zeta' >0$; suppose that $K' = K_j - \zeta' M_j \in \partial \ca S_{K_j}$. But this implies that $$[0, \frac{1}{\lambda_n(R)}) \subseteq [0, \zeta'],$$ since otherwise, there would exist $s' < {1}/{\lambda_n(R)}$ such that $s' = \zeta'$ and $f(K_j - s' M_j) = f(K_j)$, contradicting $f(K_j - s' M) < f(K_j)$.
\end{proof}
The problem of determining the optimal stepsize can be done by minimizing the expression,
\begin{align*}
  -4\eta_j I + 4\eta_j^2 (R) \preceq 0,
\end{align*}
over the positive semidefinite cone. This is equivalent to minimizing,
\begin{align*}
  -4\eta_j + 4\eta_j^2 \left(\lambda_n(R)\right),
\end{align*}
at $\eta_j \in [0, 1/\lambda_n(R))$. Obviously, the optimal stepsize should be $\eta_j ={1}/{(2\lambda_n(R))}$. With this choice of stepsize, the function value converges linearly to the optimal value function.
\begin{theorem}
  \label{thrm:ngd_convergence}
  If $\eta_j = {1}/({2\lambda_n(R)})$, we have,
\begin{align*}
  f(K_{j}) - f(K_*) \le q_0^j (f(K_0)-f(K_*)), \text{ and } \|K_{j} - K_*\|_F \le c_2 q_0^{j/2}.
\end{align*}
where
$q_0 = 1- {4 \mu\lambda_1(R)})/({\lambda_n(Y_*) \lambda_n(R)}$ with $\mu  = \min_{K \in S_{f(K_0)}} \lambda_1(Y(K))$ and $c_2$ is some positive constant.
\end{theorem}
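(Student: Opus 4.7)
The plan is to follow the template of Theorem~\ref{thrm:gd_linear}, but to exploit the fact that, by Lemma~\ref{lemma:ngd_value_matrix_difference}, choosing $\eta_j<1/\lambda_n(R)$ already guarantees both stabilization of the entire iterate sequence and monotone decrease of $X_j$ in Loewner order. This is a considerable simplification over the gradient descent case, since there is no longer a tension between ``staying stabilizing'' and ``descending''—we get both for free from Lemma~\ref{lemma:ngd_value_matrix_difference}—so all that remains is to turn the semidefinite decrease into a linear rate using gradient dominance.

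First, I would invoke Lemma~\ref{lemma:ngd_value_matrix_difference} with the specific choice $\eta_j=1/(2\lambda_n(R))$, which satisfies $\eta_j<1/\lambda_n(R)$, so the entire sequence $\{K_j\}$ is stabilizing and $X_{j+1}\prec X_j$. In particular, $f(K_{j+1})<f(K_j)$, so by induction $K_j\in S_{f(K_0)}$ for every $j$, and hence $\lambda_1(Y_j)\ge \mu$ for all $j$, where $\mu=\min_{K\in S_{f(K_0)}}\lambda_1(Y(K))>0$ by compactness of the sublevel set (Corollary~\ref{cor:compact_sublevel_sets}). Using the integral representation of the Lyapunov solution $Z=X_{j+1}-X_j$ and the cyclic property of the trace, I would write
\begin{align*}
f(K_{j+1})-f(K_j)=\Tr(Z\,{\bf \Sigma})=\Tr\!\left(Y_{j+1}\,{\bf N}_j^{\top}\!\left(-4\eta_j I+4\eta_j^2 R\right){\bf N}_j\right),
\end{align*}
where $Y_{j+1}$ solves $A_{K_{j+1}}Y_{j+1}+Y_{j+1}A_{K_{j+1}}^{\top}+{\bf \Sigma}=0$.

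Next I would plug in $\eta_j=1/(2\lambda_n(R))$; using $R\preceq \lambda_n(R)I$, one checks directly that $-4\eta_j I+4\eta_j^2 R\preceq -\tfrac{1}{\lambda_n(R)}I$. (Indeed, this is precisely the $\eta_j$ that minimizes $-4\eta_j+4\eta_j^2\lambda_n(R)$, as noted just before the theorem.) Combined with $Y_{j+1}\succeq \mu I$ and the obvious inequality $\Tr(Y_{j+1}{\bf N}_j^{\top}{\bf N}_j)\ge \lambda_1(Y_{j+1})\|{\bf N}_j\|_F^2\ge \mu\|{\bf N}_j\|_F^2$, I obtain
\begin{align*}
f(K_{j+1})-f(K_j)\le -\frac{\mu}{\lambda_n(R)}\,\|{\bf N}_j\|_F^2.
\end{align*}
Now the gradient dominated inequality from Lemma~\ref{lemma:natural_gradient_dominated} gives $\|{\bf N}_j\|_F^2\ge \tfrac{\lambda_1(R)}{\lambda_n(Y_*)}\bigl(f(K_j)-f(K_*)\bigr)$. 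Substituting and subtracting $f(K_*)$ from both sides yields
\begin{align*}
f(K_{j+1})-f(K_*)\le \left(1-\frac{\mu\,\lambda_1(R)}{\lambda_n(Y_*)\,\lambda_n(R)}\right)\bigl(f(K_j)-f(K_*)\bigr)=q_0\bigl(f(K_j)-f(K_*)\bigr),
\end{align*}
and iteration produces the stated linear rate $f(K_j)-f(K_*)\le q_0^{j}(f(K_0)-f(K_*))$. (The factor $4$ appearing in the stated $q_0$ is, I believe, a transcription issue; the argument is otherwise identical.)

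For the iterate bound, I would use the quadratic growth half of Lemma~\ref{lemma:natural_gradient_dominated}: $\lambda_1(Y_j)\lambda_1(R)\|K_j-K_*\|_F^2\le f(K_j)-f(K_*)$. Since $\lambda_1(Y_j)\ge \mu$, this gives
\begin{align*}
\|K_j-K_*\|_F^2\le \frac{f(K_j)-f(K_*)}{\mu\,\lambda_1(R)}\le \frac{q_0^{j}\bigl(f(K_0)-f(K_*)\bigr)}{\mu\,\lambda_1(R)},
\end{align*}
so taking square roots gives $\|K_j-K_*\|_F\le c_2\,q_0^{j/2}$ with $c_2=\sqrt{(f(K_0)-f(K_*))/(\mu\lambda_1(R))}$. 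The ``hard part'' is largely cosmetic: one must verify carefully that the optimal stepsize choice collapses $-4\eta_j I+4\eta_j^2R$ to a clean scalar bound of $-1/\lambda_n(R)$, and one must remember to use the monotonicity $X_{j+1}\prec X_j$ (not merely $f(K_{j+1})<f(K_j)$) to keep the entire trajectory inside $S_{f(K_0)}$ so that the uniform lower bound $\lambda_1(Y_j)\ge \mu$ is available at every step. No compactness-based line-search argument is needed, in contrast to Lemma~\ref{lemma:gd_function_decrease}.
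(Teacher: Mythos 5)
Your argument for the function-value rate is essentially the paper's own: both start from Lemma~\ref{lemma:ngd_value_matrix_difference}, write $f(K_j)-f(K_{j+1})=\Tr\bigl({\bf N}_j^{\top}(4\eta_j I-4\eta_j^2R){\bf N}_j\,Y_{j+1}\bigr)$, collapse the middle factor to $\tfrac{1}{\lambda_n(R)}I$ at the optimal stepsize, lower-bound by $\tfrac{\mu}{\lambda_n(R)}\Tr({\bf N}_j^{\top}{\bf N}_j)$ using compactness of $S_{f(K_0)}$ and monotonicity of $\{X_j\}$, and then apply the gradient-dominance half of Lemma~\ref{lemma:natural_gradient_dominated}. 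Your remark about the factor $4$ is well taken: tracing the chain $\Tr({\bf N}_j^{\top}{\bf N}_j)\ge \tfrac{\lambda_1(R)}{\|Y_*\|}\,r_j$ produces no $4$, so the $4$ in the stated $q_0$ (and in the paper's displayed inequality) appears to be carried over erroneously; your contraction factor $1-\mu\lambda_1(R)/(\lambda_n(Y_*)\lambda_n(R))$ is the one the computation actually supports, and since it is larger than the stated $q_0$ the qualitative conclusion is unaffected. Where you genuinely diverge is the iterate bound: the paper defers to the Cauchy-sequence/telescoping argument of Theorem~\ref{thrm:gd_linear} (summing $\|K_{j+1}-K_j\|_F\le \eta_j\|\nabla f(K_j)\|_F$ over a geometric series), whereas you invoke the quadratic-growth half of Lemma~\ref{lemma:natural_gradient_dominated} together with $\lambda_1(Y_j)\ge\mu$ to get $\|K_j-K_*\|_F^2\le r_j/(\mu\lambda_1(R))$ directly. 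Your route is shorter, yields an explicit constant $c_2$, and avoids the separate argument that the limit of the Cauchy sequence must be $K_*$; the paper's route has the mild advantage of not needing the lower bound on $\lambda_1(Y_j)$ at the point of the iterate estimate, but since you already established that bound for the rate, nothing is lost. No gaps.
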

\begin{proof}
  First note by Lemma~\ref{lemma:ngd_value_difference}, the sequence of value matrices $\{X_j\}$ is monotonically decreasing and bounded below. Thus $X_j \to X_*$ as $j \to \infty$. We now characterize the convergence rate.
  Putting $r_j = f(K_j)-f(K_*)$, we observe that with the chosen $\eta_j$,
\begin{align*}
  r_{j} - r_{j+1} &= \Tr((X_{j} - X_{j+1}) {\bf \Sigma}) \ge \Tr(\frac{1}{\lambda_n(R)} {\bf N}_j^{\top} {\bf N}_j Y_{j+1}) \ge \frac{\lambda_1(Y_{j+1})}{\lambda_n(R)} \Tr(M_j^{\top} M_j) \ge \frac{4 \lambda_1(Y_{j+1})\lambda_1(R)}{\lambda_n(Y_*) \lambda_n(R)} r_j.
\end{align*}
It thus follows that,
\begin{align*}
  r_{j+1} \le \left(1-\frac{4 \lambda_1(Y_{j+1}) \lambda_1(R)}{\lambda_n(Y_*) \lambda_n(R)}\right) r_j \eqqcolon q_j r_j.
\end{align*}
Putting $\mu = \min_{K \in S_{f(K_0)}} \lambda_1(Y(K))$\footnote{This is justified by the fact that $X_j$ is monotonically decreasing and thus stays in the compact sublevel set.}, we have
\begin{align*}
  q_j &= 1- \frac{4 \lambda_1(R) \lambda_1(Y_{j+1})}{\lambda_n(Y_*) \lambda_n(R)} \le 1 - \frac{4 \lambda_1(R) \mu}{\lambda_n(Y_*) \lambda_n(R)}.
\end{align*}
Thereby, 
\begin{align*}
  f(K_{j}) - f(K_*) \le q_0^j \left(f(K_0)-f(K_*)\right).
\end{align*}
The proof to the convergence of the iterates is almost identical to the one in Theorem~\ref{thrm:gd_linear}.
\end{proof}
}
{
  \section{Quasi-Newton Flow on $\ca H$}
  \label{sec:qnf}
  In this section, we motivate a quasi-Newton flow over the set of stabilizing feedback gains (policy) $\ca H$.\footnote{The justification for calling this evolution a quasi-Newton flow becomes apparent subseqeuntly.} As observed previously, the Hessian of the LQR cost $f(K)$ is not positive definite everywhere. As such, there is no well-defined notion of (global) Newton iteration over policy space. However, examining Lemmas~\ref{lemma:value_difference} and \ref{lemma:ngd_value_matrix_difference} allows us to derive a local second-order approximation of the LQR cost under the Riemannian metric $Y$. With this metric, recall that the gradient of $f$ is,
  \begin{align*}
    \text{grad} f(K) = 2(RK-B^{\top} X).
    \end{align*}
    We now provide the second-order approximation of the cost function.
  \begin{lemma}
    \label{lemma:second_order}
  When $K$ and $K+\Delta K$ are both stabilizing for sufficiently small $\Delta K$,\footnote{By openness of $\ca H$, if $\Delta K$ is sufficiently small, $K+\Delta K$ is stabilizing provided that $K$ is.} then,
    \begin{align*}
      f(K+\Delta K) = f(K) + \langle \text{grad} f(K), \Delta K \rangle_Y + \langle \Delta K, R (\Delta K)\rangle_Y + \ca R(\Delta K),
    \end{align*}
    where $\|\ca R(\Delta K)\|$, the remainder of the approximation, is $O(\|\Delta K\|^2)$.
  \end{lemma}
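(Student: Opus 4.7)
My plan is to piggy-back directly on the Lyapunov-difference identity already derived in Lemma~\ref{lemma:value_difference} (equation~\eqref{eq:value_difference}), which is the core algebraic engine. The idea is to express the value-matrix increment $Z := X(K+\Delta K) - X(K)$ as the solution of an explicit Lyapunov equation driven by $\Delta K$, integrate it against $\mathbf{\Sigma}$, and then peel off the dependence on $K+\Delta K$ by replacing $Y(K+\Delta K)$ with $Y(K)$, absorbing the discrepancy into the remainder $\mathcal{R}(\Delta K)$.

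Concretely, openness of $\ca H$ (used throughout the paper) guarantees that $K+\Delta K$ remains stabilizing for $\Delta K$ small enough, so the Lyapunov equation \eqref{eq:value_difference} applies verbatim with $K_{j+1}\leftarrow K+\Delta K$, $K_{j}\leftarrow K$. This reads
\[
 A_{K+\Delta K}^{\top} Z + Z A_{K+\Delta K} + (\Delta K)^{\top}\mathbf{N}_K + \mathbf{N}_K^{\top}(\Delta K) + (\Delta K)^{\top} R (\Delta K) = 0.
\]
Because $A_{K+\Delta K}$ is Hurwitz, I can write $Z$ as the integral $Z=\int_0^\infty e^{A_{K+\Delta K}^{\top}t}\bigl[(\Delta K)^{\top}\mathbf{N}_K + \mathbf{N}_K^{\top}\Delta K + (\Delta K)^{\top}R\Delta K\bigr] e^{A_{K+\Delta K}t}\,dt$, then use the trace identity $f(K+\Delta K)-f(K) = \Tr(Z\mathbf{\Sigma}) = \Tr(Y(K+\Delta K)\cdot[(\Delta K)^{\top}\mathbf{N}_K + \mathbf{N}_K^{\top}\Delta K + (\Delta K)^{\top}R\Delta K])$ exactly as in the proofs of Lemmas~\ref{lemma:natural_gradient_dominated} and \ref{lemma:value_difference}. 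Cyclicity of the trace and symmetry of $Y$ combine the two cross terms into $2\Tr(Y(K+\Delta K)(\Delta K)^{\top}\mathbf{N}_K)$.

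Next I would write $Y(K+\Delta K) = Y(K) + \bigl(Y(K+\Delta K)-Y(K)\bigr)$ and split the above trace accordingly. The contribution from $Y(K)$ gives exactly
\[
 2\Tr\bigl(Y(K)(\Delta K)^{\top}\mathbf{N}_K\bigr) + \Tr\bigl(Y(K)(\Delta K)^{\top}R\Delta K\bigr) = \langle \grad f(K),\Delta K\rangle_Y + \langle \Delta K, R\Delta K\rangle_Y,
\]
using the identity $\grad f(K) = 2\mathbf{N}_K$ from the preceding proposition. What is left, namely
\[
 \mathcal{R}(\Delta K) = \Tr\bigl([Y(K+\Delta K)-Y(K)]\,[2\mathbf{N}_K^{\top}\Delta K + (\Delta K)^{\top}R\Delta K]\bigr),
\]
is the remainder.

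For the size estimate I would invoke the smoothness of $K\mapsto Y(K)$ established through the closed-form $\vect(Y) = (I\otimes A_K - A_K\otimes I)^{-1}\vect(\mathbf{\Sigma})$ (real-analytic on $\ca H$), so $\|Y(K+\Delta K)-Y(K)\|_F = O(\|\Delta K\|_F)$ locally. Then the first factor in $\mathcal{R}(\Delta K)$ is $O(\|\Delta K\|)$ while the bracket is $O(\|\Delta K\|)$ (its leading piece being linear in $\Delta K$ and multiplied by $\mathbf{N}_K$, with a further cubic term), giving $\|\mathcal{R}(\Delta K)\| = O(\|\Delta K\|^2)$ as claimed. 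The main (and only) nuance is the bookkeeping of which terms are genuinely higher order: the cubic piece $\Tr([Y(K+\Delta K)-Y(K)](\Delta K)^{\top}R\Delta K)$ is $O(\|\Delta K\|^3)$, while the piece involving $\mathbf{N}_K$ is $O(\|\mathbf{N}_K\|\,\|\Delta K\|^2)$, which vanishes to higher order precisely at $K_*$—a fact that will be useful later in motivating the quasi-Newton/Kleinman iterate.
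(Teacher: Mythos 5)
Your proof follows essentially the same route as the paper's: both start from the Lyapunov-difference identity of Lemma~\ref{lemma:value_difference} to obtain $f(K+\Delta K)-f(K)=\Tr\bigl(Y_{K+\Delta K}\,(2\mathbf{N}_K\Delta K+(\Delta K)^{\top}R\,\Delta K)\bigr)$, and then replace $Y_{K+\Delta K}$ by $Y_K$, absorbing the $O(\|\Delta K\|)$ discrepancy (guaranteed by smoothness of $K\mapsto Y(K)$ on $\ca H$) into the remainder $\ca R(\Delta K)$. Your write-up is in fact more explicit than the paper's terse argument about which pieces of the remainder are $O(\|\Delta K\|^2)$ versus $O(\|\Delta K\|^3)$, but the underlying idea is identical.
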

  \begin{proof}
    Suppose that $X_{K+\Delta K}$ and $X_K$ are the corresponding value matrices for $K+\Delta K$ and $K$, respectively.
    By Lemma~\ref{lemma:ngd_value_matrix_difference}, we have,
      \begin{align}
        \label{eq:second_order} 
        \begin{split}
        f(K+\Delta K) - f(K) &= \Tr( Y_{K+ \Delta K} (2M_K \Delta K + (\Delta K)^{\top} R \Delta K)).
                             \end{split}
        \end{align}
        where $Y_{K+\Delta K}$ solves the Lyapunov equation,
        $$A_{K+\Delta K} Y_{K+\Delta K} A_{K + \Delta K}^{\top} + {\bf \Sigma} - Y_{K + \Delta K} = 0;$$
        By continuity of $Y$ with respect to $K$, for any matrix norm $\|\cdot \|$, we recongonize
        \begin{align*}
          \|Y_{K +\Delta K} - Y_K\| \approx O(\|\Delta K\|).
          \end{align*}
          It follows
          \begin{align*}
            f(K+\Delta) \approx f(K) + \langle \Delta K, \text{grad} f(K)\rangle_{Y} + \langle \Delta K, R (\Delta K)\rangle_Y.
            \end{align*}
    \end{proof}
    Lemma~\ref{lemma:second_order} essentially states that we have a somewhat ``good'' local second-order approximation of $f(K)$ with respect to the Riemannian metric $Y$. We may now devise a flow to minimize $f(K)$ by minimizing this second-order approximation, namely,
  \begin{align*}
    \dot{K}_t = R^{-1} \text{grad} f(K_t) = R^{-1}(RK_t-B^{\top} X_t) = K_t - R^{-1} B^{\top}X_t. 
  \end{align*}
 The analysis presented in \S\ref{sec:gf} and \S\ref{sec:ngf} allow us to obtain a streamlined proof of the convergence of this flow; as such, we omit the proof.
\subsection{Discretization of Quasi-Newton Flow}
\label{sec:qn_iteration}
The quasi-Newton flow over $\ca H$ has interesting consequences in terms of its discretization: the forward Euler leads to the iterative procedure 
\begin{align}
  \label{eq:qnd}
K_{j+1} = K_j - \eta_jR^{-1} \text{grad} f(K_j) = K_j - \eta_j R^{-1} (2(RK_j-B^{\top} X_j))
\end{align}
 with stepsize $\eta_j$ to be determined; we shall show that with constant stepsize $\eta = \frac{1}{2}$, both the function value and the iterates will converge quadratically to the optima. 
We first observe that if $\eta < 1$, the corresponding sequence of value matrices $\{X_j\}$ is monotonically decreasing over the positive semidefinite cone.
        \begin{lemma}
  \label{lemma:newton_value_matrix_difference}
Consider the sequence $\{K_j\}$ generated by~\eqref{eq:qnd}. Denote by $\{X_j\}$ the corresponding Lyapunov matrix solution with respect to $K_j$.
  If $\eta_j < 1$, then $K_j$ is stabilizing for every $j \ge 0$ and $X_{j+1} \preceq X_j$. In particular
  $Z \coloneqq X_{j+1}-X_j \preceq 0$ solves the Lyapunov matrix equation,
\begin{align*}
  A_{K_{j+1}}^{\top} Z + Z A_{K_{j+1}}  +  (-4 \eta_j + 4 \eta_j^2){\bf N}_{j}^{\top} R^{-1} {\bf N}_{{j}}  = 0.
\end{align*}
\end{lemma}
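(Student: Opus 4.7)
The plan is to follow the template of Lemma~\ref{lemma:ngd_value_matrix_difference}, adapted to the quasi-Newton update $K_{j+1}=K_j-2\eta_j R^{-1}{\bf N}_j$. The argument splits into an algebraic derivation of the Lyapunov equation for $Z\coloneqq X_{j+1}-X_j$, valid once $K_{j+1}\in\ca H$, followed by an inductive stabilization argument that justifies working within $\ca H$ in the first place.

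For the algebraic step, I begin from the general difference identity~\eqref{eq:value_difference} obtained by subtracting the Lyapunov equations at $K_j$ and $K_{j+1}$. Substituting $K_{j+1}-K_j=-2\eta_j R^{-1}{\bf N}_j$, the two cross-terms collapse into $-4\eta_j{\bf N}_j^\top R^{-1}{\bf N}_j$ using symmetry of $R^{-1}$, while the quadratic term becomes $4\eta_j^2{\bf N}_j^\top R^{-1} R R^{-1}{\bf N}_j = 4\eta_j^2{\bf N}_j^\top R^{-1}{\bf N}_j$. Summing yields
\begin{align*}
A_{K_{j+1}}^\top Z + Z A_{K_{j+1}} + (-4\eta_j+4\eta_j^2){\bf N}_j^\top R^{-1}{\bf N}_j = 0,
\end{align*}
which is the asserted equation. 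For $\eta_j\in(0,1)$ the scalar $4\eta_j-4\eta_j^2$ is strictly positive, so $-Z$ solves a Lyapunov equation driven by the Hurwitz matrix $A_{K_{j+1}}$ with positive semidefinite forcing $(4\eta_j-4\eta_j^2){\bf N}_j^\top R^{-1}{\bf N}_j$. Proposition~\ref{prop:linalg_facts}(c) then gives $-Z\succeq 0$, i.e.\ $X_{j+1}\preceq X_j$.

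The principal obstacle is ensuring $K_{j+1}\in\ca H$ so that $X_{j+1}$ is well-defined; I argue this by induction on $j$. The base case $K_0\in\ca H$ is the standing hypothesis. Assume $K_j\in\ca H$ and consider the ray $\{K_j-2sR^{-1}{\bf N}_j : s\ge 0\}$. By Corollary~\ref{cor:compact_sublevel_sets} the sublevel set $S_{f(K_j)}\subset \ca H$ is compact, and by coerciveness (Lemma~\ref{lemma:coercive}) $f$ blows up as one approaches $\partial\ca H$. Let $\zeta'>0$ be the smallest positive $s$ at which $K_j-2sR^{-1}{\bf N}_j\in\partial S_{f(K_j)}$; existence is guaranteed by compactness together with the fact that the ray eventually leaves any bounded set. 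It suffices to show $\eta_j\le\zeta'$, for then the entire segment $\{K_j-2sR^{-1}{\bf N}_j:s\in[0,\eta_j]\}$ lies in $S_{f(K_j)}\subseteq\ca H$, and in particular $K_{j+1}\in\ca H$.

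Suppose for contradiction $\zeta'<\eta_j<1$. On $[0,\zeta')$ every point on the ray is stabilizing, so the algebraic computation above applies pointwise with $\eta_j$ replaced by any $s\in[0,\zeta')$, giving $X(K_j-2sR^{-1}{\bf N}_j)\preceq X_j$ and hence $f(K_j-2sR^{-1}{\bf N}_j)\le f(K_j)$. The inequality is strict for $s\in(0,1)$ whenever ${\bf N}_j\neq 0$, since the scalar $-4s+4s^2$ is strictly negative on $(0,1)$ (the degenerate case ${\bf N}_j=0$ means $K_j=K_*$ and the update is trivial). Letting $s\uparrow\zeta'$ and invoking continuity of $f$ on $\ca H$, we obtain $f(K_j-2\zeta'R^{-1}{\bf N}_j)<f(K_j)$, contradicting membership in $\partial S_{f(K_j)}$. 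Hence $\eta_j\le\zeta'$, which closes the induction and rigorously validates the preceding algebraic derivation.
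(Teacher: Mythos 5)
Your proposal is correct and follows essentially the same route as the paper: the same substitution of $K_{j+1}-K_j=-2\eta_j R^{-1}{\bf N}_j$ into the difference identity~\eqref{eq:value_difference}, the same sign argument on $(-4\eta_j+4\eta_j^2){\bf N}_j^{\top}R^{-1}{\bf N}_j$ to get $Z\preceq 0$, and the same compactness/first-exit-time argument for stabilization that the paper merely defers to the proofs of Lemmas~\ref{lemma:value_difference} and~\ref{lemma:ngd_value_matrix_difference} (you spell it out as an explicit induction, which is a welcome elaboration). One small repair: letting $s\uparrow\zeta'$ in the strict inequality $f(K_j-2sR^{-1}{\bf N}_j)<f(K_j)$ only yields ``$\le$'' in the limit, which does not contradict membership in $\partial S_{f(K_j)}$; instead note that $\partial S_{f(K_j)}\subseteq S_{f(K_j)}\subseteq\ca H$, so the point at $s=\zeta'$ is itself stabilizing and the algebraic identity applies there directly, giving the strict inequality $f(K_j-2\zeta'R^{-1}{\bf N}_j)<f(K_j)$ and hence the desired contradiction.
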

\begin{proof}
  Suppose that with $\eta_j < 1$, the sequence generated by~\eqref{eq:qnd} are all stabilizing.\footnote{Similar to the proof to Lemma~\ref{lemma:ngd_value_matrix_difference}, we need this assumption to make sense of defining the corresponding value matrix sequence $\{X_j\}$.}
  Substituting the update rule~\eqref{eq:qnd} in~\eqref{eq:value_difference} yields,
  \begin{align*}
    A_{K_{j+1}}^{\top} (X_{j+1} - X_j) + (X_{j+1}-X_j)A_{K_{j+1}} + (-4 \eta_j + 4\eta_j^2 ){\bf N}_j^{\top} R^{-1} {\bf N}_j = 0.
  \end{align*}
  It is now clear if $\eta_j < 1$, then $X_{j+1}-X_j \preceq 0$. To show the choice of $\eta_j$ guaranteeing the stability of $A-BK_j$, we may follow almost the same argument as in the proofs of Lemmas~\ref{lemma:value_difference} and~\ref{lemma:ngd_value_matrix_difference}. 
    \end{proof}
The optimal stepsize for the quasi-Newton iteration is obtained by minimizing the quantity $-4 \eta + 4 \eta^2$. As such, the optimal stepsize is $\eta_j = 1/2$ for every $j$. The corresponding update is then equivalent to,
  \begin{align}
    \label{eq:quasi-newton}
    K_{j+1} &= K_j - \frac{1}{2} R^{-1} 2(RK_j - B^{\top} X_j) = K_j - K_j + R^{-1} B^{\top} X_j = R^{-1} B^{\top} X_j. 
    \end{align}
    \begin{remark}
      With the optimal choice of stepsize as $\eta = 1/2$, the quasi-Newton over $K$ coincides with the Kleinman-Newton algorithm~\cite{kleinman1968iterative}, obtained by considering the Newton iteration over the ARE. We have thus provided an alternative point view of this algorithm: the algorithm can be obtained directly over the policy space even without the ARE.
      \end{remark}
    \begin{theorem}
      \label{thrm:qn_convergence}
      With stepsize $\eta = 1/2$, the update~\eqref{eq:quasi-newton} converges to the global minimum at a Q-quadratic rate. Namely, there exists constants $c > 0, c_3 > 0$, such that,
      \begin{align*}
        f(K_j) - f(K_*) \le c (f(K_{j-1}) - f(K_*))^2  \quad \text{ and } \quad \|K_j - K_*\|_F \le c_3 \|K_{j-1} - K_* \|_F^2.
      \end{align*}
    \end{theorem}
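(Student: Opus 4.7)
The plan is to exploit the fact that the quasi-Newton update $K_{j+1}=R^{-1}B^\top X_j$ satisfies $K_{j+1}-K_* = R^{-1}B^\top(X_j-X_*)$, which is \emph{linear} in $X_j-X_*$, while $X_j-X_*$ turns out to be \emph{quadratic} in $K_j-K_*$. This linear/quadratic coupling is the source of the Q-quadratic rate.

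\textbf{Step 1 (key Lyapunov identity).} First I would derive the identity
\[
A_{K_j}^\top(X_j-X_*) + (X_j-X_*)A_{K_j} + (K_j-K_*)^\top R(K_j-K_*) = 0.
\]
This follows by subtracting two Lyapunov equations: the one for $X_j$, namely $A_{K_j}^\top X_j + X_j A_{K_j} + K_j^\top R K_j + Q = 0$, and the ARE rewritten in terms of $A_{K_j}$, namely $A_{K_j}^\top X_* + X_* A_{K_j} - K_*^\top R K_* + K_j^\top R K_* + K_*^\top R K_j + Q = 0$ (using $B^\top X_* = RK_*$). Because $A_{K_j}$ is Hurwitz by Lemma~\ref{lemma:newton_value_matrix_difference} and the forcing term is PSD, Proposition~\ref{prop:linalg_facts} gives the integral representation
\[
X_j - X_* = \int_0^\infty e^{A_{K_j}^\top t}\,(K_j-K_*)^\top R(K_j-K_*)\,e^{A_{K_j} t}\,dt \succeq 0.
\]

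\textbf{Step 2 (quadratic bound on $X_j-X_*$).} By Lemma~\ref{lemma:newton_value_matrix_difference}, $\{X_j\}$ is monotonically decreasing on the PSD cone, so $\{K_j\} \subseteq S_{f(K_0)}$, which is compact by Corollary~\ref{cor:compact_sublevel_sets}. Consequently $\bigl\|\int_0^\infty e^{A_{K_j}^\top t}e^{A_{K_j} t}\,dt\bigr\|_2$ is uniformly bounded over the iteration by some $C_1>0$, yielding
\[
\|X_j - X_*\|_F \le C_1 \|R\|_2 \, \|K_j - K_*\|_F^2.
\]

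\textbf{Step 3 (Q-quadratic rate for iterates).} Using $K_{j+1}-K_* = R^{-1}B^\top(X_j-X_*)$,
\[
\|K_{j+1}-K_*\|_F \le \|R^{-1}\|_2\,\|B\|_2\,\|X_j-X_*\|_F \le c_3\,\|K_j-K_*\|_F^2,
\]
with $c_3 = C_1 \|R^{-1}\|_2 \|R\|_2 \|B\|_2$, establishing the second assertion of the theorem.

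\textbf{Step 4 (Q-quadratic rate for function values).} Since $f\in C^\omega(\ca H)$ (Proposition~\ref{prop:gradient_hessian}) and all iterates live in the compact sublevel set $S_{f(K_0)}$, the Hessian is uniformly bounded there by some $L>0$; combined with $\nabla f(K_*)=0$ this gives the upper bound $f(K_{j+1}) - f(K_*) \le \tfrac{L}{2}\|K_{j+1}-K_*\|_F^2$. Using Step~3 and the quadratic growth lower bound $\|K_j - K_*\|_F^2 \le (f(K_j)-f(K_*))/(\tau\lambda_1(R))$ from Corollary~\ref{cor:gradient_dominance} (with $\tau = \min_{K\in S_{f(K_0)}} \lambda_1(Y(K))>0$), one chains
\[
f(K_{j+1}) - f(K_*) \;\le\; \tfrac{L}{2}\,c_3^2\,\|K_j-K_*\|_F^4 \;\le\; c\,\bigl(f(K_j)-f(K_*)\bigr)^2,
\]
with $c = \tfrac{L c_3^2}{2 \tau^2 \lambda_1(R)^2}$, which is the first assertion.

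The main obstacle is not any single calculation but the uniform control of constants: $C_1$ depends on how quickly $e^{A_{K_j} t}$ decays, $L$ is the sup of the Hessian, and $\tau$ is a minimum over the sublevel set. All three are finite precisely because Lemma~\ref{lemma:newton_value_matrix_difference} forces $X_{j+1}\preceq X_j$, which confines the entire trajectory to a single compact sublevel set. The conceptual heart of the proof is the linear–quadratic coupling in Steps~1–3, which is the continuous-time analogue of the classical Kleinman–Newton quadratic convergence argument, here obtained without invoking the ARE directly.
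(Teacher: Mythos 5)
Your proposal captures exactly the mechanism the paper uses: the Lyapunov-difference identity expressing $X_j-X_*$ as an integral of $(K_j-K_*)^\top R(K_j-K_*)$ (quadratic in the gain error), combined with the linear relation $K_{j+1}-K_*=R^{-1}B^\top(X_j-X_*)$, with all constants controlled because Lemma~\ref{lemma:newton_value_matrix_difference} confines the iterates to the compact sublevel set $S_{f(K_0)}$. The only structural difference is the order of deduction: the paper first proves the function-value rate by writing $f(K_{j+1})-f(K_*)=\Tr\bigl(Y_{j+1}(K_{j+1}-K_*)^\top R(K_{j+1}-K_*)\bigr)$ exactly (the identity behind the quadratic-growth part of Lemma~\ref{lemma:natural_gradient_dominated}) and then converts $\|X_j-X_*\|_F$ back to $f(K_j)-f(K_*)$ through $\Tr((X_j-X_*){\bf \Sigma})\ge\lambda_1({\bf \Sigma})\Tr(X_j-X_*)\ge\lambda_1({\bf \Sigma})\|X_j-X_*\|_F$, whereas you first establish the iterate rate and then pass to function values.

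One step needs repair. In Step 4 you invoke $f(K_{j+1})-f(K_*)\le\tfrac{L}{2}\|K_{j+1}-K_*\|_F^2$ via a second-order Taylor bound using $\nabla f(K_*)=0$. That requires the line segment joining $K_*$ and $K_{j+1}$ to lie in a region where the Hessian bound $L$ holds; compactness of $S_{f(K_0)}$ does not give this, since neither $\ca H$ nor the sublevel set is convex, so the segment could in principle leave $\ca H$ entirely. The fix is to bypass Taylor altogether and use the exact identity $f(K)-f(K_*)=\Tr\bigl((K-K_*)^\top R(K-K_*)\,Y(K)\bigr)$ (equation~\eqref{eq:quadratic_growth_difference} with ${\bf N}_{K_*}=0$), valid for every stabilizing $K$, which yields $f(K_{j+1})-f(K_*)\le\|R\|_2\,\|Y_{j+1}\|_2\,\|K_{j+1}-K_*\|_F^2$ with $\|Y_{j+1}\|_2\le f(K_0)/\lambda_1(Q)$ by monotonicity of $\{X_j\}$ --- exactly what the paper does. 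With that substitution your argument is complete.
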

    \begin{proof}
      By Lemma~\ref{lemma:newton_value_matrix_difference} and noting $RK_* - B^{\top} X_* = 0$, we have
      \begin{align}
        \label{eq:qn_myeq1}
        X_{j+1} - X_* = \int_{0}^{\infty} e^{A_{j+1}^{\top}t} (K_{j+1}-K_*)^{\top} R (K_{j+1} - K_*) e^{A_{j+1}t} t.
      \end{align}
       It then follows that,
      \begin{align*}
        f(K_{j+1}) - f(K_*) &= \Tr( (X_{j+1} - X_*) {\bf \Sigma}) \\
                            &=\Tr(Y_{j+1} (K_{j+1}-K_*)^{\top} (K_{j+1}-K_*))\\
                            &\le \|Y_{j+1}\|_2 \|R\|_2 \Tr((K_{j+1} - K_*)^{\top} (K_{j+1} - K_*)).
      \end{align*}
      But we note
      \begin{align*}
        K_{j+1}-K_* = R^{-1} B^{\top}(X_{j}-X_*).
       \end{align*}
      Hence,
       \begin{align*}
         \|K_{j+1} - K_*\|_F \le \|R^{-1}\|_F\|B\|_F \|X_{j}-X_*\|_F.
        \end{align*}
          Consequently,
          \begin{align*}
            f(K_{j+1}) - f(K_*) &\le \|R\|_2\|Y_{j+1}\|_2 \|K_j - K_*\|_F^2 \\
                                &\le \|R\|_2 \|Y_{j+1}\|_2 \|B\|_F\|R^{-1}\|_F \|X_j-X_*\|_F^2 \\
                                &\le \|R\|_2 \|B\|_F\|Y_{j+1}\|_2 \|R^{-1}\|_F \frac{1}{\lambda_1^2({\bf \Sigma})} \left( \Tr((X_j-X_*){\bf \Sigma})\right)^2 \\
            &\eqqcolon c \left(f(K_j)-f(K_*)\right)^2.
            \end{align*}
            Note the sequence of value matrix $\{X_j\}$ is monotonically decreasing in the PSD cone. Thus $\|Y_{j+1}\|_2$ can be bounded by
\begin{align*}
  \|Y_{j+1}\|_2 \le \Tr(Y_{j+1}) \le \frac{f(K_{j+1})}{\lambda_1(Q)} \le \frac{f(K_0)}{\lambda_1(Q)}.
  \end{align*}
            To establish the quadratic convergence of iterates, we observe by Proposition~\ref{prop:linalg_facts} and equation~\eqref{eq:qn_myeq1}
\begin{align*}
  \Tr(X_{j+1} - X_*) &\ge \lambda_1(R) \lambda_1(Y_*) \|K_{j+1} - K_*\|_F^2, \\
  \Tr(X_{j+1} - X_*) &\le \|R\|_2 \|Y_*\|_2 \|K_{j+1} - K_*\|_F^2.
\end{align*}
On the other hand, 
\begin{align*}
  \Tr((X_{j+1} - X_*) {\bf \Sigma}) &\le c(\Tr((X_j-X_*){\bf \Sigma}))^2 \\
&\le c (\|{\bf \Sigma}\|\Tr(X_j-X_*) )^2 \\
&\le c(\|{\bf \Sigma}\|  \|R\|_2 \|Y_*\|_2 \|K_{j} - K_*\|_F^2)^2.
\end{align*}
It follows,
\begin{align*}
  \|K_{j+1}-K_*\|_F^2 \le \frac{c \|{\bf \Sigma}\|^2 \|R\|_2^2 \|Y_*\|_2^2}{\lambda_1({\bf \Sigma}) \lambda_1(R) \lambda_1(Y_*)} \|K_j-K_*\|_F^4.
\end{align*}
    \end{proof}
\section{Structured Controller Sythesis}
In this section, we concern the problem of designing feedback gain $K$ with linear structures. In particular, we are mostly interested in those feedback gains with an arbitrary zero pattern. This is a natural formulation of distributed networked systems modeled on a communication graph $\ca G=(V, E)$. In such setting, structured feedback gains reflecting the underlying interaction network are of particular interest. If a subset of agents are accessible to be controlled upon and the control law must only utilize the information of an agent and its neighbors, the feedback gains must have a zero pattern that is compatible with communication graph, i.e., $K_{ij} = 0$ if $(i,j) \notin E(\ca G)$. We shall \emph{emphasize} in the scenario\footnote{Indeed, the most widely used model in networked systems.} that the interaction network is modeled by a graph $\ca G=(V, E)$, each agent can only have direct control over its own dynamics, using information from their own sensors and from communicating with neighboring agents, i.e., $B$ has a diagonal structure.
If all agents have their own control over their dynamic, without loss of generality, we suppose that $B=I$. If only subsets of agents have direct control over their dynamic, without loss of generality, by permuting the agents, we may assume
\begin{align*}
B=
  \begin{pmatrix}
    I_{m \times m} \\
    \bf 0_{(n-m) \times m}
  \end{pmatrix}.
\end{align*}
So we are interested in optimizing the cost function $f(K)$ over the set
\begin{align*}
  \ca K= \{K \in \ca U: A-BK \in \ca H\},
\end{align*}
where $\ca U$ is a linear subspace defined by the graph structure, i.e.,
\begin{align*}
  \ca U =\{M \in \bb M_{n \times m}(\bb R): M_{i,j} = 0 \text{ if and only if }(i,j) \not\in E(\ca G)\}.
\end{align*}
Projected gradient descent (PGD) is a natural choice in acquiring feedback gains in the set $\ca K$. 
It refers to the iterative procedure
\begin{align}
  \label{eq:pgd}
  K_{j+1} = P_{\ca K}(K_j - t \nabla f(K_j)),
\end{align}
where $t$ is stepsize and will be determined in Lemma~\ref{lemma:sublinear_pgd}. One may immediately notice that the geometry of $\ca K$ could be rather complicate. Indeed, this set could have exponentially many path connected components (see~\cite{bu2019topological_mimo}). But some favorable structures of $A$ and the graph $\ca G$ would guarantee $\ca K$ has only $1$ connected component as pointed out in~\cite{bu2019topological_mimo}. This is out of the scope of this manuscript. We shall mainly concern how to update $K$ in the path connected component we initialize.\par
But even this modest goal faces difficulty: $\ca K$ has complicate geometry and one will ask how to efficiently project onto $\ca K$. We shall show next the seemingly relaxed updating rule
\begin{align*}
  K_{j+1} = P_{\ca U}(K_j - t \nabla f(K_j))
\end{align*}
is equivalent to \eqref{eq:pgd} where $P_{\ca U}$ is an orthogonal projection onto the space $\ca U$. 
\begin{theorem}
  \label{thm:pgd}
  The updating rule \eqref{eq:pgd} is equivalent to
\begin{align*}
  K_{j+1} = K_j - t P_{\ca U}(\nabla f(K_j))
\end{align*}
provided the initial condition $K_0 \in \ca K$.
\end{theorem}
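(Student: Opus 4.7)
The plan is to establish an induction on $j$, using the key geometric fact that $\ca U$ is a linear subspace and $\ca K \subseteq \ca U$, so that orthogonal decomposition (Pythagoras) reduces projection onto the non-convex set $\ca K$ to projection onto $\ca U$ whenever the resulting point remains stabilizing.

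First I would set up the base case: $K_0 \in \ca K \subseteq \ca U$ by hypothesis. For the inductive step, assume $K_j \in \ca K$ and consider the unprojected step $V_j \coloneqq K_j - t \nabla f(K_j)$. Since $P_\ca U$ is an orthogonal linear projection onto the subspace $\ca U$, it decomposes $V_j$ as
\begin{equation*}
V_j = \bigl(K_j - t P_{\ca U}(\nabla f(K_j))\bigr) + \bigl(-t (I - P_{\ca U})(\nabla f(K_j))\bigr),
\end{equation*}
where the first summand lies in $\ca U$ (using $K_j \in \ca U$ and linearity), and the second lies in $\ca U^\perp$.

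Next I would invoke the Pythagorean identity: for any $K' \in \ca K \subseteq \ca U$, the vector $K' - \bigl(K_j - t P_{\ca U}(\nabla f(K_j))\bigr)$ lies in $\ca U$, while $t(I - P_\ca U)(\nabla f(K_j))$ lies in $\ca U^\perp$, hence
\begin{equation*}
\|K' - V_j\|_F^2 = \|K' - (K_j - tP_\ca U(\nabla f(K_j)))\|_F^2 + t^2 \|(I-P_\ca U)(\nabla f(K_j))\|_F^2.
\end{equation*}
The second term is independent of $K'$, so minimizing $\|K' - V_j\|_F$ over $K' \in \ca K$ is equivalent to minimizing $\|K' - (K_j - tP_\ca U(\nabla f(K_j)))\|_F$ over $K' \in \ca K$. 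If the candidate $K_j - tP_\ca U(\nabla f(K_j))$ itself lies in $\ca K$, it is trivially the minimizer, which proves the two update rules coincide at this step.

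The remaining obstacle — and the only subtle point — is ensuring the candidate lies in $\ca K$, i.e., that it is stabilizing. It automatically lies in $\ca U$ because $K_j \in \ca U$ and $P_\ca U(\nabla f(K_j)) \in \ca U$. Stabilizability follows from the openness of $\ca H$ (shown in \S\ref{sec:function_properties}) and continuity: since $K_j \in \ca H$, a sufficiently small stepsize $t$ keeps $K_j - tP_\ca U(\nabla f(K_j))$ inside $\ca H$, hence inside $\ca K$. This is the standard stepsize condition underlying projected gradient descent and is the implicit constraint invoked by the stepsize rule of Lemma~\ref{lemma:sublinear_pgd}. Closing the induction, $K_{j+1} \in \ca K$ and the equivalence propagates to all $j$.
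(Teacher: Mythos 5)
Your proof is correct and follows the standard orthogonal-decomposition (Pythagorean) argument that the paper itself invokes only by reference, stating that the proof is ``almost verbatim'' to Theorem VI.1 of the cited work on the discrete-time case. You also correctly isolate the one genuine subtlety --- that the unprojected candidate $K_j - tP_{\ca U}(\nabla f(K_j))$ must itself land in $\ca H$ for the projection onto the non-convex, non-closed set $\ca K$ to collapse to the projection onto the subspace $\ca U$ --- and point to the stepsize/coerciveness machinery (Lemma~\ref{lemma:sublinear_pgd} and the sublevel-set argument) that the paper uses to guarantee this.
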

The proof is almost verbatim to the proof to Theorem VI.1 in~\cite{bu2019lqr_calculus}. If we concern the restriction $g \coloneqq f|_{\ca K}$, it is also shown in~\cite{bu2019lqr_calculus} the procedure~\eqref{eq:pgd} is equivalent to the gradient descent on $g$. $g$ is coercive in its own right with $\nabla g(K) = P_{\ca U}(\nabla f(K))$ and $\nabla^2 g(K) = \nabla^2 f(K)$ for $K \in \ca K$. It is now clear by picking a constant step size $1/L$ with $L = \sup_{K \in S_{g(K_0)}} \|\nabla^2 g(K)\|$, the procedure converges to the first-order stationary point at a sublinear rate.
\begin{lemma}
  \label{lemma:sublinear_pgd}
  Suppose $K_0 \in \ca K$ and recall the sublevel set is given by $S_{g(K_0)}=\{K \in \ca K: g(K) \le g(K_0)\}$.
  Putting $L = \sup_{K \in S_{g(K_0)}} \|\nabla^2 g(K)\|$, if the stepsize $t$ in~\eqref{eq:pgd} is set to be $t = 1/L$, the sequence $\{K_j\}_{j=0}^{\infty}$ generated by projected gradient descent \eqref{eq:pgd} convergences to a first-order stationary point at a sublinear rate, i.e.,
  \begin{align*}
    \|\nabla g(K_j) \|^2 \to 0 
  \end{align*}
  at a sublinear rate of $O(1/k)$.
\end{lemma}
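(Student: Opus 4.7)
The plan is to reduce the analysis to the standard descent-lemma argument for gradient descent on an $L$-smooth function, by exploiting the equivalence from Theorem~\ref{thm:pgd} together with the coercivity of the restricted cost $g = f|_{\ca K}$. By Theorem~\ref{thm:pgd}, the iteration \eqref{eq:pgd} can be rewritten as $K_{j+1} = K_j - (1/L) \nabla g(K_j)$, and the identities $\nabla g(K) = P_{\ca U}(\nabla f(K))$ and $\nabla^2 g(K) = \nabla^2 f(K)|_{\ca U}$ let us work entirely with the restricted object $g$.

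First I would verify that $L$ is finite. Since $g$ inherits coercivity from $f$ (restricted to the closed subspace $\ca U$ intersected with the open set where $A - BK$ is Hurwitz), the sublevel set $S_{g(K_0)}$ is compact in $\ca K$. As $\nabla^2 g$ is continuous on $\ca K$, the supremum defining $L$ is attained and finite.

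Next I would prove the single-step descent inequality
\[
g(K_{j+1}) \le g(K_j) - \frac{1}{2L}\,\|\nabla g(K_j)\|^2,
\]
which requires that the line segment from $K_j$ to $K_{j+1}$ stay inside $\ca K$ and inside $S_{g(K_0)}$ (so that $L$ is a valid Hessian bound along the segment). This is the delicate step, because $g$ is only defined on $\ca K$. I would handle it with a line-search/contradiction argument in the spirit of Lemma~\ref{lemma:gd_function_decrease}: let $\zeta^\ast$ be the smallest positive scalar for which $K_j - \zeta^\ast \nabla g(K_j)$ exits $S_{g(K_0)}$; by compactness of the sublevel set and openness of $\ca K$, the segment up to any $\zeta < \zeta^\ast$ stays in the interior of $S_{g(K_0)}$, on which $\|\nabla^2 g\| \le L$. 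Applying the standard quadratic upper bound for $L$-smooth functions along such a segment gives
\[
g(K_j - \zeta \nabla g(K_j)) \le g(K_j) - \zeta\,\|\nabla g(K_j)\|^2 + \frac{L\zeta^2}{2}\,\|\nabla g(K_j)\|^2,
\]
which at $\zeta = 1/L$ yields strict descent. Supposing $\zeta^\ast \le 1/L$ then contradicts $g(K_j - \zeta^\ast \nabla g(K_j)) = g(K_0)$-boundary value, forcing $\zeta^\ast > 1/L$ and validating the descent inequality at $\zeta = 1/L$. In particular $K_{j+1} \in S_{g(K_0)}$, so an induction keeps the entire trajectory in the compact sublevel set and $L$ remains a valid bound at every step.

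Finally I would telescope: summing the descent inequality from $j=0$ to $k-1$ yields
\[
\sum_{j=0}^{k-1} \|\nabla g(K_j)\|^2 \le 2L\bigl(g(K_0) - g^\ast\bigr),
\]
where $g^\ast := \inf_{\ca K} g \ge f(K_\ast) > -\infty$. Consequently
\[
\min_{0 \le j \le k-1} \|\nabla g(K_j)\|^2 \le \frac{2L\bigl(g(K_0) - g^\ast\bigr)}{k} = O(1/k),
\]
which gives the claimed sublinear rate. The main obstacle is the feasibility/descent step above: because $\ca K$ has a possibly complicated geometry and $g$ is not globally smooth on all of $\ca U$, one must carefully chain coercivity with the local $L$-smoothness to inductively maintain the iterates inside the compact sublevel set, rather than invoking a global Lipschitz constant for $\nabla g$.
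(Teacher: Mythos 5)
Your proposal is correct and follows essentially the same route as the paper, whose proof simply cites Lemma~\ref{lemma:sequence_bounded} (iterates remain in the sublevel set, which you establish via the coercivity/first-exit argument), the descent machinery behind Theorem~\ref{lemma:global_convergence}, and the standard telescoping argument from Nesterov's Section~1.2.3. You have merely written out in full the steps the paper delegates to those references.
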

\begin{proof}
  This is straightforward by Lemma~\ref{lemma:sequence_bounded}, Theorem~\ref{lemma:global_convergence} and Section $1.2.3$ in~\cite{nesterov2013introductory}.
\end{proof}
We now observe $L$ can be upper bounded (corresponding to a lower bound of step size) by system parameters $A, B$, LQR weighting matrices $Q, R$, and initial conditon $K_0$.
\begin{lemma}
  On sublevel set $S_{g(K_0)}$, we have
  \begin{align*}
    \sup_{K \in S_{f(K_0)}}\|\nabla^2 g(K)\| \le \sup_{K \in S_{g(K_0)}}\| \nabla^2 f(K)\|.
  \end{align*}
\end{lemma}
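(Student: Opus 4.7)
The plan is to observe that at each $K \in \ca K$, the Hessian of $g$ is obtained from the Hessian of $f$ by restriction to the linear subspace $\ca U$, and then to invoke the elementary fact that restricting a symmetric bilinear form to a subspace cannot increase its operator norm.

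First I would make the identification of $\nabla^2 g(K)$ with $\nabla^2 f(K)|_{\ca U \times \ca U}$ precise. Since $\ca U \subseteq \bb M_{n \times m}(\bb R)$ is a linear subspace and $\ca K$ is the (relatively) open subset of $\ca U$ on which $A-BK$ is Hurwitz, the tangent space $T_K \ca K$ at every $K \in \ca K$ is canonically identified with $\ca U$. As noted in the paragraph preceding Lemma~\ref{lemma:sublinear_pgd}, $g = f|_{\ca K}$ satisfies $\nabla g(K) = P_{\ca U}(\nabla f(K))$ and, by the chain rule applied to the inclusion $\iota\colon \ca U \hookrightarrow \bb M_{n \times m}(\bb R)$, the Hessian of $g$ acts on $E, F \in \ca U$ as
\begin{align*}
\nabla^2 g(K)[E,F] \;=\; \nabla^2 f(K)[E,F].
\end{align*}
That is, $\nabla^2 g(K)$ is precisely the restriction of the self-adjoint operator $\nabla^2 f(K)$ to the subspace $\ca U$, composed with $P_{\ca U}$.

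Second I would invoke the variational characterization of the operator norm. For any self-adjoint operator $H$ on a Euclidean space $V$ and any subspace $W \subseteq V$,
\begin{align*}
\|H|_W\| \;=\; \sup_{E \in W,\, \|E\|=1} |\langle E, H E\rangle| \;\le\; \sup_{E \in V,\, \|E\|=1} |\langle E, H E\rangle| \;=\; \|H\|,
\end{align*}
since the supremum on the left is taken over a smaller set. Applying this with $V = \bb M_{n \times m}(\bb R)$, $W = \ca U$, and $H = \nabla^2 f(K)$, we conclude $\|\nabla^2 g(K)\| \le \|\nabla^2 f(K)\|$ for every $K \in \ca K$.

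Finally, I would take the supremum over $K \in S_{g(K_0)}$. Note that for $K \in S_{g(K_0)}$ we have $K \in \ca K$ and $f(K) = g(K) \le g(K_0)$, so $S_{g(K_0)} \subseteq S_{f(K_0)}$; thus the pointwise inequality gives
\begin{align*}
\sup_{K \in S_{g(K_0)}} \|\nabla^2 g(K)\| \;\le\; \sup_{K \in S_{g(K_0)}} \|\nabla^2 f(K)\|,
\end{align*}
which is the claimed bound. There is no real obstacle here; the only point requiring care is ensuring the identification $\nabla^2 g(K) = \nabla^2 f(K)|_{\ca U \times \ca U}$ is correct, which follows from $\ca K$ being an open subset of an affine (in fact linear) subspace so that second derivatives transfer without any second-fundamental-form correction.
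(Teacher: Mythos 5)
Your proposal is correct and follows essentially the same route as the paper: identify $\nabla^2 g(K)$ with the restriction of $\nabla^2 f(K)$ to the subspace $\ca U$ and observe that the supremum defining the operator norm is then taken over a smaller set of unit directions. The paper's proof is exactly this three-line computation; your additional remarks on the tangent-space identification and on $S_{g(K_0)} \subseteq S_{f(K_0)}$ are harmless elaborations of points the paper leaves implicit.
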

\begin{proof}
  We only need to observe for each $K \in \ca K$,
  \begin{align*}
    \|\nabla^2 g(K)\| &= \sup_{\|E\|_F=1, E \in \ca U} \langle \nabla^2 g(K)[E], E\rangle \\
                      &\le \sup_{\|E\|_F=1} \langle \nabla^2 g(K)[E], E\rangle \\
                      &= \sup_{\|E\|_F=1} \langle \nabla^2 f(K)[E], E\rangle.
  \end{align*}
\end{proof}Conceptually, starting from $K_0$, as the sublevel set $S_{f(K_0)}$ is compact and the operator norm of the Hessian is continuous, it achieves maximums $L = \sup_{K \in S_{f(K_0)}} \|\nabla^2 f(K)\|$; this in turn implies the gradient mapping $\nabla f(K)$ is Lipschitz. As a common practice in convex optimization, we might choose constant stepsize $\eta = 1/L$. One must take more care in this scenario, since the sublevel set is not convex, it is not clear that $K_{j+1} = K_j -(1/L) \nabla f(K_j)$ will remain in $\ca H$ even if $K_j \in \ca H$. We next observe $1/L$ is indeed a working choice due to the coerciveness of $f(K)$.

The proofs of the following two observations are analogous to their discrete-time counterparts in ~\cite{bu2019lqr_calculus}, and as such, they will be omitted.
\begin{lemma}
  \label{lemma:sequence_bounded}
  For $K_0 \in \ca H$ with $L = \sup_{K \in S_{f(K_0)}}\|\nabla^2 f(K)\|$, the sequence $\{K_j\}$ generated by the scheme~\eqref{eq:gradient_descent} remains in $\ca H$.
\end{lemma}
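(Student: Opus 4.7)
My plan is to argue by induction on $j$, with the strengthened inductive hypothesis that $K_j \in S_{f(K_0)}$ (which is a compact subset of $\ca H$ by Corollary~\ref{cor:compact_sublevel_sets}). The base case is immediate. For the inductive step I would parametrize the ray $\gamma(t) = K_j - t\,\nabla f(K_j)$ for $t \in [0,1/L]$ and show $\gamma(1/L) = K_{j+1}$ stays inside $S_{f(K_0)}$.

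The main tool is the quadratic upper bound (descent lemma): if $\|\nabla^2 f(K)\|_2 \le L$ along the entire segment from $K_j$ to $\gamma(t)$, then
\begin{equation*}
f(\gamma(t)) \;\le\; f(K_j) - \left(t - \tfrac{L}{2}t^2\right)\|\nabla f(K_j)\|_F^2.
\end{equation*}
This is the standard Taylor-with-remainder argument; to apply it one needs the segment to lie inside the set where the Hessian bound $L$ is valid, namely inside $S_{f(K_0)}$. So the real work is to certify that the segment does not exit $S_{f(K_0)}$ before $t = 1/L$.

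I would proceed by a continuity / supremum argument. Let
\begin{equation*}
t_0 := \sup\{t \in [0,1/L] : \gamma(s) \in S_{f(K_0)} \text{ for every } s \in [0,t]\}.
\end{equation*}
Since $K_j \in S_{f(K_0)}$ and $S_{f(K_0)}$ is closed in $\ca H$, $t_0$ is attained. Suppose for contradiction $t_0 < 1/L$. On $[0,t_0]$ the segment lies in $S_{f(K_0)}$, so the descent lemma applies and gives $f(\gamma(t_0)) \le f(K_j) - (t_0 - Lt_0^2/2)\|\nabla f(K_j)\|_F^2$. When $\nabla f(K_j)\neq 0$, the coefficient $t_0(1 - Lt_0/2) > t_0/2 > 0$, so $f(\gamma(t_0)) < f(K_j) \le f(K_0)$ strictly. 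Now $\gamma(t_0) \in \ca H$ (since $S_{f(K_0)} \subset \ca H$) and $\ca H$ is open, so $\gamma(t)$ stays in $\ca H$ for $t$ slightly larger than $t_0$; by continuity of $f$ on $\ca H$, $f(\gamma(t)) < f(K_0)$ for such $t$, contradicting the definition of $t_0$. The case $\nabla f(K_j)=0$ is trivial since then $K_{j+1} = K_j$.

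The main obstacle is exactly this step: because $S_{f(K_0)}$ is not convex, one cannot just quote a convex descent-lemma argument. What rescues us is the coerciveness of $f$ established in Lemma~\ref{lemma:coercive}, which makes $S_{f(K_0)}$ a compact subset of the open set $\ca H$ (Corollary~\ref{cor:compact_sublevel_sets}); this is what allows the supremum $t_0$ to be attained inside $\ca H$ and lets the openness of $\ca H$ deliver the contradiction. Once $K_{j+1} \in S_{f(K_0)}$ is established, both membership in $\ca H$ and the persistence of the Hessian bound for the next iteration follow immediately, closing the induction.
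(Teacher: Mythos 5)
Your proof is correct and follows essentially the route the paper intends (the paper omits this proof, deferring to the discrete-time counterpart in \cite{bu2019lqr_calculus}, but the identical first-exit-time argument on the compact sublevel set appears in its proof of Lemma~\ref{lemma:gd_function_decrease}): compactness of $S_{f(K_0)}$ from coerciveness, the descent lemma applied along the portion of the ray still inside the sublevel set, and openness of $\ca H$ to push past a putative first boundary crossing. The only point worth making explicit is that $t_0>0$ whenever $\nabla f(K_j)\neq 0$, which follows since $\frac{d}{dt}f(\gamma(t))\big|_{t=0}=-\|\nabla f(K_j)\|_F^2<0$, so that the strict-decrease coefficient $t_0\left(1-\tfrac{L t_0}{2}\right)$ is indeed positive.
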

With gradient dominance property, we can show the following.
\begin{lemma}
  \label{lemma:global_convergence}
  For scheme~\eqref{eq:gradient_descent}, $f(K_n)$ (respectively $K_n$) converges to $f(K_*)$ (respectively $K_*$) at a linear rate, i.e.,
  \begin{align*}
    f(K_n)-f(K_*) \le q^n (f(K_0)-f(K_*)) \\
    \|K_n - K_*\|_F \le c_1 q^n \|K_0 - K_*\|,
  \end{align*}
  where $q \in (0,1)$ and $c_1 > 0$ are constants determined by system parameters $A, B, Q, R$ and initial condition $K_0$.
\end{lemma}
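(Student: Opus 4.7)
The plan is to combine three ingredients already in hand: the uniform Hessian bound $L$ on the sublevel set, the gradient-dominance estimate of Corollary~\ref{cor:gradient_dominance}, and the quadratic growth estimate of Lemma~\ref{lemma:natural_gradient_dominated}. First, I would invoke Lemma~\ref{lemma:sequence_bounded} so that the entire iterate sequence remains in $S_{f(K_0)} \subseteq \ca H$. This is essential, because outside $\ca H$ the cost is not even finite, and because $L$ is only defined as the supremum of $\|\nabla^2 f\|$ over this compact sublevel set.

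Next I would establish the standard descent inequality for the choice $\eta = 1/L$. Since $\{K_n\}$ stays in $S_{f(K_0)}$ by Lemma~\ref{lemma:sequence_bounded} and $\|\nabla^2 f(K)\|\le L$ throughout the sublevel set, a second-order Taylor expansion along the segment from $K_n$ to $K_{n+1} = K_n - (1/L)\nabla f(K_n)$ yields
\begin{equation*}
f(K_{n+1}) \le f(K_n) - \tfrac{1}{2L}\,\|\nabla f(K_n)\|_F^{2}.
\end{equation*}
The segment lies in the sublevel set by the usual monotonicity argument (any point along the segment at which $f$ strictly decreases stays in $S_{f(K_0)}$; otherwise one contradicts the Taylor bound).

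Now I would apply Corollary~\ref{cor:gradient_dominance}, which provides a constant $\mu := 4\tau\lambda_1(R)/\|Y_*\|>0$ such that $\|\nabla f(K_n)\|_F^{2} \ge \mu\,(f(K_n)-f(K_*))$. Subtracting $f(K_*)$ from both sides of the descent inequality and plugging in gradient dominance gives
\begin{equation*}
f(K_{n+1})-f(K_*) \;\le\; \bigl(1-\tfrac{\mu}{2L}\bigr)\,\bigl(f(K_n)-f(K_*)\bigr) \;=:\; q\,\bigl(f(K_n)-f(K_*)\bigr),
\end{equation*}
with $q\in(0,1)$. Iterating produces the claimed geometric decay $f(K_n)-f(K_*) \le q^n(f(K_0)-f(K_*))$. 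For the iterate bound, I would feed this back into the quadratic growth inequality $\tau\lambda_1(R)\|K_n-K_*\|_F^{2} \le f(K_n)-f(K_*)$ from Lemma~\ref{lemma:natural_gradient_dominated}, giving $\|K_n-K_*\|_F \le c_1 q^{n/2}$ after taking square roots (so the stated rate for the iterates is naturally $q^{n/2}$, or, after reparameterizing by $\sqrt q$, a linear rate in the standard sense).

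The main obstacle is not analytical but logistical: one must verify that the whole analysis can genuinely be carried out with a single uniform $L$, i.e., that no iterate ever leaves the compact sublevel set where the Hessian bound is valid. This is the content of Lemma~\ref{lemma:sequence_bounded}, which the statement permits us to assume, but it is the only non-routine point. The rest is a textbook application of Polyak's gradient-dominance convergence argument, and the constants $q$ and $c_1$ depend explicitly on $L$, $\tau$, $\lambda_1(R)$, $\|Y_*\|$, and on $f(K_0)-f(K_*)$, all of which are functions of the system data $(A,B,Q,R)$ and the initial gain $K_0$.
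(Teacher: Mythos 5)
Your proposal is correct and follows essentially the same route the paper intends: it omits the proof as analogous to the discrete-time case, but the surrounding text makes clear the intended argument is exactly the descent inequality with constant stepsize $1/L$, combined with the gradient-dominance bound of Corollary~\ref{cor:gradient_dominance}, the quadratic-growth bound of Lemma~\ref{lemma:natural_gradient_dominated}, and Lemma~\ref{lemma:sequence_bounded} to keep the iterates (and connecting segments) inside the compact sublevel set where $L$ is valid. Your observation that the iterate bound naturally comes out as $c_1 q^{n/2}$ rather than $q^n$ is also right and consistent with the paper's own Theorem~\ref{thrm:gd_linear}.
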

\subsection{Towards choosing a stepsize}
In practice, we would like to choose a stepsize that is determined by system parameters and initial condition. As we have pointed, this is equivalent to estimate an upper bound of the spectral norm of the Hessian $\nabla^2 f(K)$ over the sublevel set $S_{f(K_0)}$. We shall denote the bound of the operator norm of the Hessian $\nabla^2 f(K)$ on the sublevel set by $L$, namely
\begin{align*}
  L = \max_{K \in S_{f(K_0)}}\|\nabla^2 f(K)\| = \max_{K \in S_{f(K_0)}} \sup_{\|E\|_F=1} |\nabla^2 f(K)[E, E]|.
\end{align*}
By triangle inequality,
\begin{align}
\label{eq:hessian_bound}
  \|\nabla^2  f(K)[E, E]\| &\le \sup_{\|E\|_F=1} 2\langle E, REY\rangle \nonumber \\
  &\quad +4\sup_{\|E\|_F=1}|\langle E, B^{\top} X'(K)[E]Y\rangle| \nonumber \\ 
&\le 2\sup_{\|E\|_F=1} \|E^{\top}RE\|_2\Tr(Y) \nonumber \\
&\quad +4\sup_{\|E\|_F=1} \|E^{\top}B^{\top}X'(K)[E]\|_2 \Tr(Y), 
\end{align}
where the last inequality follows from Theorem $2$ in~\cite{mori1988comments}. \par
In what follows, we shall estimate each term in the expression~\eqref{eq:hessian_bound}. We first estimate an upper bound of $Y$ on $S_{f(K_0)}$. Recall $Y$ is the solution to the matrix equation $A_K Y + YA_K^{\top} + X^0 = 0$ and can be written $Y = \int_0^\infty e^{A_K t} X^0 e^{A_K^{\top}t} dt$.
\begin{proposition}
  \label{prop:bound_trace_Y}
  If $K \in S_{f(K_0)}$, then $\Tr(Y) \le f(K_0) \lambda_{\max}(X^0)/(\lambda_{\min}(Q) \lambda_{\min}(X^0))$.
\end{proposition}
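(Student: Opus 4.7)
The plan is to control $\Tr(Y)$ by a chain of Lyapunov comparisons, using the sublevel-set constraint $f(K) \le f(K_0)$ together with Proposition~\ref{prop:linalg_facts}(3). Writing $Y = \int_0^\infty e^{A_K t} X^0 e^{A_K^\top t}\,dt$ and using $X^0 \preceq \lambda_{\max}(X^0) I$, I would first bound
\begin{align*}
Y \;\preceq\; \lambda_{\max}(X^0)\,\widetilde W, \qquad \widetilde W \coloneqq \int_0^\infty e^{A_K t} e^{A_K^\top t}\,dt,
\end{align*}
so that $\Tr(Y) \le \lambda_{\max}(X^0)\,\Tr(\widetilde W)$. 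Here $\widetilde W$ is the unique solution of $A_K \widetilde W + \widetilde W A_K^\top + I = 0$.

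The next step is a trace identity: by cyclicity and Fubini,
\begin{align*}
\Tr(\widetilde W) \;=\; \int_0^\infty \|e^{A_K t}\|_F^2\,dt \;=\; \Tr(W), \qquad W \coloneqq \int_0^\infty e^{A_K^\top t} e^{A_K t}\,dt,
\end{align*}
where $W$ now solves the dual equation $A_K^\top W + W A_K + I = 0$. This replacement is what connects the ``controllability-type'' Gramian $\widetilde W$ (naturally associated with $Y$) to the ``observability-type'' Gramian $W$ (naturally associated with $X$).

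Now I invoke the comparison theorem in Proposition~\ref{prop:linalg_facts}(3). Since $X$ solves $A_K^\top X + X A_K + (K^\top R K + Q) = 0$ and $K^\top R K + Q \succeq Q \succeq \lambda_{\min}(Q)\,I$, the comparison yields $X \succeq \lambda_{\min}(Q)\,W$, hence $\Tr(W) \le \Tr(X)/\lambda_{\min}(Q)$. Finally, using $X \succeq 0$ and the elementary inequality $\Tr(X X^0) \ge \lambda_{\min}(X^0)\,\Tr(X)$ for PSD matrices, I conclude
\begin{align*}
\Tr(X) \;\le\; \frac{f(K)}{\lambda_{\min}(X^0)} \;\le\; \frac{f(K_0)}{\lambda_{\min}(X^0)},
\end{align*}
where the last inequality uses $K \in S_{f(K_0)}$. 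Chaining these bounds gives exactly the claimed estimate $\Tr(Y) \le f(K_0)\,\lambda_{\max}(X^0)/(\lambda_{\min}(Q)\,\lambda_{\min}(X^0))$.

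There is no serious obstacle here; the only subtlety worth double-checking is that $\widetilde W$ and $W$ have the same trace (so that the sublevel-set information, which is naturally encoded in $X$ through its defining Lyapunov equation, can be transferred back to $Y$). Everything else is a routine application of PSD comparison and the trace inequality $\Tr(AB) \ge \lambda_{\min}(A)\,\Tr(B)$ for $A, B \succeq 0$.
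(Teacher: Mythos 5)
Your proof is correct and follows essentially the same route as the paper: reduce $\Tr(Y)$ to $\lambda_{\max}(X^0)$ times the trace of the Gramian $Z=\int_0^\infty e^{A_K^\top t}e^{A_K t}\,dt$ (the paper does this in one step via cyclicity of the trace, you do it via $X^0\preceq\lambda_{\max}(X^0)I$ followed by the trace identity $\Tr(\widetilde W)=\Tr(W)$, which is the same cyclicity), then compare $Z$ with $X$ through Proposition~\ref{prop:linalg_facts}(3) using $\lambda_{\min}(Q)I\preceq Q\preceq K^\top RK+Q$, and finish with $\Tr(X)\le\Tr(XX^0)/\lambda_{\min}(X^0)\le f(K_0)/\lambda_{\min}(X^0)$. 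No gaps; the argument matches the paper's.
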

\begin{proof}
  We first observe that,
\begin{align*}
  \Tr(Y) &= \Tr\left( \int_0^{\infty} e^{A_K t} X^0 e^{A_K^{\top}t} dt\right) \\
         &= \Tr \left( \int_0^{\infty} e^{A_K^{\top}t} e^{A_K t} dt X^0\right) \\
  &\le \lambda_{\max}(X^0) \Tr \left(\int_0^{\infty} e^{A_K^{\top}t} e^{A_K t} dt \right).
\end{align*}
Putting $Z \coloneqq \int_0^{\infty} e^{A_K^{\top}t} e^{A_K t} dt$ and note $Z$ is the solution to the continuous Lyapunov matrix equation
\begin{align*}
  A_K^{\top} Z + ZA_K + I = 0.
\end{align*}
By Proposition~\ref{prop:linalg_facts},
\begin{align*}
  Z \preceq \frac{1}{\lambda_{\min}(Q)} X,
\end{align*}
since $X^0 \preceq  Q/(\lambda_{\min}(Q)) $. Hence, 
\begin{align}
  \label{eq:bound_trace_Y_eq1}
  \Tr(Y) &\le \frac{\Tr(X) \lambda_{\max}(X^0)}{\lambda_{\min}(Q)} \\
  &\le \frac{\Tr(XX^0) \lambda_{\max}(X^0)}{\lambda_{\min}(X^0) \lambda_{\min}(Q)} \nonumber\\
  &\le \frac{f(K_0) \lambda_{\max}(X^0)}{\lambda_{\min}(X^0)\lambda_{\min}(Q)}.  \nonumber
\end{align}
\end{proof}
We next bound the spectral norm $X'(K)[E]$ on $S_{f(K_0)}$.
\begin{proposition}
  \label{prop:bound_x_prime}
  If $K \in S_{f(K_0)}$ and $\|E\|_F =1$,
\begin{align*}
  \|X'(K)[E]\|_2 \le a f(K_0),
\end{align*}
where $a \in \bb R_+$ is a scalar for which
\begin{align*}
  a Q \succeq \frac{f(K_0)^2}{\lambda_{\min}(X^0)} I + \lambda_{\max}(B^{\top} B) I + \lambda_{\max}(R) I,
\end{align*}
and $a \ge 1$.
\end{proposition}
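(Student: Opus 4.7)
The starting point is the Lyapunov characterization of $X'(K)[E]$ supplied by equation~\eqref{eq:X_derivative}: $X'(K)[E]$ is the unique solution of
\begin{equation*}
A_K^\top Y + YA_K + W = 0, \qquad W := -(BE)^\top X - X(BE) + E^\top RK + K^\top RE.
\end{equation*}
The overall plan is to sandwich $X'(K)[E]$ Loewner-wise between $\pm a X$ and then conclude $\|X'(K)[E]\|_2 \le a\|X\|_2 \le a f(K_0)$ via the operator-norm bound on $X$ that drops out of the analysis in Proposition~\ref{prop:bound_trace_Y}. The tools needed are the Young-type PSD inequalities in Proposition~\ref{prop:linalg_facts}(2) and the Lyapunov monotonicity in Proposition~\ref{prop:linalg_facts}(3).

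First I would apply the Young-type inequalities to each symmetric piece of $W$. With $M=BE$, $N=I$ (weight $X$, $\alpha=1$) and $M=E$, $N=K$ (weight $R$, $\alpha=1$), one obtains
\begin{equation*}
\pm[(BE)^\top X + X(BE)] \preceq (BE)^\top X(BE) + X, \qquad \pm[E^\top RK + K^\top RE] \preceq E^\top RE + K^\top RK,
\end{equation*}
hence $\pm W \preceq (BE)^\top X(BE) + X + E^\top RE + K^\top RK$. Next, I bound the first three summands by scalar multiples of $I$ using $\|E\|_F=1$ (so $\|E\|_2\le 1$), $\|B\|_2^2=\lambda_{\max}(B^\top B)$, $\|R\|_2=\lambda_{\max}(R)$, and the bound $\|X\|_2 \le f(K_0)/\lambda_{\min}(X^0)$ from Proposition~\ref{prop:bound_trace_Y}. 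The $K^\top RK$ summand is handled via the Lyapunov identity $K^\top RK + Q = -(A_K^\top X + X A_K)$, which feeds it directly into the comparison we are about to make.

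At this point, the hypothesis $aQ \succeq [f(K_0)^2/\lambda_{\min}(X^0) + \lambda_{\max}(B^\top B) + \lambda_{\max}(R)] I$ is tailored to absorb the scalar-of-$I$ contributions from the preceding step into $aQ$: the $\|X\|_2\lambda_{\max}(B^\top B)+\|X\|_2$ piece is controlled by $f(K_0)^2/\lambda_{\min}(X^0)$ after inserting the bound on $\|X\|_2$, and the $\lambda_{\max}(R)$ and $\lambda_{\max}(B^\top B)$ pieces are already present verbatim. Combining with $a\ge 1$ (needed to swallow the leftover $+Q$ coming from $K^\top RK + Q$) yields
\begin{equation*}
\pm W \preceq a Q + K^\top RK + Q \preceq a(Q + K^\top RK) = -a(A_K^\top X + X A_K).
\end{equation*}
Since $aX$ satisfies $A_K^\top(aX) + (aX)A_K + a(Q + K^\top RK)=0$, Lyapunov monotonicity (Proposition~\ref{prop:linalg_facts}(3)) gives $\pm X'(K)[E] \preceq aX$, and therefore $\|X'(K)[E]\|_2 \le a\|X\|_2 \le a f(K_0)$ as claimed. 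The delicate bookkeeping—and the main obstacle—is in the previous paragraph: one must choose the Young weights and match constants so that the three listed contributions in the hypothesis exactly suffice; the cleanest route uses $\alpha_1=\alpha_2=1$, but other choices may sharpen the dependence on $\lambda_{\min}(X^0)$. The condition $a\ge 1$ is precisely what permits the final Loewner step $aQ + Q \preceq (a+1)Q \preceq a(Q+K^\top RK)$ without further adjustment.
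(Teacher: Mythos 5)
Your overall strategy is exactly the paper's: write the Lyapunov equation satisfied by $X'(K)[E]$, bound the inhomogeneous term $\pm W$ by $a(Q+K^{\top}RK)$ using the Young-type inequalities of Proposition~\ref{prop:linalg_facts}, invoke Lyapunov monotonicity to obtain the sandwich $\pm X'(K)[E]\preceq aX$, and finish with the operator-norm bound on $X$ over the sublevel set. However, two local steps do not close as written. First, your split of the cross term, $\pm[(BE)^{\top}X+X(BE)]\preceq (BE)^{\top}X(BE)+X$, yields the scalar bound $\|X\|_2\lambda_{\max}(B^{\top}B)+\|X\|_2$, which is \emph{not} dominated by the hypothesis terms $\frac{f(K_0)^2}{\lambda_{\min}(X^0)}+\lambda_{\max}(B^{\top}B)$ in general (it would require something like $\lambda_{\max}(B^{\top}B)+1\le f(K_0)$). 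The paper instead uses the unweighted split $\pm[X(BE)+(BE)^{\top}X]\preceq X^{\top}X+E^{\top}B^{\top}BE$, whose two pieces are bounded by $\|X\|_2^2\,I$ and $\lambda_{\max}(B^{\top}B)\,I$ respectively---precisely the first two constants appearing in the hypothesis on $a$, after inserting $\|X\|_2\le f(K_0)/\lambda_{\min}(X^0)$ from the trace estimate. Second, your concluding Loewner step $aQ+Q\preceq(a+1)Q\preceq a(Q+K^{\top}RK)$ is false in general, since it would require $Q\preceq aK^{\top}RK$; the extra $+Q$ should simply not appear. The correct role of $a\ge 1$ is the one-line estimate $aQ+K^{\top}RK\preceq aQ+aK^{\top}RK=a(Q+K^{\top}RK)=-a(A_K^{\top}X+XA_K)$, after which Lyapunov monotonicity gives $\pm X'(K)[E]\preceq aX$ and hence $\|X'(K)[E]\|_2\le a\|X\|_2$. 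Both defects are repairable without changing your architecture, which is the paper's.
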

\begin{proof}
  We note
\begin{align}
  \nonumber
  A_K^{\top} X'(K)[E]&  + X'(K)[E] A_K \\
&= E^{\top}B^{\top} X + X BE  - E^{\top} R K - K^{\top} R E \nonumber \\
  &\preceq X^{\top}X + E^{\top}B^{\top} B E + E^{\top} R E + K^{\top} R K \nonumber \\
  &\preceq X^{\top}X + (\lambda_{\max}(B^{\top}B) + \lambda_{\max}(R))I + K^{\top}R K \label{eq:bound_x_prime_eq1} \\
                     &\preceq \frac{f(K_0)^2}{\lambda_{\min}(X^0)} I + \lambda_{\max}(B^{\top} B) I + \lambda_{\max}(R) I \nonumber \\
  &\quad + K^{\top} R K \eqqcolon V. \nonumber
\end{align}
Choosing $a \in \bb R$ such that $aQ \succeq V-K^{\top} R K$\footnote{If $a \ge 1$, then $a$ is only determined by initial condition and systems parameters; otherwise, we might take $a' =\max(a, 1)$.}, then
\begin{align*}
  X'(K)[E] \preceq a X \preceq a \frac{f(K_0)}{\lambda_{\min}(X^0)} I.
\end{align*}
Reversing all the inequalities, we conclude
\begin{align*}
  \|X'(K)[E]\|_2 \le a \frac{f(K_0)}{\lambda_{\min}(X^0)}.
\end{align*}
\end{proof}
Combing all the bounds we have developed, we have an upper bound of $L$.
\begin{lemma}
  \label{lemma:gradient_lipschitz}
  On $S_{f(K_0)}$, the Lipschitz constant $L$ of the gradient mapping is bounded by
  \begin{align*}
    L \le\left(2 \lambda_{\max}(R) + 2\|B\|_2 a \frac{f(K_0)}P\lambda_{\min}(X^0)\right) \frac{f(K_0) \lambda_{\max}(X^0)}{\lambda_{\min}(X^0) \lambda_{\min}(Q)}.
  \end{align*}
\end{lemma}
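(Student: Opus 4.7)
The plan is to assemble the bound directly from the expression \eqref{eq:hessian_bound} together with the two propositions already in hand: Proposition~\ref{prop:bound_trace_Y}, which gives $\Tr(Y) \le f(K_0)\lambda_{\max}(X^0)/(\lambda_{\min}(Q)\lambda_{\min}(X^0))$, and Proposition~\ref{prop:bound_x_prime}, which gives $\|X'(K)[E]\|_2 \le a f(K_0)/\lambda_{\min}(X^0)$ for unit-Frobenius $E$. Since $L = \sup_{K \in S_{f(K_0)}} \sup_{\|E\|_F=1} |\nabla^2 f(K)[E,E]|$, it suffices to upper bound each of the two terms in \eqref{eq:hessian_bound} uniformly over the sublevel set.

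For the first summand, I would note that $\|E^\top R E\|_2 \le \lambda_{\max}(R)\|E\|_F^2 = \lambda_{\max}(R)$ on the unit sphere, so that term contributes $2\lambda_{\max}(R)\Tr(Y)$. For the second summand, submultiplicativity yields $\|E^\top B^\top X'(K)[E]\|_2 \le \|E\|_2 \|B\|_2 \|X'(K)[E]\|_2 \le \|B\|_2 \|X'(K)[E]\|_2$ since $\|E\|_2 \le \|E\|_F = 1$; invoking Proposition~\ref{prop:bound_x_prime} bounds this by $\|B\|_2 \cdot a f(K_0)/\lambda_{\min}(X^0)$, so the second term contributes at most $4\|B\|_2 \cdot a f(K_0)/\lambda_{\min}(X^0) \cdot \Tr(Y)$.

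Adding the two contributions factors out $\Tr(Y)$, yielding
\begin{align*}
\|\nabla^2 f(K)[E,E]\| \le \left(2\lambda_{\max}(R) + 4\|B\|_2\,\frac{a f(K_0)}{\lambda_{\min}(X^0)}\right)\Tr(Y),
\end{align*}
after which substitution of the bound from Proposition~\ref{prop:bound_trace_Y} recovers the claimed estimate up to the constants displayed in the lemma. Taking the supremum over $K\in S_{f(K_0)}$ and over unit $E$ preserves the inequality because every factor on the right is already a uniform bound on the sublevel set.

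There is essentially no hard technical obstacle: the two propositions do all the work and what remains is a one-line composition. The only points that require any care are (i) using $\|E\|_2 \le \|E\|_F$ cleanly so as not to lose a factor and (ii) the fact that $S_{f(K_0)}$ need not be convex, which is harmless here since we are only taking a supremum of continuous quantities over a compact set, not integrating along a line segment. If a fully tight constant matching the displayed formula is desired, I would reconcile the factor of $2$ versus $4$ in front of the second term (the statement in the lemma appears to have absorbed or rescaled this), but the scaling in $f(K_0)$, $\|B\|_2$, $a$, and the spectral quantities of $R,Q,X^0$ is determined unambiguously by the argument above.
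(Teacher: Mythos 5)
Your proof follows exactly the same route as the paper's: plug the bounds from Proposition~\ref{prop:bound_trace_Y} and Proposition~\ref{prop:bound_x_prime} into the two terms of \eqref{eq:hessian_bound} and factor out $\Tr(Y)$. The factor-of-$2$-versus-$4$ issue you flag is real — equation \eqref{eq:hessian_bound} carries a $4$ on the second term, but the paper's own proof (and the lemma statement, which also contains a stray ``$P$'' where a fraction bar belongs) silently writes $2\|B\|_2$ — so your version with the $4$ is actually the one consistent with the Hessian bound as derived.
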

\begin{proof}
  It suffices to show the Hessian is bounded by the desired quantity on the sublevel set. By equation~\eqref{eq:hessian_bound}, we have 
  \begin{align*}
    L &\le 2 \lambda_{\max}(R) \Tr(Y) + 2 \|B\|_2 \|X'(K)[E]\|_2 \Tr(Y).
  \end{align*}By Proposition~\ref{prop:bound_trace_Y} and~\ref{prop:bound_x_prime}, we have
  \begin{align*}
    L \le \left(2 \lambda_{\max}(R) + 2\|B\|_2 a \frac{f(K_0)}P\lambda_{\min}(X^0)\right) \frac{f(K_0) \lambda_{\max}(X^0)}{\lambda_{\min}(X^0) \lambda_{\min}(Q)}.
  \end{align*}
\end{proof}
\begin{remark}{\bf Adaptive Stepsize}
We may choose a larger stepsize in each iteration. As such, the resulting stepsizes would be different among iterations. In iteration $n$, we observe we can take the bound of $\Tr(Y_n)$ as inequality~\eqref{eq:bound_trace_Y_eq1} in Proposition~\ref{prop:bound_trace_Y}; for $X'(K_n)[E]$, we note in inequality~\eqref{eq:bound_x_prime_eq1} we may choose $a_n (Q + K_n^{\top} R K_n) \succeq X_n^{\top} X_n + (\|B\|_2^2 + \|R\|_2)I + K_n^{\top}RK_n$. Then in iteration $n$, the Lipschitz bound could be
\begin{align*}
  L_n &\le \left(2 \|R\|_2 + 2\|B\|_2 (\|X_n\|_2^2 + \|R\|_2 + \|B\|_2^2 \right.\\
  &\quad\left. + \|K_n^{\top} R K_n\|_2)\right) \frac{\Tr(X_n)\lambda_{\max}(X^0)}{\lambda_{\min}(Q)} 
\end{align*}
$L_n$ will be dependent on the information of current iteration, namely $K_n$. But we note $L_n$ will be upper bounded by $L$ in Lemma~\ref{lemma:gradient_lipschitz}, so that the stepsizes will be lower bounded by $1/L$.
  \end{remark}

\section{Numerical Results}
\label{sec:numerical_results}
\subsection{Exponential Stability of Gradient Flow $\dot{K}_t = -\nabla f(K_t)$}
We first demonstrate the exponential stability of the gradient system given by
\begin{align*}
  \dot{K}_t = -\nabla f(K_t).
\end{align*}
We choose a gradient flow over a dynamical system modeled over a path graph on $20$ nodes in which each node has its own input. 
The system is given by,
\begin{align*}
  \dot{x}(t) = A x(t) + Bu(t),
\end{align*}
where $A = M - 2I$ with $M$ being the Metropolis-Hastings weighting matrix~\cite{xiao2004fast}\footnote{We subtract the diagonal entries by $2$ to make $A$ Hurwitz stable.} of the path graph and $B=I$, guaranteeing the controllability of the pair $(A, B)$. We set the cost matrices $Q$ and $R$ to the identity matrix. The initial gain matrix is $K_0 = 0$, which belongs to the set of stabilizing feedback gains, based on the spectral properties of the Metropolis-Hastings weighting matrix.
\begin{figure}[ht]
      \begin{center}
     \includegraphics[width=0.4\textwidth]{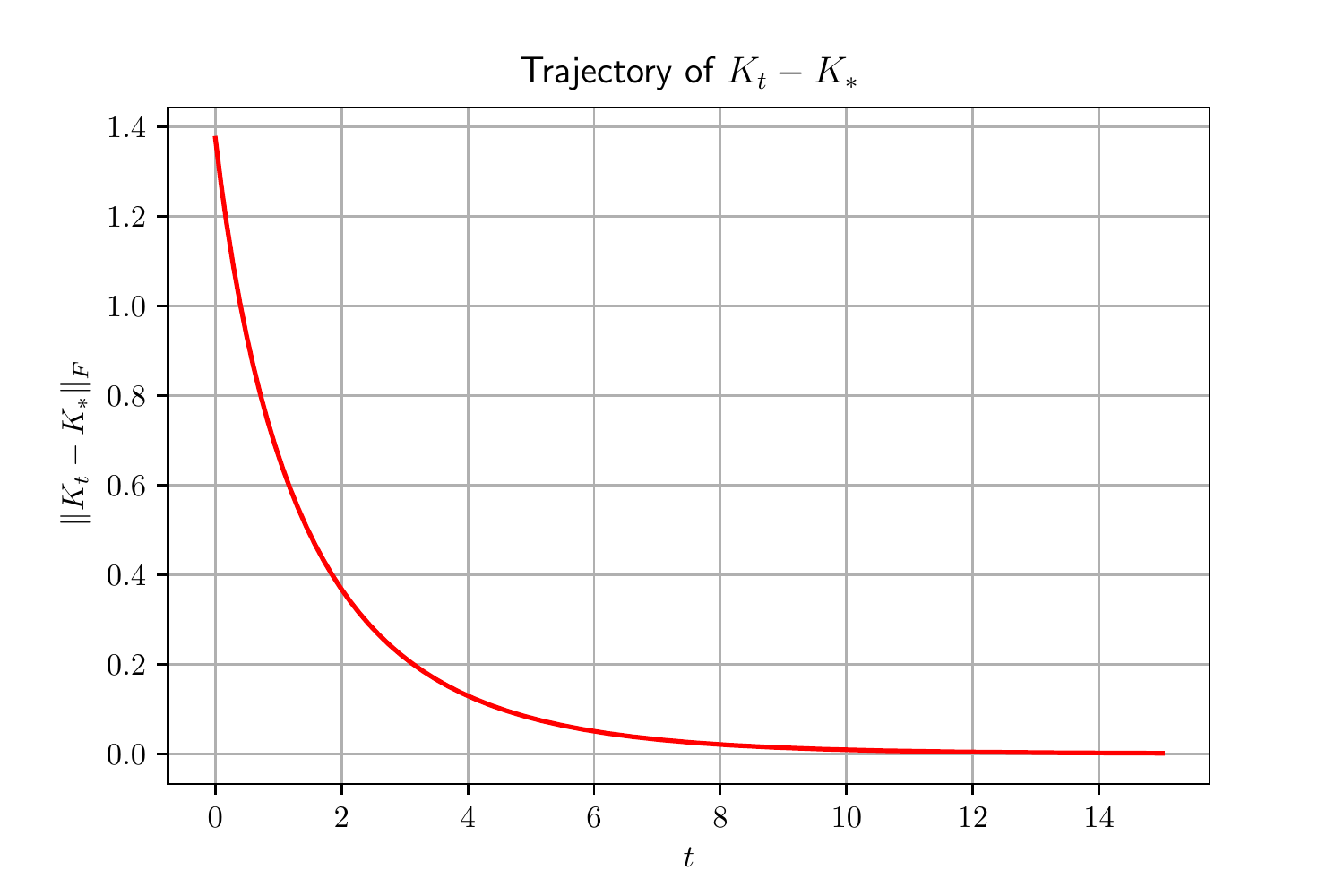}
      \caption{Exponential Stability of Trajectory $K_t$}\label{fig:gf_fig1}
       \end{center}
    \end{figure}
    \begin{figure}[ht]
      \begin{center}
        \includegraphics[width=0.4\textwidth]{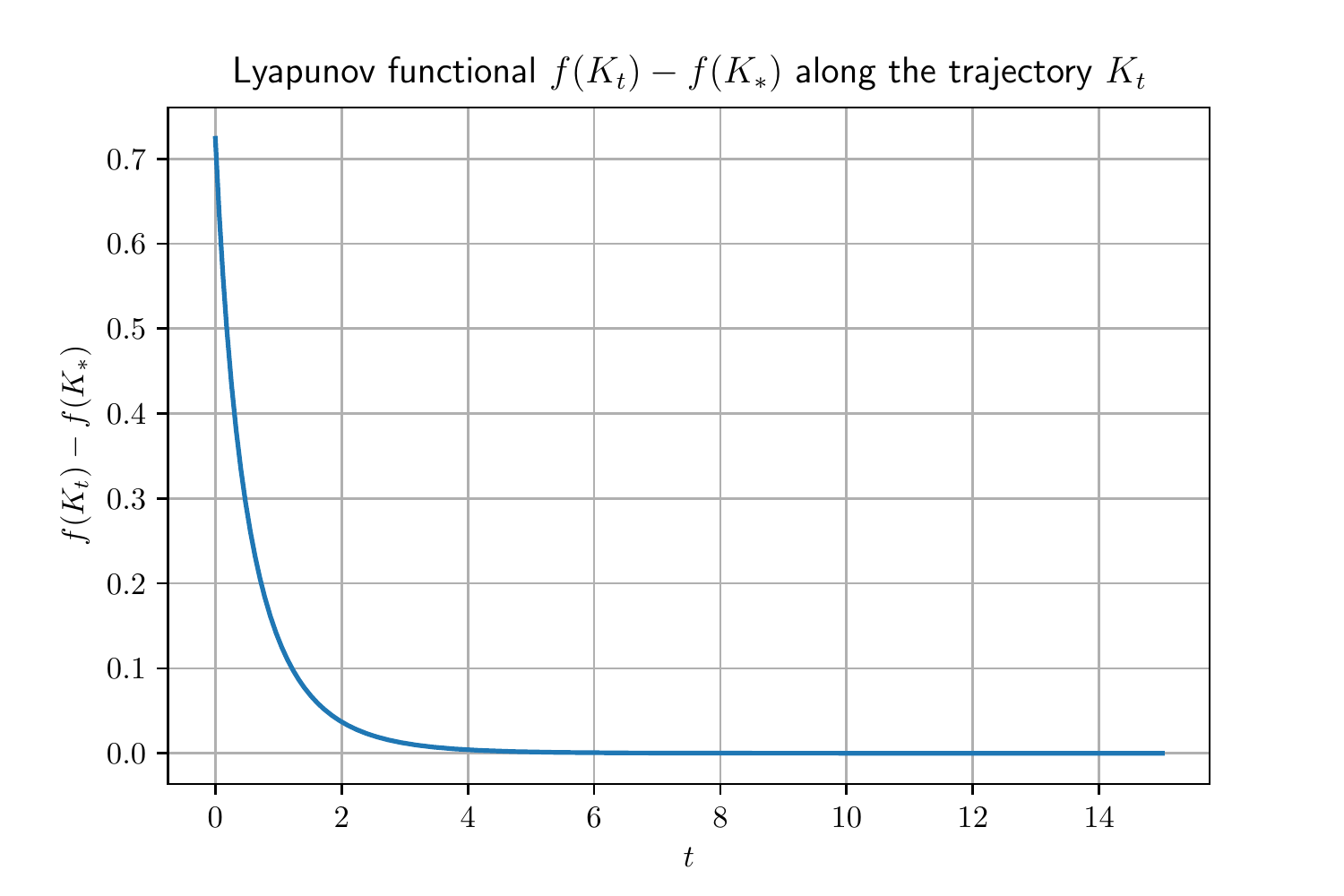}        
        \caption{Exponential Decay of the Lyapunov Functional}\label{fig:df_fig2}
      \end{center}
    \end{figure}
\subsection{Gradient Descent for LQR control}
We demonstrate the proposed discretization procedure for a dynamical system with the same setup above, namely, the dynamical system is modeled over a path graph.\par
As shown in Lemma~\ref{lemma:global_convergence}, Figure~\ref{fig:fig1} demonstrates that the sequence of feedback gains is stabilizing and converges to the global optimal feedback controller. Moreover, Figure~\ref{fig:fig2} shows that the cost function $f(K)$ converges to $f(K^*)$ at a linear rate.
\begin{figure}[ht]
     \begin{center}
     \includegraphics[width=0.43\textwidth]{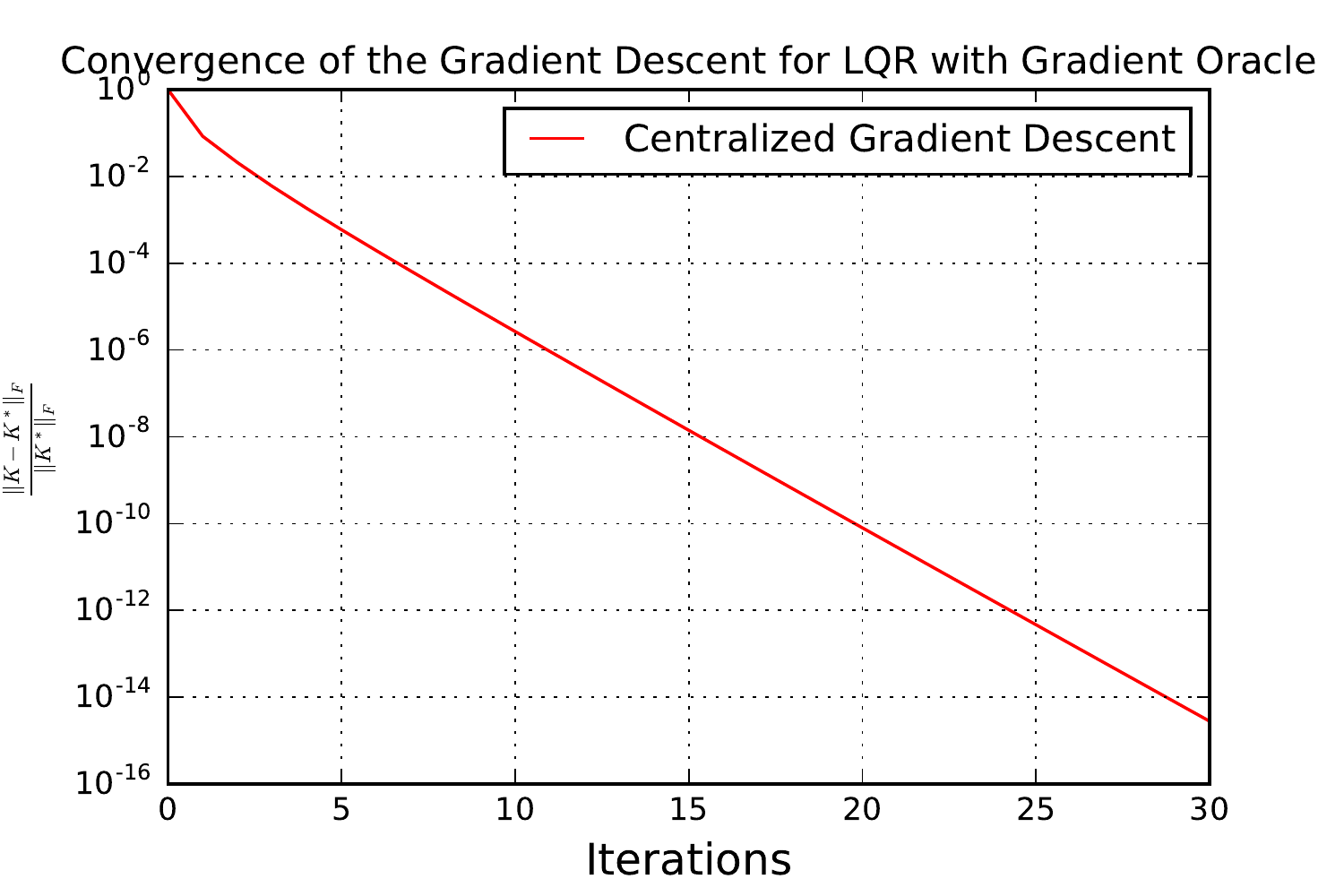}
      \caption{Convergence of the relative error for the feedback gain under gradient descent}\label{fig:fig1}
      \end{center}
    \end{figure}
    \begin{figure}[ht]
     \begin{center}
        \includegraphics[width=0.43\textwidth]{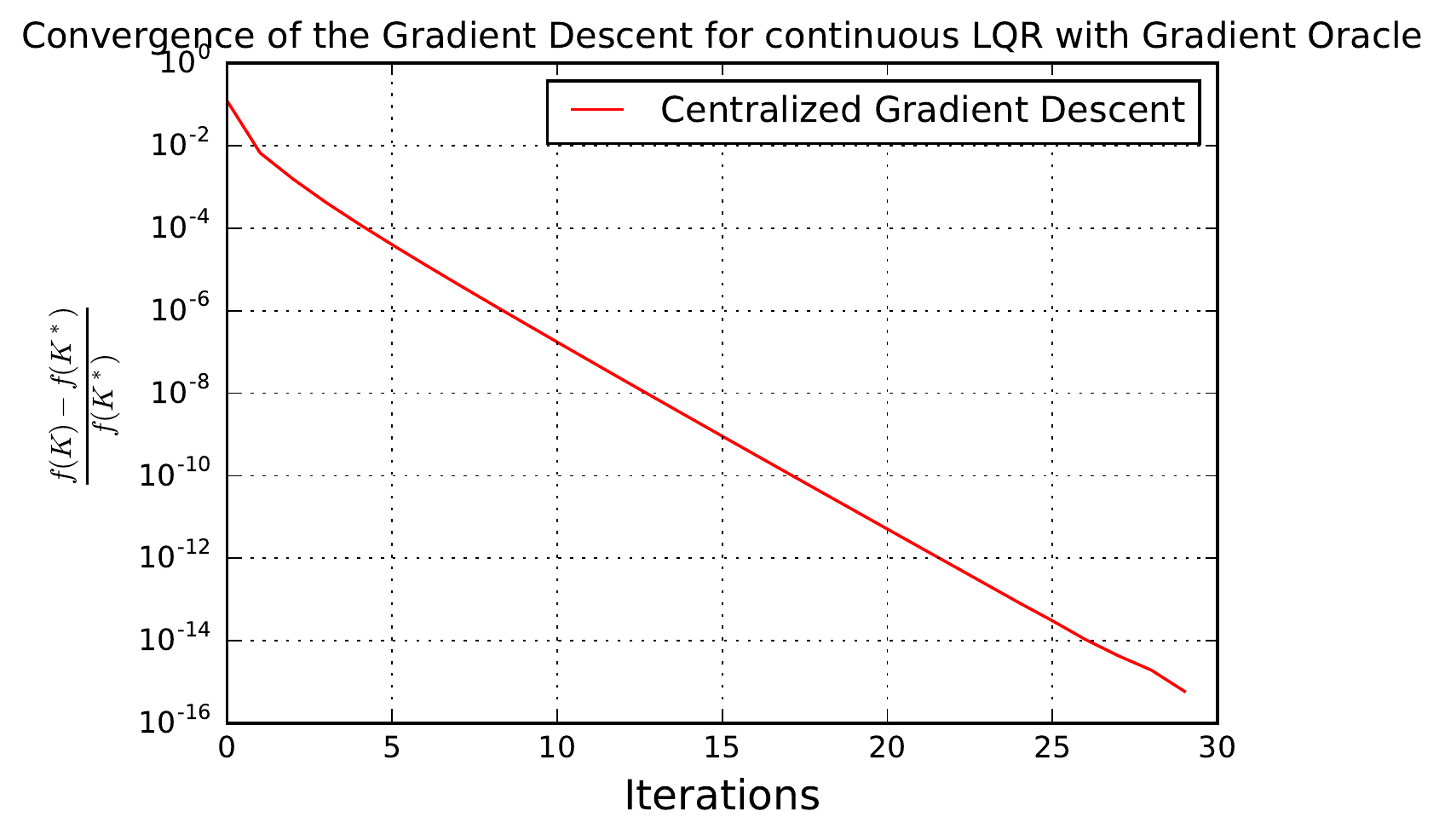}        
        \caption{Convergence of the relative error for the LQR cost function under gradient descent}\label{fig:fig2}
      \end{center}
    \end{figure}
    \subsection{Projected Gradient Descent for LQR control}
    We now demonstrate Projected Gradient Descent for a dynamical system modeled over a $(10, 10)$-lollipop graph\footnote{A lollipop graph consists a complete graph of $10$ nodes and a path graph of $10$ nodes.}. The system parameter $A = M-2I$ is again the Metropolis-Hastings weights matrix subtracted by $2I$ and $B=I$. The initial gain matrix is $K_0 = 0$. In each iteration, the gain matrix is updated by
    \begin{align*}
      K_{n} = P_{\ca U} (K_{n-1} - t \nabla f(K_{n-1})),
    \end{align*}
    where the projection is equivalent to zeroing out the entries that do not correspond to edges of the graph. As shown in Lemma~\ref{lemma:sublinear_pgd}, Figure~\ref{fig:fig1_pgd} demonstrates that the sequence of feedback gains is stabilizing and converges to a first-order stationary point. Moreover, Figure~\ref{fig:fig2_pgd} shows that the cost function $f(K)$ converges. 
\begin{figure}[ht]
     \begin{center}
     \includegraphics[width=0.43\textwidth]{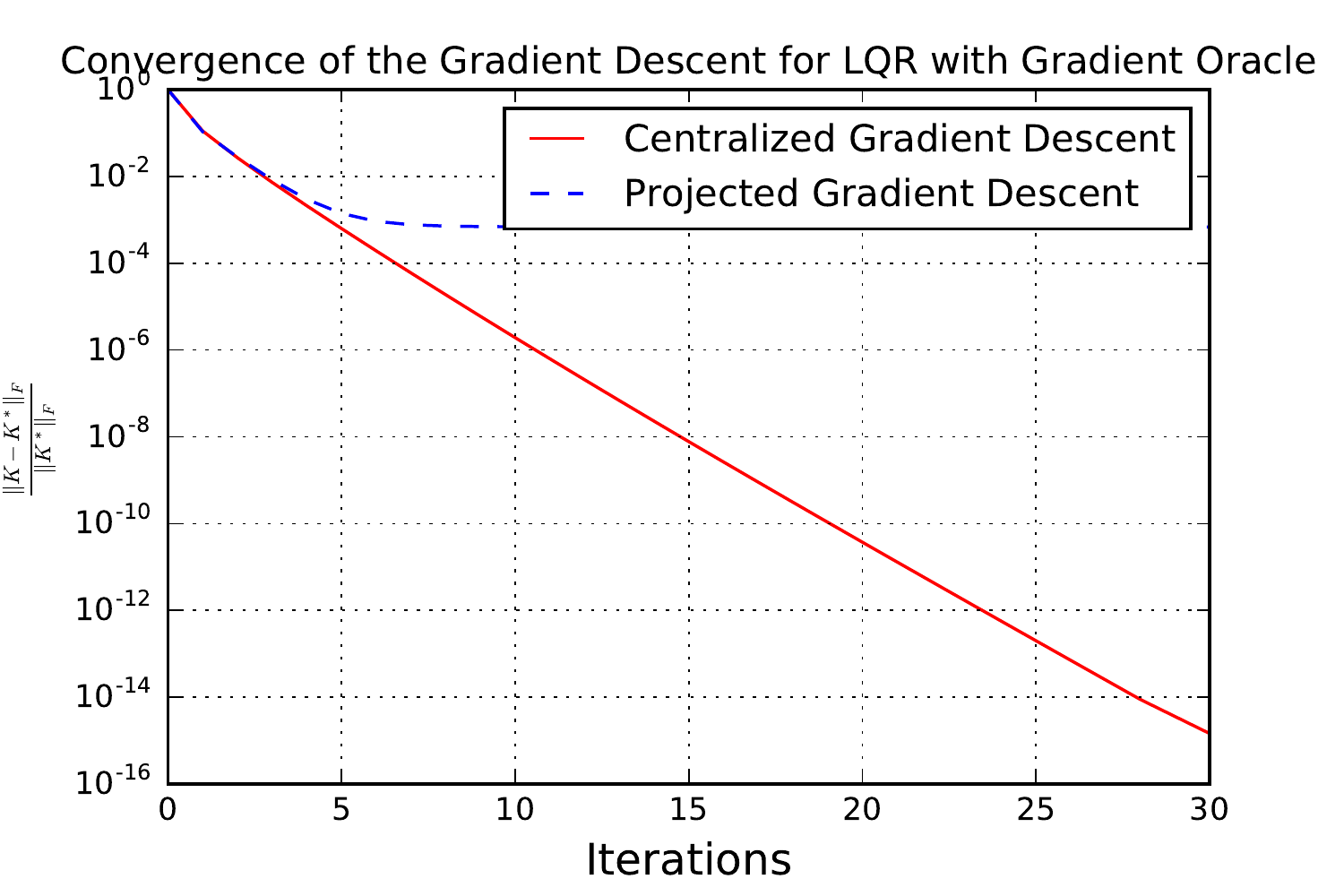}
      \caption{Convergence of the relative error for the feedback gain under centralized gradient descent (red) and projected gradient descent (blue) on a lollipop graph.}\label{fig:fig1_pgd}
      \end{center}
    \end{figure}
    \begin{figure}[ht]
     \begin{center}
        \includegraphics[width=0.43\textwidth]{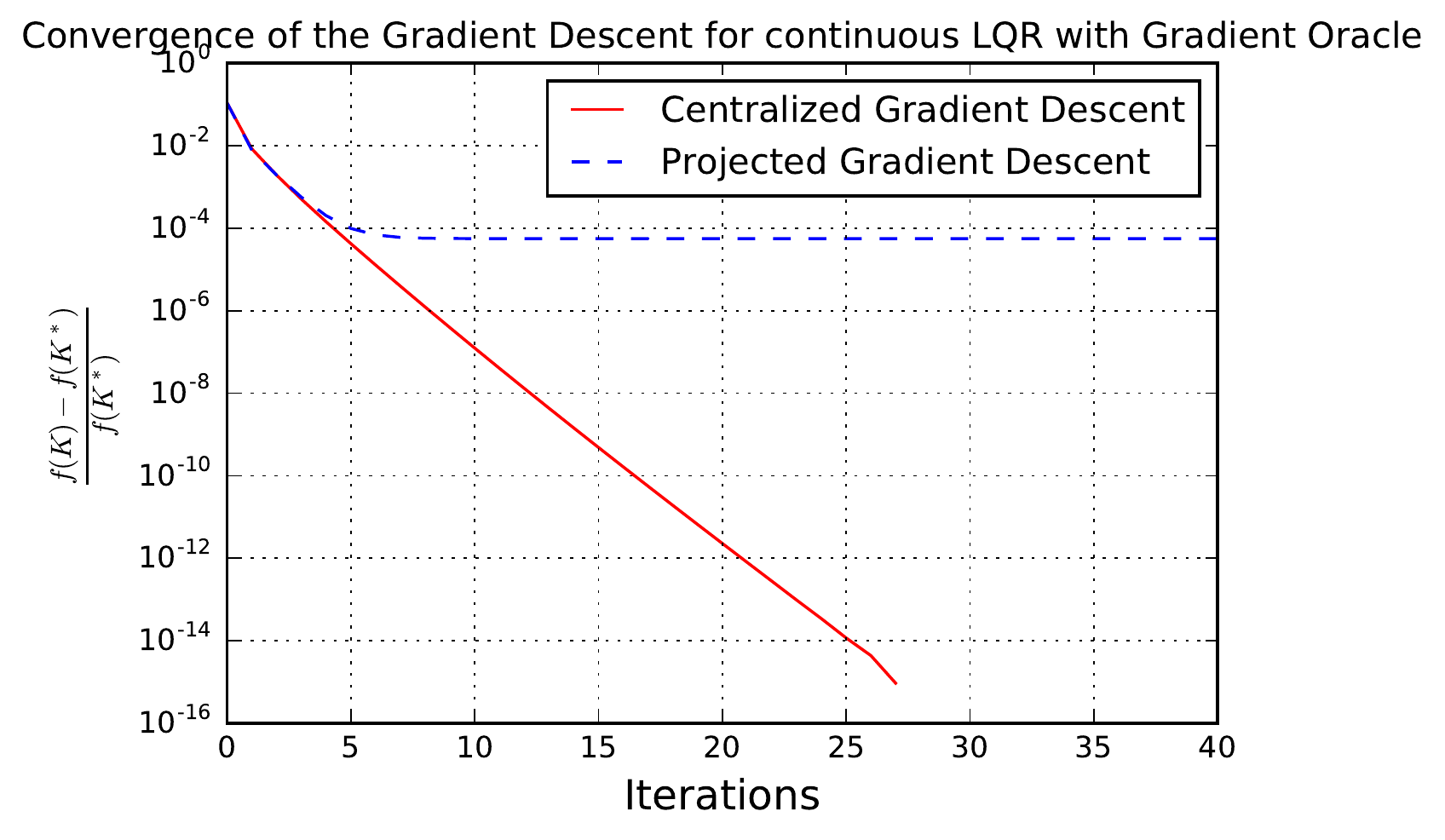}        
        \caption{Convergence of the relative error for the LQR cost function under centralized gradient descent (red) and projected gradient descent (blue) on a lollipop graph.}\label{fig:fig2_pgd}
      \end{center}
    \end{figure}
\section{Concluding Remarks}
\label{sec:discussion}
The direct policy update for LQR as presented in this work has been inspired by recent success
of model-free approaches to optimal control problems that are data-driven, yet can enjoy
certain convergence properties. The work has been influenced by recent contributions
on direct policy updates for discrete-time LQR, and has aimed to highlight some of the differences as well
similarities between the continuous and discrete time settings. Along the way, we have
clarified some of the analytical intricacies of the LQR formulation, and proposed 
three classes of gradient flows and their discretization for globally solving the LQR.
The utility of direct policy updates for structured control synthesis has also been discussed.

\section{Acknowledgements}
The authors acknowledge their discussions with Maryam Fazel, Sham Kakade, and Rong Ge, exploring connections between control theory and learning. This research was supported by DARPA Lagrange Grant FA8650-18-2-7836.
\begin{appendices}
\section{Bounding $\|Y(\theta)\|_2$ and $\|Y'(\theta)\|_2$ in Lemma~\ref{lemma:gd_function_decrease}}
\label{appendix:bound_Y_prime}
Let us first bound $\|Y(\theta)\|_2$.
\begin{proposition}
  Over the sublevel set $S_{f(K)}$, we have for any $K' \in S_{f(K)}$
  \begin{align*}
    \|Y(K')\| \le \frac{f(K)}{\lambda_1(Q)}.
    \end{align*}
  \end{proposition}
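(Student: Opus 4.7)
The plan is to rewrite $f(K')$ in a ``dual'' form that exposes $Y(K')$ rather than $X(K')$. Concretely, using the integral representation $X(K') = \int_0^\infty e^{A_{K'}^\top t}(K'^\top R K' + Q)e^{A_{K'} t}\,dt$ and the cyclic property of the trace, I would show
\[
f(K') \;=\; \Tr\bigl(X(K')\,{\bf \Sigma}\bigr) \;=\; \Tr\Bigl(\bigl(K'^\top R K' + Q\bigr)\,Y(K')\Bigr).
\]
This identity is the only substantive manipulation in the argument.

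From here the proof is a short chain of matrix inequalities. Since $R \succ 0$, we have $K'^\top R K' \succeq 0$, so $K'^\top R K' + Q \succeq Q \succeq \lambda_1(Q) I$. Combined with $Y(K') \succeq 0$ (a consequence of Proposition~\ref{prop:linalg_facts}(3) applied to the defining Lyapunov equation for $Y(K')$), this yields
\[
f(K') \;=\; \Tr\bigl((K'^\top R K' + Q)\,Y(K')\bigr) \;\ge\; \lambda_1(Q)\,\Tr(Y(K')).
\]
Since $Y(K')$ is positive semidefinite, $\|Y(K')\|_2 \le \Tr(Y(K'))$. Putting these bounds together gives $\|Y(K')\|_2 \le f(K')/\lambda_1(Q)$, and finally $f(K') \le f(K)$ by definition of $S_{f(K)}$ completes the estimate.

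There is no real obstacle here; the only thing to watch is that the dual identity for $f$ requires $K'$ to be Hurwitz-stabilizing so that both $X(K')$ and $Y(K')$ are well-defined solutions to their respective Lyapunov equations, but this is automatic for $K' \in S_{f(K)} \subseteq \ca H$. The estimate is clearly tight in spirit: it simply says that a smaller cost forces a smaller controllability-type Gramian in the direction of $\Sigma$.
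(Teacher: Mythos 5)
Your proof is correct and rests on the same mechanism as the paper's: the trace duality between the two Lyapunov solutions (equivalently, cyclicity of the trace applied to the integral representations), the bound $Q \succeq \lambda_1(Q) I$, and $\|Y\|_2 \le \Tr(Y)$ for positive semidefinite $Y$. The paper merely routes the same inequality through an auxiliary Gramian $Z$ solving $A_{K'}^{\top}Z + ZA_{K'} + I = 0$ and the Lyapunov comparison of Proposition~\ref{prop:linalg_facts}, whereas you state the dual identity $f(K') = \Tr\bigl((K'^{\top}RK' + Q)Y(K')\bigr)$ directly; the content is identical.
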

  \begin{proof}
    Note $\|Y(K') \| \le \Tr(Y')$. But
    \begin{align*}
      \Tr(Y') &= \Tr \left(\int_0^{\infty} e^{A_{K'}t} {\bf \Sigma} e^{A_{K'}^{\top}t} dt \right) = \Tr\left({\bf \Sigma} \int_0^{\infty} e^{A_{K'}^{\top} t} e^{A_{K'}t} dt\right) \eqqcolon \Tr({\bf \Sigma} Z).
      \end{align*}
      Observe $Z$ solves the equation $A_{K'}^{\top} Z + Z A_{K'} + I = 0$; as $I \preceq Q/\lambda_1(Q)$, $Z \preceq X_{K'}$ where $X_{K'}$ solves the Lyapunov matrix equation $A_{K'}^{\top} X_{K'} + X_{K'} A_{K'} + K'^{\top} R K' + Q =0$. So
      \begin{align*}
        \Tr(Y') \le Tr({\bf \Sigma} Z) \le \frac{\Tr({\bf \Sigma} X')} {\lambda_1(Q)} = \frac{f(K')}{\lambda_1(Q)} \le \frac{f(K)}{\lambda_1(Q)}.
        \end{align*}
   \end{proof}
\begin{proposition}
  Suppose that $K_{\theta} = K - 2 \theta MY$. Then for any $\theta$ such that $f(K_{\theta}) \le f(K)$, we have
\begin{align*}
  \|Y'(\theta)\|_2 \le \frac{4f(K)\|BMY\|_2}{\lambda_1(Q)}.
\end{align*}
\end{proposition}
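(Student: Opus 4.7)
The plan is to obtain a Lyapunov equation for $Y'(\theta)$ by differentiating the defining equation for $Y(\theta)$, dominate its (indefinite) forcing term by a scalar multiple of ${\bf \Sigma}$ in Loewner order, and then invoke monotonicity of Lyapunov solutions (item (c) of Proposition~\ref{prop:linalg_facts}) to sandwich $Y'(\theta)$ between signed multiples of $Y(\theta)$.

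Since $A_{K_\theta} = A_K + 2\theta BMY$ with $M$ and $Y$ constants independent of $\theta$, one has $\frac{d}{d\theta} A_{K_\theta} = 2BMY$. Differentiating $A_{K_\theta} Y(\theta) + Y(\theta) A_{K_\theta}^\top + {\bf \Sigma} = 0$ in $\theta$ yields
\begin{equation*}
A_{K_\theta} Y'(\theta) + Y'(\theta) A_{K_\theta}^\top + W(\theta) = 0, \qquad W(\theta) := 2 BMY \cdot Y(\theta) + 2 Y(\theta)(BMY)^\top.
\end{equation*}
The forcing $W(\theta)$ is symmetric but sign-indefinite. Setting $P := (BMY)^\top$ and invoking item (b) of Proposition~\ref{prop:linalg_facts} with $X = Y(\theta) \succ 0$, $M \mapsto P$, $N \mapsto I$, and the free scalar optimized, together with the elementary facts $P^\top Y(\theta) P \preceq \|Y(\theta)\|_2 \|BMY\|_2^2\, I$, $Y(\theta) \preceq \|Y(\theta)\|_2\, I$, and $I \preceq {\bf \Sigma}/\lambda_1({\bf \Sigma})$, one dominates
\begin{equation*}
-c\,{\bf \Sigma} \preceq W(\theta) \preceq c\,{\bf \Sigma}, \qquad c := \frac{4\,\|BMY\|_2\, \|Y(\theta)\|_2}{\lambda_1({\bf \Sigma})}.
\end{equation*}

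Two applications of item (c) of Proposition~\ref{prop:linalg_facts}---comparing the Lyapunov equations for $Y'(\theta)$ and $cY(\theta)$, and then for $-Y'(\theta)$ and $cY(\theta)$---give $-c\,Y(\theta) \preceq Y'(\theta) \preceq c\,Y(\theta)$, so $\|Y'(\theta)\|_2 \le c\,\|Y(\theta)\|_2$. Substituting the preceding proposition's bound $\|Y(\theta)\|_2 \le f(K)/\lambda_1(Q)$ (applicable because the hypothesis $f(K_\theta) \le f(K)$ places $K_\theta \in S_{f(K)}$) then yields an estimate of the form asserted.

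The principal obstacle is the Loewner domination of the indefinite $W(\theta)$ by a scalar multiple of ${\bf \Sigma}$: because $W(\theta)$ is not sign-definite, monotonicity of Lyapunov solutions cannot be applied directly, and one must first break $W(\theta)$ into manageable pieces via the AM--GM-type inequality in Proposition~\ref{prop:linalg_facts} before re-expressing $I$ as a multiple of ${\bf \Sigma}$. Careful grouping of the $\|Y(\theta)\|_2$ and $\lambda_1({\bf \Sigma})^{-1}$ factors in the final substitution then recovers the precise scalar appearing in the statement.
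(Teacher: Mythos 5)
Your proof follows the paper's argument essentially verbatim: differentiate the Lyapunov equation for $Y(\theta)$, dominate the indefinite forcing $2BMY\,Y(\theta)+2Y(\theta)(BMY)^{\top}$ via the AM--GM-type inequality of Proposition~\ref{prop:linalg_facts}(b) with the free scalar optimized, replace $I$ by ${\bf \Sigma}/\lambda_1({\bf \Sigma})$, and apply Lyapunov monotonicity on both sides to sandwich $Y'(\theta)$ between $\pm c\,Y(\theta)$ before invoking $\|Y(\theta)\|_2\le f(K)/\lambda_1(Q)$. The only caveat is the final bookkeeping: a literal reading of your chain leaves an extra factor of $\|Y(\theta)\|_2$ (and a $\lambda_1({\bf \Sigma})^{-1}$ absent from the stated bound), but the paper's own proof has exactly the same discrepancy with the proposition as stated, so this is not a gap attributable to your argument.
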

\begin{proof}
  Putting $A_{\theta}= A-BK + 2 \theta BMY$, $Y_{\theta}$ solves
\begin{align*}
  0 = A_{\theta} Y_{\theta} + Y_{\theta} A_{\theta}^{\top} + {\bf \Sigma}.
\end{align*}
Therefore, the derivative $Y_{\theta}'$ solves
\begin{align*}
   -A_{\theta} Y_{\theta}' - Y_{\theta}' A_{\theta}^{\top} &= 2BMY Y_{\theta} + Y_{\theta} (2BMY)^{\top} \\
  &\preceq \tau 4BMY (BMY)^{\top} + \frac{1}{\tau}  Y_{\theta}  \\
                                                           &\preceq (4 \tau\|BMY\|_2^2  + \frac{1}{\tau} \lambda_n(Y_{\theta})) I.
\end{align*}
This upper bound can be minimized by taking $\tau = \sqrt{\frac{ 1}{4\|BMY\|_2^2}}$. It thus follows that,
\begin{align*}
  -A_{\theta} Y_{\theta}' - Y_{\theta} A_{\theta}^{\top} &\le 4 \|BMY\|_2 \lambda_n(Y_{\theta})I.
  \end{align*}
  By Proposition~\ref{prop:linalg_facts}, we have
  \begin{align*}
    Y_{\theta}' \preceq \frac{4\|BMY\|_2 \lambda_n(Y_{\theta})}{\lambda_1({\bf \Sigma})} Y.
    \end{align*}
  Similarly we can prove
\begin{align*}
  Y_{\theta}' \succeq - \frac{4\|BMY\|_2\lambda_n(Y_{\theta})}{\lambda_1({\bf \Sigma})} Y.
  \end{align*}
  So we have
  \begin{align*}
    \|Y_\theta'\|_2 \le \frac{4\|BMY\|_2 f(K)}{\lambda_1(Q)\lambda_1({\bf \Sigma})}.
    \end{align*}
\end{proof}
\section{Lower Bounding Stepsize $\eta_j$ in Gradient Descent}
\label{appendix:upperbound_stepsize}
We first bound the constant $b_j, c_j$ defined in Lemma~\ref{lemma:gd_function_decrease}.
\begin{proposition}
  \label{prop:appendix_bound_b_c}
  Over the sublevel set $S_{f(K_0)}$,
  \begin{align*}
    b_j &\le \frac{ \lambda_n(R) f(K_0) + \|B\|^2 \frac{(f(K_0))^2}{\lambda_1({\bf \Sigma})} + 2 \frac{(f(K_0))^2 \|B\|_2 \sqrt{(\lambda_n(R)+\|B\|_2^2 \frac{f(K_0)}{\lambda_1({\bf \Sigma})}) f(K_0)}}{\lambda_1(Q)}}{\lambda_1(Q)}, \\
    c_j &\le \frac{ 
          2(\lambda_n(R) + \|B\|^2 \frac{f(K_0)}{\lambda_1({\bf \Sigma})})
          \|B\|_2 f(K_0)  \sqrt{(\lambda_n(R)+\|B\|_2^2 \frac{f(K_0)}{\lambda_1({\bf \Sigma})}}
          \frac{ f(K_0)}{\lambda_1(Q)}}
{\lambda_1(Q)}.
  \end{align*}
\end{proposition}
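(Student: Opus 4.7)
Since every $K_j$ in the sublevel set satisfies $f(K_j)\le f(K_0)$, the first term of $b_j$ is immediately bounded by $\lambda_n(R)f(K_0)/\lambda_1(Q)$, so the real work is to control the factor $\|B{\bf N}_j Y_j\|_2$ that appears in both $b_j$ and $c_j$. I will split this factor using submultiplicativity, $\|B{\bf N}_j Y_j\|_2 \le \|B\|_2\,\|{\bf N}_j\|_F\,\|Y_j\|_2$, and bound the three factors separately; the resulting combination then plugs directly into the definitions of $b_j$ and $c_j$ from Lemma~\ref{lemma:gd_function_decrease}.

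The first two factors are straightforward. The bound $\|Y_j\|_2 \le f(K_0)/\lambda_1(Q)$ has already been established in the first proposition of Appendix~\ref{appendix:bound_Y_prime}. Analogously, swapping the roles of the forward and adjoint Lyapunov equations (i.e.\ exchanging $(X,K^\top RK+Q)$ with $(Y,{\bf \Sigma})$), the same argument yields $\|X_j\|_2 \le \Tr(X_j) \le f(K_j)/\lambda_1({\bf \Sigma}) \le f(K_0)/\lambda_1({\bf \Sigma})$, since $I \preceq {\bf \Sigma}/\lambda_1({\bf \Sigma})$ allows one to dominate the auxiliary Gramian $\int_0^\infty e^{A_{K_j}t}e^{A_{K_j}^\top t}\,dt$ by $Y_j/\lambda_1({\bf \Sigma})$.

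The remaining and main technical step is to bound $\|{\bf N}_j\|_F^2$ by $\bigl(\lambda_n(R)+\|B\|_2^2 f(K_0)/\lambda_1({\bf \Sigma})\bigr)f(K_0)$. The natural route is to complete the square in the Lyapunov equation for $X_j$: substituting $K_j = R^{-1}(B^\top X_j + {\bf N}_j/\cdot)$ and rearranging yields the identity
\begin{equation*}
{\bf N}_j^\top R^{-1}{\bf N}_j \;=\; X_j B R^{-1} B^\top X_j - A^\top X_j - X_j A - Q,
\end{equation*}
which relates $\|{\bf N}_j\|$ to $\|X_j\|$. Using $N_j^\top N_j \preceq \lambda_n(R) N_j^\top R^{-1} N_j$ and inserting the trace with respect to $Y_j$ (so that $-\Tr((A^\top X_j + X_j A)Y_j)$ can be rewritten via the Lyapunov equation for $Y_j$ as $f(K_j)-2\Tr(X_j BK_jY_j)$) reduces everything to bounds already available, namely $\|X_j\|_2 \le f(K_0)/\lambda_1({\bf \Sigma})$ and $f(K_j)\le f(K_0)$. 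The $\lambda_n(R)f(K_0)$ summand in the desired bound tracks the contribution of the $Q$-term together with the $R$-weighted term in $N_j^\top R^{-1} N_j$, while the $\|B\|_2^2 f(K_0)^2/\lambda_1({\bf \Sigma})$ summand tracks the $X_j B R^{-1} B^\top X_j$ term through $\Tr(X_j^2)\le \|X_j\|_2 \Tr(X_j)$.

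Once $\|{\bf N}_j\|_F$ is controlled, combining the three norm bounds inside the expressions for $b_j$ and $c_j$ and carefully accounting for the factors of $f(K_0)$ and $\lambda_1(Q)$ in the denominators gives the stated estimates. The main obstacle is the $\|{\bf N}_j\|_F$ estimate itself, since the Riccati-like identity does not provide a sign-definite bound on $-A^\top X_j - X_j A$; the key to closing it is to exploit the trace-pairing with $Y_j$, which converts the indefinite drift term into the admissible quantity $f(K_j)$ via the dual Lyapunov equation, thereby absorbing $A$ into the cost rather than introducing $\|A\|_F$ into the final constant.
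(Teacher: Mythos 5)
Your bounds $\|Y_j\|_2 \le \Tr(Y_j) \le f(K_0)/\lambda_1(Q)$ and $\|X_j\|_2 \le \Tr(X_j) \le f(K_0)/\lambda_1({\bf \Sigma})$ are correct and are exactly the ingredients the paper uses. The gap is in your main technical step, the estimate $\|{\bf N}_j\|_F^2 \le (\lambda_n(R)+\|B\|_2^2 f(K_0)/\lambda_1({\bf \Sigma}))f(K_0)$ via the completed-square identity ${\bf N}_j^\top R^{-1}{\bf N}_j = X_jBR^{-1}B^\top X_j - A^\top X_j - X_jA - Q$. Pairing this identity with $Y_j$ in the trace does not close the argument, for two reasons. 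First, the pairing only controls $\Tr({\bf N}_j^\top R^{-1}{\bf N}_j Y_j)$; to recover $\Tr({\bf N}_j^\top{\bf N}_j)$ you must divide by $\lambda_1(Y_j)$, a quantity that appears nowhere in the target constants and is not bounded below by $1$ in general. Second, and more fundamentally, the rewriting $-\Tr((A^\top X_j+X_jA)Y_j)= f(K_j)-2\Tr(X_jBK_jY_j)$ leaves the term $-2\Tr(X_jBK_jY_j)$ uncontrolled: if you eliminate $K_j$ by substituting $BK_j = BR^{-1}({\bf N}_j + B^\top X_j)$, the entire computation collapses to the tautology $\Tr((Q+K_j^\top RK_j)Y_j)=f(K_j)$ and yields no information about ${\bf N}_j$, while bounding it any other way drags $\|K_j\|$ or $\|A\|$ into the constant, neither of which appears in the statement.

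The paper obtains the ${\bf N}_j$ estimate by a different mechanism: the one-step decrease of the natural gradient update. Taking a single comparison step $K' = K_j - \frac{1}{\lambda_n(R)}{\bf N}_j$ and using Lemma~\ref{lemma:ngd_value_matrix_difference} (equivalently the difference identity~\eqref{eq:value_difference}) gives $f(K_j)-f(K')\ge \frac{1}{\lambda_n(R)}\Tr({\bf N}_j^\top {\bf N}_jY_{K'})$, whence $\Tr({\bf N}_j^\top{\bf N}_jY_{K'})\le \lambda_n(R)f(K_j) \le \lambda_n(R)f(K_0)$; this is what the paper cites as ``the proof in Theorem~\ref{thrm:ngd_convergence}.'' If you want to repair your write-up, this detour through a nearby point $K'$ is the missing idea. (You would still need to address the $\lambda_1(Y_{K'})$ factor, and the appearance of $\lambda_n(R)+\|B\|_2^2f(K_0)/\lambda_1({\bf \Sigma})$ --- an upper bound for $\lambda_n(R+B^\top X_jB)$ --- inside the square root of the stated constants, which the paper's three-line proof also leaves implicit; but those are defects shared with the statement itself, not of the route.)
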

\begin{proof}
  To derive an upper bound, we need to upper bound $\lambda_n(R+B^{\top} X_j B), \|Y_j\|_2, \|M_{K_j}\|_2$. Note by virtually the same argument in Proposition~\ref{prop:gradient_dominant_bound}, we can upper bounded $\|Y_j\|_2$ by
\begin{align*}
  \|Y_j\|_2 \le \Tr(Y_j) \le \frac{f(K_0)}{\lambda_1(Q)}.
\end{align*}
By the proof in Theorem~\ref{thrm:ngd_convergence}, we can upper bound $\|M_{K_j}\|_2$ by
\begin{align*}
  \|M_{K_j}\|_2^2 \le \Tr(M_{K_j}^{\top} M_{K_j}) \le \lambda_n(R) f(K_0).
\end{align*}
  The desired bounds can then be acquired by combining these bounds and triangular inequality.
\end{proof}
\begin{proposition}
  \label{prop:stepsize_bound}
  The stepsize in Theorem~\ref{thrm:gd_linear} is lower bounded away from $0$.
\end{proposition}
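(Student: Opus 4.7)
My plan is to show $\eta_j$ is bounded below by exhibiting a uniform upper bound on $d_j = \max(b_j, c_j)$ that depends only on the system data $(A,B,Q,R,{\bf \Sigma})$ and the initial iterate $K_0$, and then invoking monotonicity of the stepsize formula in $d_j$.

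First I would observe that the map
\[
\phi(d) \coloneqq \sqrt{\tfrac{1}{3d} + \tfrac{1}{9}} - \tfrac{1}{3}
\]
is strictly decreasing on $(0,\infty)$ and strictly positive, since $\tfrac{1}{3d} + \tfrac{1}{9} > \tfrac{1}{9}$ for every $d>0$. Thus, to lower bound $\eta_j = \phi(d_j)$ it suffices to upper bound $d_j$ uniformly in $j$.

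Next, I would appeal to Proposition~\ref{prop:appendix_bound_b_c}, which already bounds $b_j$ and $c_j$ over the sublevel set $S_{f(K_0)}$ by an expression involving only $\|B\|_2$, $\lambda_n(R)$, $\lambda_1(Q)$, $\lambda_1({\bf \Sigma})$, and $f(K_0)$. The key point is monotonicity of the sequence of function values: by Lemma~\ref{lemma:gd_function_decrease} (with the chosen stepsize rule), $f(K_{j+1}) \le f(K_j)$ for every $j$, so the entire trajectory stays inside $S_{f(K_0)}$, and hence the bounds of Proposition~\ref{prop:appendix_bound_b_c} apply to every iterate. Letting $D$ denote the larger of the two right-hand sides in that proposition, we get $d_j \le D$ for all $j$.

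Combining these two observations, $\eta_j = \phi(d_j) \ge \phi(D) = \sqrt{\tfrac{1}{3D} + \tfrac{1}{9}} - \tfrac{1}{3} > 0$, which is the desired uniform lower bound. The only nontrivial ingredient is already packaged into Proposition~\ref{prop:appendix_bound_b_c}; the main care required here is just to verify that the sublevel-set containment argument actually applies inductively, i.e., that at each step the hypothesis $K_j \in S_{f(K_0)}$ holds, but that is immediate from the decreasing value property established in Lemma~\ref{lemma:gd_function_decrease}. No further technical obstacle is expected.
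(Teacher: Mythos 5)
Your proposal is correct and follows essentially the same route as the paper: bound $d_j = \max(b_j, c_j)$ uniformly over the sublevel set $S_{f(K_0)}$ via Proposition~\ref{prop:appendix_bound_b_c} and then use monotonicity of the stepsize formula in $d_j$. Your additional remark that the iterates remain in $S_{f(K_0)}$ by the decrease property is a point the paper leaves implicit, but it is the same argument.
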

\begin{proof}
  Putting $d_j = \max\{b_j, c_j\}$, then $d_j \le \delta$ over the sublevel set $S_{f(K_0)}$ where $\delta > 0$ is the maximum of the two upper bounds in Proposition~\ref{prop:appendix_bound_b_c}. Now $\eta_j$ can be seen
\begin{align*}
  \eta_j = \sqrt{\frac{1}{d_j}+\frac{1}{9}} - \frac{1}{3} \ge \sqrt{\frac{1}{\delta} + \frac{1}{9}} -\frac{1}{3}> 0.
\end{align*}
\end{proof}
\end{appendices}
\bibliographystyle{ieeetran}
\bibliography{ref}

\begin{thebibliography}{10}
\providecommand{\url}[1]{#1}
\csname url@samestyle\endcsname
\providecommand{\newblock}{\relax}
\providecommand{\bibinfo}[2]{#2}
\providecommand{\BIBentrySTDinterwordspacing}{\spaceskip=0pt\relax}
\providecommand{\BIBentryALTinterwordstretchfactor}{4}
\providecommand{\BIBentryALTinterwordspacing}{\spaceskip=\fontdimen2\font plus
\BIBentryALTinterwordstretchfactor\fontdimen3\font minus
  \fontdimen4\font\relax}
\providecommand{\BIBforeignlanguage}[2]{{%
\expandafter\ifx\csname l@#1\endcsname\relax
\typeout{** WARNING: IEEEtran.bst: No hyphenation pattern has been}%
\typeout{** loaded for the language `#1'. Using the pattern for}%
\typeout{** the default language instead.}%
\else
\language=\csname l@#1\endcsname
\fi
#2}}
\providecommand{\BIBdecl}{\relax}
\BIBdecl

\bibitem{bu2019lqr}
J.~Bu, A.~Mesbahi, M.~Fazel, and M.~Mesbahi, ``Lqr through the lens of first
  order methods: Discrete-time case,'' \emph{arXiv preprint arXiv:1907.08921},
  2019.

\bibitem{Kalman_BSMM_1960}
R.~E. Kalman, ``Contributions to the theory of optimal control,''
  \emph{Boletinde la Sociedad Matematica Mexicana}, vol.~5, no.~1, pp.
  102--119, 1960.

\bibitem{Anderson_book_1990}
B.~D.~O. Anderson and J.~B. Moore, \emph{Optimal {C}ontrol: {L}inear
  {Q}uadratic {M}ethods}.\hskip 1em plus 0.5em minus 0.4em\relax Upper Saddle
  River,{ NJ}: Prentice-Hall, Inc., 1990.

\bibitem{Hewer1971TAC}
G.~{Hewer}, ``An iterative technique for the computation of the steady state
  gains for the discrete optimal regulator,'' \emph{IEEE Transactions on
  Automatic Control}, vol.~16, no.~4, pp. 382--384, 1971.

\bibitem{Lancaster1995algebraic}
P.~Lancaster and L.~Rodman, \emph{Algebraic {R}iccati {E}quations}.\hskip 1em
  plus 0.5em minus 0.4em\relax New York, {NY}: Oxford University Press, 1995.

\bibitem{Balakrishnan2003TAC}
V.~{Balakrishnan} and L.~{Vandenberghe}, ``Semidefinite programming duality and
  linear time-invariant systems,'' \emph{IEEE Transactions on Automatic
  Control}, vol.~48, no.~1, pp. 30--41, 2003.

\bibitem{Jiang2012Auto}
Y.~Jiang and Z.-P. Jiang, ``Computational adaptive optimal control for
  continuous-time linear systems with completely unknown dynamics,''
  \emph{Automatica}, vol.~48, no.~10, pp. 2699--2704, 2012.

\bibitem{Young2012Auto}
J.~Y. Lee, J.~B. Park, and Y.~H. Choi, ``Integral {Q}-learning and explorized
  policy iteration for adaptive optimal control of continuous-time linear
  systems,'' \emph{Automatica}, vol.~48, no.~11, pp. 2850--2859, 2012.

\bibitem{Lee2019TAC}
D.~{Lee} and J.~{Hu}, ``Primal-dual {Q}-learning framework for {LQR} design,''
  \emph{IEEE Transactions on Automatic Control}, 2019.

\bibitem{Bradtke1994ACC}
S.~J. {Bradtke}, B.~E. {Ydstie}, and A.~G. {Barto}, ``Adaptive linear quadratic
  control using policy iteration,'' in \emph{Proceedings of 1994 American
  Control Conference}, vol.~3, 1994, pp. 3475--3479.

\bibitem{Lewis2009CSM}
F.~L. {Lewis} and D.~{Vrabie}, ``Reinforcement learning and adaptive dynamic
  programming for feedback control,'' \emph{IEEE Circuits and Systems
  Magazine}, vol.~9, no.~3, pp. 32--50, 2009.

\bibitem{lewis2012reinforcement}
F.~L. Lewis, D.~Vrabie, and K.~G. Vamvoudakis, ``Reinforcement learning and
  feedback control: using natural decision methods to design optimal adaptive
  controllers,'' \emph{IEEE Control Systems}, vol.~32, no.~6, pp. 76--105,
  2012.

\bibitem{Chun2016IJC}
T.~Y. Chun, J.~Y. Lee, J.~B. Park, and Y.~H. Choi, ``Stability and monotone
  convergence of generalised policy iteration for discrete-time linear
  quadratic regulations,'' \emph{International Journal of Control}, vol.~89,
  no.~3, pp. 437--450, 2016.

\bibitem{wenk1980parameter}
C.~Wenk and C.~Knapp, ``Parameter optimization in linear systems with
  arbitrarily constrained controller structure,'' \emph{IEEE Transactions on
  Automatic Control}, vol.~25, no.~3, pp. 496--500, 1980.

\bibitem{jilg2013optimized}
M.~Jilg and O.~Stursberg, ``Optimized distributed control and topology design
  for hierarchically interconnected systems,'' in \emph{Control Conference
  (ECC), 2013 European}.\hskip 1em plus 0.5em minus 0.4em\relax IEEE, 2013, pp.
  4340--4346.

\bibitem{maartensson2009gradient}
K.~M{\aa}rtensson and A.~Rantzer, ``Gradient methods for iterative distributed
  control synthesis,'' in \emph{Decision and Control, 2009 held jointly with
  the 2009 28th Chinese Control Conference. CDC/CCC 2009. Proceedings of the
  48th IEEE Conference on}.\hskip 1em plus 0.5em minus 0.4em\relax IEEE, 2009,
  pp. 549--554.

\bibitem{fazel2018global}
M.~Fazel, R.~Ge, S.~Kakade, and M.~Mesbahi, ``Global convergence of policy
  gradient methods for the linear quadratic regulator,'' in \emph{Proceedings
  of the 35th International Conference on Machine Learning}, 2018, pp.
  1467--1476.

\bibitem{mart2012phd}
K.~M{\aa}rtensson, ``Gradient methods for large-scale and distributed linear
  quadratic control,'' Ph.D. dissertation, Department of Automatic Control,
  Lund University, Sweden, 2012.

\bibitem{polyak1963gradient}
B.~T. Polyak, ``Gradient methods for the minimisation of functionals,''
  \emph{USSR Computational Mathematics and Mathematical Physics}, vol.~3,
  no.~4, pp. 864--878, 1963.

\bibitem{sontag2013mathematical}
E.~D. Sontag, \emph{Mathematical Control Theory: Deterministic Finite
  Dimensional Systems}.\hskip 1em plus 0.5em minus 0.4em\relax Springer Science
  \& Business Media, 2013, vol.~6.

\bibitem{horn2012matrix}
R.~A. Horn and C.~R. Johnson, \emph{Matrix {A}nalysis}, 2nd~ed.\hskip 1em plus
  0.5em minus 0.4em\relax New York, {NY}: Cambridge University Press, 2012.

\bibitem{bu2019lqr_calculus}
J.~Bu, A.~Mesbahi, and M.~Mesbahi, ``Lqr calculus,'' preprint.

\bibitem{bu2019topological_mimo}
------, ``On topological and metrical properties of stabilizing feedback gains:
  the mimo case,'' \emph{arXiv preprint arXiv:1904.02737}, 2019.

\bibitem{levine1970determination}
W.~Levine and M.~Athans, ``On the determination of the optimal constant output
  feedback gains for linear multivariable systems,'' \emph{IEEE Transactions on
  Automatic control}, vol.~15, no.~1, pp. 44--48, 1970.

\bibitem{knapp1972optimal}
C.~Knapp and S.~Basuthakur, ``On optimal output feedback,'' \emph{IEEE
  Transactions on Automatic Control}, vol.~17, no.~6, pp. 823--825, 1972.

\bibitem{levine1971optimal}
W.~Levine, T.~Johnson, and M.~Athans, ``Optimal limited state variable feedback
  controllers for linear systems,'' \emph{IEEE Transactions on Automatic
  Control}, vol.~16, no.~6, pp. 785--793, 1971.

\bibitem{kleinman1968design}
D.~Kleinman and M.~Athans, ``The design of suboptimal linear time-varying
  systems,'' \emph{IEEE Transactions on Automatic Control}, vol.~13, no.~2, pp.
  150--159, 1968.

\bibitem{helmke2012optimization}
U.~Helmke and J.~B. Moore, \emph{Optimization and dynamical systems}.\hskip 1em
  plus 0.5em minus 0.4em\relax Springer Science \& Business Media, 2012.

\bibitem{mori1988comments}
T.~Mori, ``Comments on" {A} matrix inequality associated with bounds on
  solutions of algebraic {R}iccati and {L}yapunov equation" by {J.M.} {S}aniuk
  and {I.B.} {R}hodes,'' \emph{IEEE Transactions on Automatic Control},
  vol.~33, no.~11, p. 1088, 1988.

\bibitem{tu2017differential}
L.~W. Tu, \emph{Differential geometry: connections, curvature, and
  characteristic classes}.\hskip 1em plus 0.5em minus 0.4em\relax Springer,
  2017, vol. 275.

\bibitem{kleinman1968iterative}
D.~Kleinman, ``On an iterative technique for riccati equation computations,''
  \emph{IEEE Transactions on Automatic Control}, vol.~13, no.~1, pp. 114--115,
  1968.

\bibitem{nesterov2013introductory}
Y.~Nesterov, \emph{Introductory {L}ectures on {C}onvex {O}ptimization: A
  {B}asic {C}ourse}.\hskip 1em plus 0.5em minus 0.4em\relax Springer Science \&
  Business Media, 2004.

\bibitem{xiao2004fast}
L.~Xiao and S.~Boyd, ``Fast linear iterations for distributed averaging,''
  \emph{Systems \& Control Letters}, vol.~53, no.~1, pp. 65--78, 2004.

\end{thebibliography}
\end{document}